\providecommand*{\nat}[0]{\ensuremath{\mathbb{N}}}
\providecommand*{\seq}[3]{\ensuremath{#1_{#2}, \dotsc, #1_{#3}}}
\providecommand*{\abs}[1]{\ensuremath{\lvert #1 \rvert}}
\DeclareMathOperator{\pos}{pos}
\DeclareMathOperator{\Reg}{Reg}
\DeclareMathOperator{\var}{var}
\DeclareMathOperator{\lca}{lcp}
\DeclareMathOperator{\height}{ht}
\DeclareMathOperator{\SSP}{SP}
\DeclareMathOperator{\SP}{P}
\DeclareMathOperator{\dep}{dep}
\DeclareMathOperator{\link}{links}
\DeclareMathOperator{\rk}{rk}
\DeclareMathOperator{\id}{id}
\DeclareMathOperator{\used}{used}
\DeclareMathOperator{\branch}{br}
\newtheorem{theorem}{Theorem}
\newtheorem{lemma}[theorem]{Lemma}
\newtheorem{corollary}[theorem]{Corollary}
\newtheorem{example}[theorem]{Example}
\newdefinition{definition}[theorem]{Definition}
\newproof{proof}{Proof}
\title{Composition Closure of 
  Linear Extended Top-down Tree Transducers}
\author[szeged]{Zolt\'an F\"ul\"op\fnref{zf}}
\ead{fulop@inf.u-szeged.hu}
\author[stuttgart]{Andreas Maletti\corref{cor}\fnref{am}}
\ead{maletti@ims.uni-stuttgart.de}
\address[szeged]{Department of Foundations of Computer Science,
  University of Szeged \\ \'Arp\'ad t\'er~2, H-6720 Szeged, Hungary}
\address[stuttgart]{Universit\"at Stuttgart, Institut f\"ur
  Maschinelle Sprachverarbeitung \\ Pfaffenwaldring~5b, 70569
  Stuttgart, Germany}
\journal{arXiv}
\begin{document}

\begin{abstract}
  Linear extended top-down tree transducers (or synchronous
  tree-substitution grammars) are popular formal models of tree
  transformations.  The expressive power of compositions of such
  transducers with and without regular look-ahead is investigated.
  In particular, the restrictions of nondeletion,
  $\varepsilon$-freeness, and strictness are considered.
  The composition hierarchy turns out to be finite for all
  $\varepsilon$-free (all rules consume input) variants of these
  transducers except for nondeleting $\varepsilon$-free linear
  extended top-down tree transducers.  The least number of transducers
  needed for the full expressive power of arbitrary compositions is
  presented.  In all remaining cases (incl.\ nondeleting
  $\varepsilon$-free linear extended top-down tree transducers) the
  composition hierarchy does not collapse.
\end{abstract}

\begin{keyword}
  extended top-down tree transducer; composition hierarchy; bimorphism
\end{keyword}

\maketitle

\section{Introduction}
\label{sec:Intro} 
The top-down tree transducer is a simple formal model that encodes a
tree transformation (i.e., a relation on trees).  It was introduced
in~\cite{rou70,tha70} and intensively studied thereafter
(see~\cite{gecste84,gecste97,fulvog98} for an overview).  Roughly
speaking, a top-down tree transducer processes the input tree
symbol-by-symbol and specifies in its rules, how to translate an input
symbol into an output tree fragment together with instructions on how
to process the subtrees of the input symbol.  This asymmetry between
input (single symbol) and output (tree fragment) was removed in
extended top-down tree transducers~(xt), which were introduced and
studied in~\cite{arndau75,arndau76}.  In an xt the left-hand side of a
rule now contains an input tree fragment, in which each variable can
occur at most once as a placeholder for a subtree.  In particular, the
input tree fragment can even be just a variable, which matches every
tree, and such rules are called $\varepsilon$-rules.  In this
contribution we consider linear xt~(l-xt), in which the right-hand
side of each rule contains each variable at most once as well.
Restricted variants of~l-xt are used in most approaches to
syntax-based machine translation~\cite{graknimay08,knigra05}.

We also add regular look-ahead~\cite{eng77} (i.e., the ability to
check a regular property for the subtrees in an input tree fragment)
to l-xt, so our most expressive model is the linear extended top-down
tree transducer with regular look-ahead~(l-xt${}^{\text R}$).
Contrary to most of the literature~\cite{eng77,malgrahopkni07} we
present our model as a synchronized grammar~\cite{chi06} because we
sometimes use the auxiliary link structure in our proofs.  Instead of
variables in the left-hand side and a state-variable combination in
the right-hand side of a rule, we immediately only use states with the
restriction that each state can occur at most once in the left-hand
side and at most once in the right-hand side.  Moreover, all states
that occur in the right-hand side must also occur in the left-hand
side.  In this way, for each rule the states establish implicit links
(a state links its occurrence in the left-hand side with its
occurrence in the right-hand side), which form a bijection between a
subset of the state occurrences in the left-hand side and all state
occurrences in the right-hand side.  The state occurrences (in the
left-hand side) that do not participate in the bijection (i.e., those
states that exclusively occur in the left-hand side) can restrict the
acceptable subtrees at their position with the help of regular
look-ahead~\cite{eng77}.  The implicit links in a rule are made
explicit in a derivation, and a rule application expands (explicitly)
linked state occurrences at the same time.  Example \ref{ex:lXT} shows
an l-xt${}^{\text R}$, for which we illustrate a few derivation steps
in Figure~\ref{fig:xDeriv}.  The tree transformation computed by the
example l-xt${}^{\text R}$ is shown in Example~\ref{ex:Sem2}.  In the
following, we use l-XT${}^{\text R}$~and~l-XT to denote the class of
all tree transformations computed by l-xt${}^{\text R}$~and~l-xt,
respectively.

The expressive power of the various subclasses of l-XT${}^{\text R}$
is already well understood~\cite{malgrahopkni07,fulmalvog11}.
However, in practice complex systems are often specified with the help
of compositions of tree transformations~\cite{mayknivog10} because it
is much easier to develop (or train) small components that manage a
part of the overall transformation.  Consequently,
\cite{knigra05}~and~others declare that closure under composition is a
very desirable property for classes of tree transformations
(especially in the area of natural language processing).  If a
class~${\cal C}$ of tree transformations is closed under composition,
then any composition chain $\tau_1 \mathbin; \dotsm \mathbin; \tau_n$
of tree transformations~$\seq \tau1n$ of~${\cal C}$ can be replaced by
a single tree transformation $\tau \in {\cal C}$.  If ${\cal
  C}$~represents the class of all tree transformations computable by a
device, then closure under composition means that we can replace any
composition chain specified by several devices by just a single
device, which enables an efficient modular development.
Unfortunately, neither l-XT${}^{\text R}$ nor l-XT are closed under
composition~\cite{arndau76,arndau82,malgrahopkni07}.

In general, for a class~$\cal C$ of tree transformations (that
contains the identity transformation) we obtain a composition
hierarchy ${\cal C} \subseteq {\cal C}^2 \subseteq {\cal C}^3
\subseteq \dotsb {}$, where ${\cal C}^n$~denotes the $n$-fold
composition of~${\cal C}$.  The class ${\cal C}$ might be closed under
composition at power $n$ (i.e., ${\cal C}^n = {\cal C}^{n+1}$) or its
composition hierarchy might be infinite (i.e., ${\cal C}^n \subsetneq
{\cal C}^{n+1}$ for all $n$).  In the former case, we say that the
composition hierarchy of ${\cal C}$ collapses at power~$n$, which also
yields that ${\cal C}^n = {\cal C}^{m}$ for all $m \geq n$.  In
particular, ${\cal C}$ is closed under composition if its composition
hierarchy collapses at power~$1$.  We note that in practice (e.g., in
machine translation) the classes that are closed under composition at
a small finite power are also important because for such classes we
can limit the length of composition chains~\cite{mayknivog10}.  In
this contribution, we investigate the composition hierarchy of the
classes \mbox{l-XT${}^{\text R}$}~and~l-XT together with their subclasses
determined by the properties: $\varepsilon$-freeness, strictness, and
nondeletion, which are abbreviated by~`$\not\!\varepsilon$', `s', and
`n', respectively.  Roughly speaking, $\varepsilon$-freeness yields
that all rules are $\varepsilon$-free, strictness guarantees that the
right-hand side of each rule contains an output symbol, and
nondeletion requires that for each rule exactly the same states occur
in the left- and right-hand side.  We use the property abbreviations
in front of l-XT${}^{\text R}$~and~l-XT to obtain the class of all
tree transformations computable by such restricted l-xt${}^{\text R}$
and l-xt, respectively.  For instance,
$\text{$\mathord{\not{\!\varepsilon}}$sl-XT}^{\text R}$ denotes the
class of all tree transformations computed by $\varepsilon$-free and
strict l-xt${}^{\text R}$.

It is known that none of our considered classes is closed under
composition~\cite[Section~3.4]{arndau82}.  In addition, it is known
that $\text{$\mathord{\not{\!\varepsilon}}$snl-XT} =
\text{$\mathord{\not{\!\varepsilon}}$snl-XT}^{\text R}$ is closed at
power~2~\cite[Section~II-2-2-3-3]{dau77}.  We complete the picture as
follows.  For each of the remaining classes, we either provide the
least power at which the class is closed under composition or show
that the composition hierarchy of the class is infinite (denoted
by~$\infty$).  Our results (together with the mentioned existing
result) are presented in Table~\ref{tab:results}.

\begin{table}[t]
  \begin{center}
    \begin{tabular}{|l|c|c|}
      \hline
      & & \\[-1.5ex]
      \; Class & \mbox{\;} Least power of closedness\mbox{\;} & Proved in
      \\[1ex] \hline
      & & \\[-1.5ex]
      \; $\text{$\mathord{\not{\!\varepsilon}}$snl-XT} =
      \text{$\mathord{\not{\!\varepsilon}}$snl-XT}^{\text R}$ \; & 2 & 
      \; \cite[Section~II-2-2-3-3]{dau77} \; \\[1ex] \hline
      & & \\[-1.5ex]
      \; $\text{$\mathord{\not{\!\varepsilon}}$sl-XT}^{\text R}$,
      $\text{$\mathord{\not{\!\varepsilon}}$sl-XT}$ & $2$ &
      Theorem~\ref{thm:min1} \\[1ex]
      \; $\text{$\mathord{\not{\!\varepsilon}}$l-XT}^{\text R}$ & $3$ &
      Theorem \ref{thm:min2} \\[1ex]
      \; $\text{$\mathord{\not{\!\varepsilon}}$l-XT}$ & $4$ &
      Corollary \ref{cor:min3} \\[1ex] 
      \hline
      & & \\[-1.5ex]
      \; otherwise & $\infty$ & Theorem~\ref{cor:eps} \\[1ex]
      \hline
    \end{tabular}
  \end{center}
  \caption{Characterization of the composition hierarchies.}
  \label{tab:results}
\end{table}

Our contribution is organized as follows.  Section~\ref{sec:prelim}
recalls the necessary concepts and introduces our notation.  We
continue in Section~\ref{sec:xtop} with the formal introduction of our
main model (l-xt${}^{\text R}$) including its syntax and semantics and
the restrictions that we consider later.  In addition, we recall some
known equalities between certain fundamental classes of tree
transformations in preparation for our first main results.  In
Section~\ref{sec:upper} we give a power at which the classes
$\text{$\mathord{\not{\!\varepsilon}}$sl-XT}$,
$\text{$\mathord{\not{\!\varepsilon}}$sl-XT}^{\text R}$,
$\text{$\mathord{\not{\!\varepsilon}}$l-XT}$, and
$\text{$\mathord{\not{\!\varepsilon}}$l-XT}^{\text R}$ of tree
transformations are closed under composition
(cf. Table~\ref{tab:results}).  This is completed in
Section~\ref{sec:lower}, where we conclude that the presented powers
(Table~\ref{tab:results}) are minimal.  Finally, in
Section~\ref{sec:complete} we prove that the composition hierarchy of
the remaining classes is infinite.

\section{Notation}
\label{sec:prelim}
We denote the set of all nonnegative integers by~$\nat$.  The set of
all \emph{finite words} (finite sequences) over~a set~$S$ is~$S^* =
\bigcup_{n \in \nat} S^n$, where $S^0 = \{\varepsilon\}$~contains only
the \emph{empty word}~$\varepsilon$.  The \emph{length} of a word~$w
\in S^*$ is the unique~$n \in \nat$ such that $w \in S^n$.  We write
$\abs w$ for the length of~$w$.  The \emph{concatenation} of two
words~$v, w \in S^*$ is denoted by $v.w$~or simply~$vw$.

Every subset of~$S \times T$ is a \emph{relation} from~$S$ to~$T$.
Given relations
$R_1 \subseteq S \times T$ and $R_2 \subseteq T \times U$, the
 \emph{inverse} of $R_1$ is the
relation~$R_1^{-1} = \{ (t, s) \mid (s, t) \in R_1\}$, and the
\emph{composition} of $R_1$ and $R_2$ is the
relation \[ R_1 \mathbin; R_2 = \{ (s, u) \mid \exists t \in T\colon
(s, t) \in R_1, (t, u) \in R_2\} \enspace. \] These notions and
notations are lifted to classes ${\cal C}_1$~and~${\cal C}_2$ of
relations in the usual manner.  Namely, we let ${\cal C}_1^{-1} = \{
R_1^{-1} \mid R_1 \in {\cal C}_1\}$ and
\[ {\cal C}_1 \mathbin; {\cal C}_2 = \{R_1 \mathbin; R_2 \mid R_1 \in
{\cal C}_1, R_2 \in {\cal C}_2\} \enspace. \] Moreover, the
\emph{powers} of a class~${\cal C}$ are defined by ${\cal C}^1 = {\cal
  C}$ and ${\cal C}^{n+1} ={\cal C}^n \mathbin; {\cal C}$ for~$n
\geq 1$.  The \emph{composition hierarchy} (resp.\ \emph{composition
  closure}) \emph{of~${\cal C}$} is the family $({\cal C}^n \mid n
\geq 1)$ (resp.\ the class $\bigcup_{n \geq 1} {\cal C}^n$).  If
${\cal C}^{n+1} = {\cal C}^n$, then \emph{${\cal C}$~is closed under
  composition at power~$n$}.  For $n = 1$ we shorten this to just
\emph{${\cal C}$~is closed under composition}.  If ${\cal C}$ is
closed under composition at power~$n$, then $\bigcup_{1 \leq i \leq n}
{\cal C}^i$ is the composition closure of~${\cal C}$.  Moreover, we
note that if ${\cal C}$~contains the identity relations, then ${\cal
  C}^n \subseteq {\cal C}^{n+1}$ for all~$n \geq 1$, so that in this case ${\cal
  C}^n$ is the composition closure of~${\cal C}$ provided that ${\cal C}$
 is closed under composition at
power~$n$.  Our classes~${\cal C}$ of tree transformations will always
contain the identity relations.

An \emph{alphabet}~$\Sigma$ is a nonempty and finite set, of which the
elements are called \emph{symbols}.  The alphabet~$\Sigma$ is
\emph{ranked} if there additionally is a mapping~$\mathord{\rk} \colon
\Sigma \to \nat$ that assigns a rank to each symbol.  We let $\Sigma_k
= \{\sigma \in \Sigma \mid \rk(\sigma) = k\}$ for every $k \in \nat$.
Often the mapping~`$\mathord{\rk}$' is obvious from the context, so we
typically denote ranked alphabets by~$\Sigma$ alone.  If it is not
obvious, then we use the notation~$\sigma^{(k)}$ to indicate that the
symbol~$\sigma$ has rank~$k$.  For the rest of this paper, $\Sigma$,
$\Delta$, and~$\Gamma$ will denote arbitrary ranked alphabets if not
specified otherwise.

For every set~$T$, let 
\[ \Sigma(T) = \{\sigma(\seq t1k) \mid \sigma \in \Sigma_k, \seq t1k
\in T\} \enspace. \] Let $S$~be a set with $S \cap \Sigma =
\emptyset$.  The set~$T_\Sigma(S)$ of \emph{$\Sigma$-trees with leaf
  labels~$S$} is the smallest set~$U$ such that $S \subseteq U$ and
$\Sigma(U) \subseteq U$.  We write~$T_\Sigma$
for~$T_\Sigma(\emptyset)$, and any subset of~$T_\Sigma$ is a
\emph{tree language}.  The \emph{height}~$\height(t)$ of a tree~$t \in
T_\Sigma(S)$ is defined such that $\height(s) = 0$ for all $s \in S$
and \[ \height(\sigma(\seq t1k)) = 1 + \max\ \{\height(t_i) \mid 1
\leq i \leq k\} \] for all $\sigma\in\Sigma_k$ and $\seq t1k \in
T_\Sigma(S)$.  Given a unary symbol~$\gamma \in \Sigma_1$ and a
tree~$t \in T_\Sigma(S)$, we write~$\gamma^k(t)$ for the
tree~$\underbrace{\gamma(\dotsm \gamma}_{k \text{ times}}(t) \dotsm)$. 

The set~$\pos(t) \subseteq \nat^*$ of \emph{positions} of~$t \in
T_\Sigma(S)$ is inductively defined by $\pos(s) = \{\varepsilon\}$ for
every $s \in S$ and
\[ \pos(\sigma(\seq t1k)) = \{\varepsilon\} \cup \bigcup_{i = 1}^k \{
iw \mid w \in \pos(t_i)\} \] for every $\sigma \in \Sigma_k$ and $\seq
t1k \in T_\Sigma(S)$.  For words $v,w \in \nat^*$, we denote the
longest common prefix of $v$~and~$w$ by~$\lca(v,w)$.  Note that
$\lca(v, w) \in \pos(t)$ for all $v, w \in \pos(t)$ because
$\pos(t)$~is prefix-closed.  The positions~$\pos(t)$ are partially
ordered by the prefix order~$\preceq$ on~$\nat^*$ [i.e., $v \preceq w$
if and only if $v = \lca(v, w)$].  The size~$\abs t$ of the tree~$t
\in T_\Sigma(S)$ is~$\abs{\pos(t)}$; i.e., the number of its
positions.  Let $t \in T_\Sigma(S)$ and $w \in \pos(t)$.  The
\emph{label} of~$t$ at~$w$ is~$t(w)$, and the \emph{$w$-rooted
  subtree} of~$t$ is~$t|_w$.  Formally, $s(\varepsilon) =
s|_{\varepsilon} = s$ for every $s \in S$ and
\begin{align*}
  t(w) = \begin{cases} \sigma & \text{if } w = \varepsilon \\ t_i(v) &
    \text{if } w = iv \text{ and } i \in \nat \end{cases} \qquad
  \text{and} \qquad t|_w = \begin{cases} t & \text{if } w = \varepsilon
    \\ t_i|_v & \text{if } w = iv \text{ and } i \in \nat \end{cases}
\end{align*}
where $t = \sigma(\seq t1k)$ with $\sigma \in \Sigma_k$ and $\seq t1k
\in T_\Sigma(S)$.  For every selection~$U \subseteq S$ of leaf
symbols, we let $\pos_U(t) = \{ w \in \pos(t) \mid t(w) \in U\}$ and
$\pos_s(t) = \pos_{\{s\}}(t)$ for every $s \in S$.  The tree~$t$ is
\emph{linear} (resp.\ \emph{nondeleting}) in~$U$ if $\abs{\pos_u(t)}
\leq 1$ (resp.\ $\abs{\pos_u(t)} \geq 1$) for every $u \in U$.
Moreover, $\var(t) = \{ s \in S \mid \pos_s(t) \neq \emptyset\}$.  The
expression~$t[u]_w$ denotes the tree that is obtained from~$t \in
T_\Sigma(S)$ by replacing the subtree~$t|_w$ at~$w$ by~$u \in
T_\Sigma(S)$.

Let $U \subseteq S$ be finite, $t \in T_\Sigma(S)$, and $\theta \colon
U \to \{ L \mid L \subseteq T_\Sigma(S)\}$.  We define the tree
language~$t\theta$ by induction as follows.
\begin{compactitem}
\item $s\theta = s$ for all $s \in S \setminus U$,
\item $s\theta = \theta(s)$ for all $s \in U$, and
\item for all $\sigma \in \Sigma_k$ and $\seq t1k \in T_\Sigma(S)$
  \begin{align*}
    \sigma(\seq t1k)\theta &= \{\sigma(\seq u1k) \mid u_1 \in
    t_1\theta, \dotsc, u_k \in t_k\theta \} \enspace.
  \end{align*}
\end{compactitem}
For every $n \in \nat$ we fix the set $X_n = \{\seq x1n\}$ of
variables, which we assume to be disjoint from all ranked alphabets
considered in the paper.  Given $t \in T_\Sigma(S)$ and $\tau \colon
X_n \to T_\Sigma(S)$, we write $t[\tau(x_1), \dotsc, \tau(x_n)]$
for~$t\theta$, where $\theta(x_i) = \{\tau(x_i)\}$ for all $1 \leq i
\leq n$.

A \emph{tree homomorphism from}~$\Sigma$ \emph{to}~$\Delta$ is a
family of mappings $(h_k \mid k \in \nat)$ such that $h_k \colon
\Sigma_k \to T_\Delta(X_k)$ for every $k \in \nat$.  Such a tree
homomorphism is
\begin{compactitem}
\item \emph{linear} if for every $\sigma \in \Sigma_k$ the
  tree~$h_k(\sigma)$ is linear in~$X_k$, 
\item \emph{complete} if for every $\sigma \in \Sigma_k$ the
  tree~$h_k(\sigma)$ is nondeleting in~$X_k$,
\item \emph{strict} if $h_k \colon \Sigma_k \to \Delta(T_\Delta(X_k))$
  for every $k \in \nat$, and
\item \emph{delabeling} if $h_k \colon \Sigma_k \to X_k \cup \Delta(X_k)$
  for every $k \in \nat$.
\end{compactitem}
We abbreviate the above restrictions by `l', `c', `s', and `d'.  The
tree homomorphism~$(h_k \mid k \in \nat)$ induces a mapping $h \colon
T_\Sigma(S) \to T_\Delta(S)$ defined inductively by $h(s) = s$ for all
$s \in S$ and
\[ h(\sigma(\seq t1k)) = h_k(\sigma)[h(t_1), \dotsc, h(t_k)] \] for
all $\sigma \in \Sigma_k$ and $\seq t1k \in T_\Sigma(S)$.  As usual,
we also call the induced mapping~$h$ a tree homomorphism.  We denote
by~H the class of all tree homomorphisms, and for any combination~$w$
of~`l', `c', `s', and~`d' we denote by~$w$-H the class of all $w$-tree
homomorphisms.  For instance, lcs-H~is the class of all linear,
complete and strict tree homomorphisms.

In the following, we need \emph{regular tree
  languages}~\cite{gecste84,gecste97} and basic results for tree
automata, which recognize the regular tree languages.  The
set~$\Reg(\Gamma)$ contains all regular tree languages $L \subseteq
T_\Gamma$ over the ranked alphabet~$\Gamma$.  A particular well-known
result, which we use, states that $t\theta \in \Reg(\Gamma)$ for all
$t \in T_\Gamma(U)$ and $\theta \colon U \to \Reg(\Gamma)$.  A
\emph{bimorphism} is a triple~$B = (\psi, L, \varphi)$ consisting of a
regular tree language~$L \in \Reg(\Gamma)$, an input tree homomorphism
$\psi \colon T_\Gamma \to T_\Sigma$, and an output tree homomorphism
$\varphi \colon T_\Gamma \to T_\Delta$.  The tree transformation $B
\subseteq T_\Sigma \times T_\Delta$ computed by the bimorphism~$B$ is
the relation $B = \{ \langle \psi(t), \varphi(t) \rangle \mid t \in
L\}$.  Given two combinations $v$~and~$w$ of restrictions for tree
homomorphisms, we let $\text B(v, w)$ denote the class of all tree
transformations computed by bimorphisms $B = (\psi, L, \varphi)$ such
that $\psi$~and~$\varphi$ are tree homomorphisms of type $v$ and $w$,
respectively.

\section{Linear extended top-down tree transducers}
\label{sec:xtop}
Our main model is the linear extended top-down tree
transducer~\cite{arndau75,arndau76,knigra05,graknimay08} with regular
look-ahead (l-xt${}^{\text R}$), which is based on the classical
linear top-down tree transducer without~\cite{rou70,tha70} and with
regular look-ahead~\cite{eng77}.  We will present it in a form that is
closer to synchronized grammars~\cite{chi06} because we will use an
auxiliary structure, called the links, in a later proof.  In
synchronous grammars, equal states in the left and right-hand side of
a rule are implicitly linked, and equal states are explicitly linked
in a derivation, in which such linked states will be replaced at the
same time by a rule for that state.  In a rule of an l-xt${}^{\text
  R}$, these implicit links form a bijection between a subset of the
states in the left-hand side and all states in the right-hand side.
Thus, some states might exclusively occur in the left-hand side.  For
those states we can use the regular look-ahead to restrict the
subtrees that are acceptable at these positions.

\begin{definition}[{\protect{see~\cite[Section~2.2]{malgrahopkni07}}}]
  \label{df:xtop}
  \upshape A \emph{linear extended top-down tree transducer} with
  regular look-ahead (l-xt${}^{\text R}$) is a tuple $M=(Q, \Sigma,
  \Delta, I, R, c)$, where
  \begin{compactitem}
  \item $Q$~is a finite set of \emph{states},
  \item $\Sigma$~and~$\Delta$ are ranked alphabets of \emph{input} and
    \emph{output symbols},
  \item $I \subseteq Q$ is a set of \emph{initial states},
  \item $R \subseteq T_\Sigma(Q) \times Q \times T_\Delta(Q)$
    is a finite set of \emph{rules} such that $\ell$~and~$r$ are linear
    in~$Q$ and $\var(r) \subseteq \var(\ell)$ for every $(\ell, q, r)
    \in R$, and
  \item $c \colon Q \to \Reg(\Sigma)$ assigns regular look-ahead to
    each state.
  \end{compactitem}
\end{definition}

It is worth noting that we assign look-ahead to each state, but in a
derivation we will only use the look-ahead if the state is deleted
(see Definition~\ref{df:Sem}).  Moreover, in contrast to other
definitions~\cite{malgrahopkni07,fulmalvog11}, we do not allow the
same state to occur several times (neither in the left- nor in the
right-hand side).  However, with the help of a simple renaming, each
traditional linear extended top-down tree transducer can be written in
our slightly more restrictive format.  Next, we recall some important
syntactic properties of our model.  To this end, let $M = (Q, \Sigma,
\Delta, I, R, c)$ be an l-xt${}^{\text R}$ in the following.  It is
\begin{compactitem}
\item a \emph{linear extended tree transducer} (without
  look-ahead)~[l-xt], if $c(q) = T_\Sigma$ for every $q \in Q$,
\item a \emph{linear top-down tree transducer with regular look
    ahead}~[l-t${}^{\text R}$] if $\ell \in \Sigma(Q)$ for every
  $(\ell, q, r) \in R$, 
\item a \emph{linear top-down tree transducer} (without look
  ahead)~[l-t] if it is both an l-xt and an l-t${}^{\text R}$,
\item \emph{$\varepsilon$-free} if $\ell \notin Q$ for every $(\ell,
  q, r) \in R$,
\item \emph{strict} if $r \notin Q$ for every $(\ell, q, r) \in R$,
\item a \emph{delabeling} if $\ell \in \Sigma(Q)$ and $r \in Q \cup
  \Delta(Q)$ for every $(\ell, q, r) \in R$, 
\item \emph{nondeleting} if $\var(r) = \var(\ell)$ for every $(\ell,
  q, r) \in R$, and
\item a \emph{finite-state relabeling}~[qr] if it is a nondeleting,
  strict delabeling l-t such that $\pos_p(\ell) = \pos_p(r)$ for 
  every $(\ell, q, r) \in R$ and $p \in \var(r)$.
\end{compactitem}
We note that each l-t${}^{\text R}$ and each l-t is automatically
$\varepsilon$-free.  To simplify the notation in examples and
illustrations, we sometimes write rules as~$\ell \stackrel
q\longrightarrow r$ instead of~$(\ell, q, r)$.  Similarly, we write
$\ell \stackrel{q, p} \longrightarrow r$ as a shorthand for the two
rules $\ell \stackrel q\longrightarrow r$ and $\ell \stackrel
p\longrightarrow r$.  Moreover, for every $p \in Q$ and $(\ell, q, r)
\in R$ we identify $\pos_p(\ell)$~and~$\pos_p(r)$ with their unique
element if the sets are non-empty.  Since $\ell$~and~$r$ are linear
in~$Q$, there is indeed at most one element in
$\pos_p(\ell)$~and~$\pos_p(r)$.  Finally, for every $q \in Q$, we let
\[ R_q = \{ \rho \in R \mid \rho = (\ell, q, r) \} \]
be the subset of~$R$ that contains all rules for state~$q$.

\begin{example}
  \label{ex:lXT}
  Let us consider the l-xt${}^{\text R}$ $M_1 = (Q, \Sigma, \Delta,
  \{\mathord{\star}\}, R, c)$ given by
  \begin{compactitem}
  \item $Q = \{\mathord{\star}, p, q, q', \mathord{\id},
    \mathord{\id}'\}$,
  \item $\Sigma = \{ \sigma^{(2)}, \gamma_1^{(1)}, \gamma_2^{(1)} \}
    \cup \Delta$,
  \item $\Delta = \{ \sigma_1^{(2)}, \sigma_2^{(2)},
    \gamma^{(1)}, \alpha^{(0)}\}$,
  \item $R$ contains exactly the following rules
    \begin{align*}
      \sigma_1(p, q) &\stackrel {\star,p}\longrightarrow \sigma_1(p, q) &
      \sigma_2(\mathord{\id}, \mathord{\id'})
      &\stackrel{p,q}\longrightarrow \sigma_2(\mathord{\id},
      \mathord{\id}') & \gamma_1(p) &\stackrel p\longrightarrow p \\
      \sigma(q, \mathord{\id}) &\stackrel q\longrightarrow q &
      \sigma(q', q) &\stackrel q\longrightarrow q &
      \gamma_2(q) &\stackrel q\longrightarrow q \\
      \gamma(\mathord{\id}) &\stackrel{\mathord{\id},
        \mathord{\id}'}\longrightarrow \gamma(\mathord{\id}) & \alpha
      &\stackrel {\mathord{\id}, \mathord{\id}'}\longrightarrow \alpha
    \end{align*}
  \item $c(q') = \{ t \in T_\Sigma \mid \pos_{\sigma_2}(t) \neq \emptyset \}$ and
    $c(f) = T_\Sigma$ for all other states~$f$.
  \end{compactitem}
  It can easily be checked that $c(q')$ is a regular tree language.
  The l-xt${}^{\text R}$~$M_1$ is an $\varepsilon$-free, nondeleting,
  delabeling top-down tree transducer with regular look-ahead.
\end{example}

Next, we recall the semantics of the l-xt${}^{\text R}$~$M = (Q,
\Sigma, \Delta, I, R, c)$, which is (mostly) given by synchronous
substitution.  While the links in a rule are implicit and established
due to occurrences of equal states, we need an explicit linking
structure for our sentential forms.  These links will form
dependencies that are used in a proof later on.  To encode the links,
we store a relation between positions of the input and output tree.
Let ${\cal L} = \{ D \mid D \subseteq \nat^* \times \nat^*\}$ be the
set of all link structures.  First, we define general sentential
forms.  Roughly speaking, a sentential form consists of an input tree
and an output tree, in which positions are linked.

\begin{definition}[{\protect{see~\cite[Section~3]{fulmalvog10}}}]
  \label{df:SentForm}
  \upshape An element~$\langle \xi, D, \zeta \rangle \in T_\Sigma(Q)
  \times \mathcal L \times T_\Delta(Q)$ is a \emph{sentential form}
  (for $M$) if $v \in \pos(\xi)$ and $w \in \pos(\zeta)$ for every
  $(v, w) \in D$. For a set $\cal S$ of sentential forms we define
  \[ \link({\cal S})=\{D\mid \langle \xi, D, \zeta \rangle \in {\cal S}
  \text{ for some trees } \xi \text{ and } \zeta\} \enspace. \]
\end{definition}

Now we formalize the implicit links in a rule~$\rho$ by defining the
explicit link structure determined by~$\rho$.  This explicit link
structure is added to the link structure of a sentential form whenever
the rule~$\rho$ is applied in the derivation process.

\begin{definition} \upshape
  \label{df:Links3}
  Let $\rho \in R$ be the rule~$(\ell, q, r)$, and let $v, w \in
  \nat^*$.  The explicit link structure~of $\rho$ for the positions
  $v$~and~$w$ is defined as
  \[ \link_{v, w}(\rho) = \{ (v.\pos_p(\ell), w.\pos_p(r)) \mid p \in
  \var(r) \} \enspace. \]
\end{definition}

\begin{example}
  \label{ex:lXTlinks}
  Let us compute two explicit link structures.
  \begin{align*}
    \link_{1, 21}(\sigma_1(p, q) \stackrel \star\longrightarrow
    \sigma_1(p, q)) &= \{(11, 211), (12, 212)\} \\
    \link_{1, 21}(\sigma(q', q) \stackrel q\longrightarrow q) &= \{(12,
    21)\}    
  \end{align*}
  In illustrations we use grayed splines to indicate the links.  The
  rules above and their implicit links [i.e., those
  of~$\link_{\varepsilon, \varepsilon}(\rho)$] are displayed in
  Figure~\ref{fig:links}. 
\end{example}

\begin{figure}[t]
  \centering
  \includegraphics{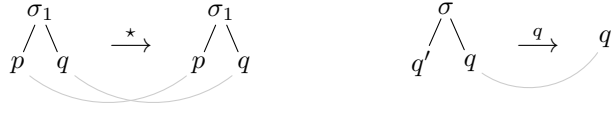}
  \caption{Illustration of two rules with their implicit links.}
  \label{fig:links}
\end{figure}

The derivation process is started with a simple sentential
form~$\langle q, \{(\varepsilon, \varepsilon)\}, q \rangle$ consisting
of the input and output tree~$q$ for some initial state $q \in I$ and
the trivial link relating both roots.  The current sentential form
can evolve in two ways.  Either we (nondeterministically) apply a
rule~$(\ell, q, r)$ to a pair $(v,w)$ of linked occurrences of the
state~$q$ or we apply the look-ahead.  In the former case, such a rule
application replaces the linked occurrences of~$q$ by the left and
right-hand side of the rule.  The explicit link structure of the rule
for $v$ and $w$ is added to the current (explicit) link structure to
obtain a new sentential form.  Since we are interested in the
dependencies created during derivation, we preserve all links and
never remove a link from the linking structure.  This replacement
process can be repeated until no linked occurrences of states remain.
Thus, we obtain an output tree without states, but the input tree of
the sentential form can still contain states, which do not take part
in an active link (i.e., a link relating two states in the sentential
form).  Each occurrence of such a state~$q$ can simply be replaced by
any tree of the regular look-ahead~tree language~$c(q)$, where
different occurrences of the same state can be replaced by different
trees of~$c(q)$.

\begin{definition}[{\protect{see~\cite[Section~3]{fulmalvog10}}}]
  \label{df:Sem}
  \upshape
  Given two sentential forms $\langle \xi, D, \zeta \rangle$ and
  $\langle \xi', D', \zeta' \rangle$, we write 
  \[ \langle \xi, D, \zeta \rangle \Rightarrow_M \langle \xi', D',
  \zeta' \rangle \] if 
  \begin{compactitem}
  \item there exists a rule $\rho = (\ell, q, r) \in R$ and input and
    output positions $(v, w) \in D \cap \bigl(\pos_q(\xi) \times
    \pos_q(\zeta) \bigr)$ such that
    \[ \xi' = \xi[\ell]_v \qquad \zeta' = \zeta[r]_w \qquad D' = D
    \cup \link_{v, w}(\rho) \enspace, \]
  \item or there exists a position $v \in \pos_Q(\xi)$ with $w
    \notin \pos_Q(\zeta)$ for all $(v, w) \in D$ and there exists $t
    \in c(\xi(v))$ such that
    \[ \xi' = \xi[t]_v \qquad \zeta' = \zeta \qquad D' = D \enspace. \]
  \end{compactitem}
\end{definition}

\begin{figure}
  \centering
  \includegraphics[scale=1]{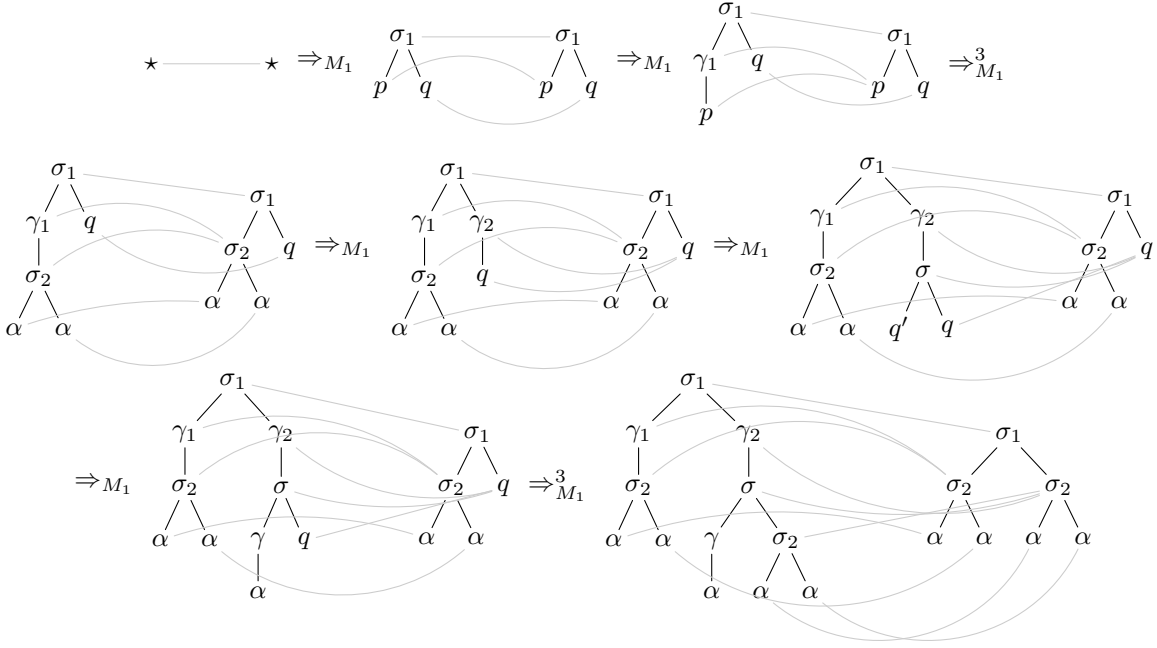}
  \caption{Derivation using the l-XT${}^{\text R}$~$M_1$ of
    Example~\protect{\ref{ex:lXT}}.}
  \label{fig:xDeriv}
\end{figure}

A few derivation steps using the l-xt${}^{\text R}$~$M_1$ of
Example~\ref{ex:lXT} are illustrated in Figure~\ref{fig:xDeriv}.  Next,
we define the tree transformation computed by an l-xt${}^{\text R}$.

\begin{definition} \upshape
  \label{df:Sem2}
  The l-xt${}^{\text R}$~$M$ computes the
  dependencies~$\dep(M)$, which are given by
  \[ \dep(M) = \{ (t, D, u) \in T_\Sigma \times \mathcal L \times T_\Delta
  \mid \exists q \in I \colon \langle q, \{(\varepsilon,
  \varepsilon)\}, q \rangle \Rightarrow_M^* (t, D, u) \} \enspace, \]
  where~$\Rightarrow_M^*$ is the reflexive and transitive closure
  of~$\Rightarrow_M$.  Moreover, it computes the tree
  transformation~$M$, which is given by $M = \{ (t, u) \mid (t, D, u)
  \in \dep(M)\}$.
\end{definition}

\begin{example}
  \label{ex:Sem2}
  Let $M_1$ be the l-xt${}^{\text R}$ of Example~\ref{ex:lXT}.
  Then
  \[ \bigl(\sigma_1(\gamma_1(\sigma_2(\alpha, \alpha)),
  \gamma_2(\sigma(\gamma(\alpha), \sigma_2(\alpha, \alpha)))), D,
  \sigma_1(\sigma_2(\alpha, \alpha), \sigma_2(\alpha, \alpha))\bigr) \in
  \dep(M_1) \] where 
  \begin{align*}
    D &= \{(\varepsilon, \varepsilon), (1, 1), (11, 1),
    (111, 11), (112, 12), 
    (2, 2), (21, 2), (212, 2), (2121, 21),
    (2122, 22)\} \enspace, 
  \end{align*}
  which corresponds to the final element in the derivation displayed
  in Figure~\ref{fig:xDeriv}.  Roughly speaking, the tree transformation
  computed by~$M_1$ accepts only input trees of a certain shape.  In
  such a tree it removes all $\gamma_1$-symbols on the left spine
  between the symbols $\sigma_1$~and~$\sigma_2$.  In addition, it also
  removes all symbols to the right of~$\sigma_1$ except for those
  belonging to the left-most subtree rooted by~$\sigma_2$.
\end{example}

Since every translation $(t, u) \in M$ is ultimately created by (at
least) one successful derivation, we can inspect the links in the
derivation process to exhibit the dependencies.  Roughly speaking,
the links establish which parts of the output tree were generated
due to a particular part of the input tree.  This correspondence is
called \emph{contribution} in~\cite{engman03b}.

\begin{definition} \upshape
  \label{df:classes}
  To allow concise statements, we introduce the following short-hands:
  \[ \not\!\varepsilon = \text{$\varepsilon$-free} \qquad
  \text s = \text{strict} \qquad
  \text d = \text{delabeling} \qquad
  \text n = \text{nondeleting} \enspace.
  \]
  We use these abbreviations in conjunction with l-xt${}^{\text R}$ to
  restrict to devices with the indicated properties.  For example,
  dnl-xt stands for ``delabeling and nondeleting linear extended
  top-down tree transducer'' (without look-ahead).  We use the same
  abbreviations with the stem (i.e., the material behind the hyphen)
  in capital letters for the corresponding classes of computed tree
  transformations.  For instance, dnl-XT stands for the class of all
  tree transformations computable by dnl-xt, and QR~denotes the class
  of all tree transformations computable by~qr.
\end{definition}

Our device develops the input and output tree from top to bottom,
which yields the name top-down tree transducer.  Similarly, there
exists a device called linear extended bottom-up tree
transducer~\cite{fulmalvog11}, which creates the trees from the leaves
to the root (i.e., bottom-up)~\cite{eng75}.  We do not recall the
device formally here, but we use the abbreviations l-xb~and~l-XB
together with the usual properties for it.  We refer the reader
to~\cite{fulmalvog11} for full details.

Next, we recall two important known statements for the class
l-XT${}^{\text R}$.  The first statement relates the classes
l-XT${}^{\text R}$~and~l-T${}^{\text R}$ to l-XB~and~l-B,
respectively.  In addition, we demonstrate that look-ahead is
superfluous in the nondeleting case.  We use the brackets `['~and~`]'
for optional use of the restrictions $\not\!\varepsilon$, `d', `s',
and `n' that have to be consistently applied.

\begin{theorem}[\protect{\cite[Proposition~3.3 and
    Corollary~4.1]{fulmalvog11}}] \upshape
  \label{thm:TB}
  \begin{align*}
    \text{[$\not\!\varepsilon$][s][d][n]l-XT}^{\text R} &=
    \text{[$\not\!\varepsilon$][s][d][n]l-XB} &
    \text{[s][d][n]l-T}^{\text R} &= \text{[s][d][n]l-B} \\
    \text{[$\not\!\varepsilon$][s][d]nl-XT}^{\text R} &=
    \text{[$\not\!\varepsilon$][s][d]nl-XT} &
    \text{[s][d]nl-T}^{\text R} &= \text{[s][d]nl-T}
  \end{align*}
\end{theorem}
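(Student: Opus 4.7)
The theorem collects two qualitatively different families of equalities, and I would prove them by two separate lemmas that together imply all four displayed lines. The first family (lines 1 and 2) asserts the equivalence of linear top-down transducers with regular look-ahead and linear bottom-up transducers, with or without the extended property. The second family (lines 3 and 4) asserts that look-ahead is superfluous once one imposes nondeletion. The property tags $\not\!\varepsilon$, s, d are orthogonal in the sense that the constructions I outline do not change the $\varepsilon$-freeness, strictness, or delabeling status of a rule, so the `[$\cdot$]' brackets can be carried through uniformly.

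For the l-XT$^{\text R}$ $=$ l-XB equivalences I would give constructions in both directions. For l-XB $\subseteq$ l-XT$^{\text R}$, I would simulate a bottom-up run top-down by guessing, at every frontier position of a rule's left-hand side, which bottom-up state would have been assigned to the subtree there; the regular look-ahead $c(q)$ is set to the tree language of inputs accepted by the bottom-up device in state~$q$, which is a regular tree language since the bottom-up device without output is essentially a finite tree automaton. Since the bottom-up rule and the simulating top-down rule share the same input fragment and the same output fragment up to renaming of states, linearity and the tags s, d, n, $\not\!\varepsilon$ are inherited. For the converse l-XT$^{\text R}\subseteq$ l-XB, I would absorb the look-ahead into bottom-up states via a product construction: for each state~$q$ of the top-down transducer and each state~$A$ of a tree automaton for $c(q)$, introduce a bottom-up state $\langle q, A\rangle$ and adapt the rules accordingly; again the rule shapes are preserved. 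The non-extended (l-T$^{\text R}$ $=$ l-B) version is the special case where every left-hand side lies in $\Sigma(Q)$.

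For the second family, the argument is structural rather than constructive. Let $M=(Q,\Sigma,\Delta,I,R,c)$ be a nondeleting l-xt$^{\text R}$ and consider the l-xt $M'$ obtained by replacing $c$ with the trivial assignment $c'(q)=T_\Sigma$. I would show $M=M'$ by proving the invariant that in every sentential form $\langle\xi,D,\zeta\rangle$ reachable from $\langle q,\{(\varepsilon,\varepsilon)\},q\rangle$ with $q\in I$, every $v\in\pos_Q(\xi)$ admits some $(v,w)\in D$ with $w\in\pos_Q(\zeta)$. The base case is immediate, and in the induction step the only nontrivial case is a rule application with some $\rho=(\ell,q,r)\in R$: here the new state occurrences introduced into $\xi[\ell]_v$ sit at positions $v.\pos_p(\ell)$ with $p\in\var(\ell)$, and the nondeletion condition $\var(r)=\var(\ell)$ guarantees that each of them is linked by $\link_{v,w}(\rho)$ to a state occurrence in $\zeta[r]_w$. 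Consequently the second clause of Definition~\ref{df:Sem} is never enabled in any derivation of~$M$, so the choice of $c$ is irrelevant and $M$ and $M'$ compute the same tree transformation.

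The main obstacle is the extended case of the bottom-up-to-top-down direction, because an l-xb rule may inspect an input fragment of arbitrary height, so the naive top-down simulation cannot simply replace one input symbol at a time. I would handle this by letting the top-down simulator apply the \emph{same} input fragment as a left-hand side and planting look-ahead states at exactly those frontier positions of the fragment where the bottom-up rule required certain bottom-up states, which by the previous paragraph is a regular condition. Linearity ensures that this copying of the fragment is well-defined, and the other tags are preserved by inspection. Since the theorem is quoted verbatim from \cite[Proposition~3.3 and Corollary~4.1]{fulmalvog11}, the explicit constructions are available there and the sketch above indicates how they go.
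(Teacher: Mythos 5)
The paper does not prove Theorem~\ref{thm:TB} at all: it is imported by citation from \cite[Proposition~3.3 and Corollary~4.1]{fulmalvog11}, so there is no in-paper argument to compare against. Your two-lemma decomposition is nevertheless a faithful reconstruction of the standard proofs behind the cited result: the top-down/bottom-up equivalence by mutual simulation (guessing bottom-up states at the frontier of a left-hand side in one direction, absorbing the look-ahead automata into bottom-up states in the other, with the non-extended line as the special case $\ell \in \Sigma(Q)$), and the redundancy of look-ahead under nondeletion via the invariant that every state occurrence in the input side of a reachable sentential form is actively linked, so the second clause of Definition~\ref{df:Sem} never fires; that invariant argument is correct as you state it, since $\var(r) = \var(\ell)$ guarantees that $\link_{v,w}(\rho)$ links every newly created input state occurrence, and $\zeta|_w$ is a leaf, so no other links are disturbed. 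One caveat on your first construction: in this paper's formulation the look-ahead $c(q)$ is consulted \emph{only} at unlinked (deleted) state occurrences, so you cannot use it to verify the guessed bottom-up state at a frontier position that survives into the right-hand side; there the guess must instead be recorded in the simulating top-down state and verified by the continued derivation, with look-ahead reserved for the deleted positions (where it checks membership in the domain of the required bottom-up state, a regular language as you note). With that adjustment your sketch goes through, and the preservation of the tags $\mathord{\not{\!\varepsilon}}$, s, d, n under both simulations is immediate because the input and output fragments of rules are carried over unchanged.
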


To illustrate the consistent application of optional restrictions, the
following statements are instances of the first result of the above
theorem:
\[ \text{l-XT}^{\text R} = \text{l-XB} \qquad \text{and} \qquad
    \text{$\not\!\varepsilon$sl-XT}^{\text R} =
    \text{$\not\!\varepsilon$sl-XB} \enspace. \]

Secondly, we relate the class l-XT${}^{\text R}$ to l-T${}^{\text R}$, which
tells us how to emulate linear extended top-down tree transducers with
regular look-ahead by linear top-down tree transducers with regular
look-ahead~\cite{eng77}.

\begin{theorem} \upshape
  \label{thm:BBIM}
  \begin{align*}
    \text{$\not\!\varepsilon$[s][d][n]l-XT}^{\text R} &=
    \text{lcs-H}^{-1} \mathbin; \text{[s][d][n]l-T}^{\text R} &
    \text{[s][d][n]l-XT}^{\text R} &=
    \text{lc-H}^{-1} \mathbin; \text{[s][d][n]l-T}^{\text R}
  \end{align*}
\end{theorem}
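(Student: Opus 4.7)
The plan is to prove both inclusions by explicit, essentially dual constructions, treating the two equations in parallel since the only substantive difference is that strictness of the tree homomorphism corresponds precisely to $\varepsilon$-freeness of the l-xt${}^{\text R}$.

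For the $\subseteq$ direction, let $M = (Q, \Sigma, \Delta, I, R, c)$ be an l-xt${}^{\text R}$. I would introduce a fresh ranked alphabet $\Gamma$ containing, for each rule $\rho = (\ell, q, r) \in R$, a symbol $\gamma_\rho$ of rank $k = \abs{\var(\ell)}$. After fixing for each $\rho$ an enumeration $p_1^\rho, \dotsc, p_k^\rho$ of $\var(\ell)$, I would define a tree homomorphism $h$ from $T_\Gamma$ to $T_\Sigma$ by taking $h_k(\gamma_\rho)$ to be $\ell$ with each occurrence of $p_i^\rho$ replaced by $x_i$. Then $h$ is linear (because $\ell$ is linear in $Q$) and complete (because every $p_i^\rho$ occurs in $\ell$); moreover $h$ is strict iff $M$ is $\varepsilon$-free, since $h_k(\gamma_\rho) \in X_k$ is equivalent to $\ell \in Q$. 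The companion l-t${}^{\text R}$ is $N = (Q, \Gamma, \Delta, I, R', c')$ with one rule $\gamma_\rho(p_1^\rho, \dotsc, p_k^\rho) \stackrel{q}{\longrightarrow} r$ for each $\rho \in R$ and look-ahead $c'(p) = h^{-1}(c(p))$, which is regular since preimages of regular tree languages under tree homomorphisms are regular. The restrictions s, d, and n transfer transparently, because the left-hand sides of $N$'s rules lie in $\Gamma(Q)$ by construction and the right-hand sides are inherited from $M$. The identity $M = h^{-1} \mathbin; N$ is then verified by induction on derivation length, matching each application of $\rho$ in $M$ with the application of the corresponding rule of $N$ on some $s \in T_\Gamma$ with $h(s) = t$.

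For the $\supseteq$ direction, given a tree homomorphism $h \colon T_\Gamma \to T_\Sigma$ (linear complete, and in the first equation also strict) and an l-t${}^{\text R}$ $N = (Q, \Gamma, \Delta, I, R_N, c)$, I would build $M$ by translating each rule $\gamma(p_1, \dotsc, p_k) \stackrel{q}{\longrightarrow} r$ of $N$ into the rule $h_k(\gamma)[x_i \mapsto p_i] \stackrel{q}{\longrightarrow} r$ of $M$, and by taking as look-ahead of state $p$ the regular tree language $h(c(p))$, which is regular because linear tree homomorphisms preserve regularity. Linearity and completeness of $h$ guarantee that the new left-hand side is linear in $\{p_1, \dotsc, p_k\}$ and contains each $p_i$, so the result is a valid l-xt${}^{\text R}$ rule, and strictness of $h$ translates directly into $\varepsilon$-freeness of $M$. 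The properties s and n are inherited from $N$, and correctness is again a straightforward induction on derivation length.

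The main obstacle is preserving the delabeling restriction d in the $\supseteq$ direction: even when $h$ is lcs-H, the tree $h_k(\gamma)$ may have depth greater than one, so the directly translated rule has a left-hand side not in $\Sigma(Q)$. My plan for this case is to first normalize the construction by peeling off each internal $\Sigma$-layer of $h_k(\gamma)$ into its own depth-one rule with fresh intermediate states, whose look-ahead absorbs the constraints on the split-off subtrees, and then to verify that the resulting delabeling l-xt${}^{\text R}$ computes the same tree transformation. The dual direction does not suffer from this issue, because a delabeling $M$ already yields $h_k(\gamma_\rho) \in \Sigma(X_k)$ by construction.
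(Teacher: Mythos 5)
Your overall strategy---flattening the extended left-hand sides into an inverse linear complete tree homomorphism, with strictness of the homomorphism corresponding exactly to $\varepsilon$-freeness of the transducer---is the standard idea behind this equality; note that the paper itself carries out no construction at all but derives the statement from Theorem~\ref{thm:TB} and a cited lemma of~\cite{fulmalvog11}, so a self-contained proof along your lines would be a legitimate and arguably more informative alternative. However, as written your argument has a concrete gap in the $\subseteq$ direction. You let $\Gamma$ consist only of the symbols $\gamma_\rho$ for $\rho \in R$, so the image of~$h$ contains only trees that can be tiled by left-hand sides of rules of~$M$. But the input tree of a successful derivation of~$M$ also contains the subtrees substituted for deleted states, and these are \emph{arbitrary} members of $c(p) \subseteq T_\Sigma$; such a subtree need not have any $h$-preimage over your~$\Gamma$ (in the extreme case where every left-hand side contains a state, $\Gamma$~has no nullary symbols and $T_\Gamma = \emptyset$). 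Consequently $h^{-1}(c(p))$ need not witness all of~$c(p)$, the inclusion $M \subseteq h^{-1} \mathbin; N$ fails, and your induction breaks precisely at the look-ahead steps of Definition~\ref{df:Sem}. The repair is routine---add to~$\Gamma$ a disjoint copy of~$\Sigma$ on which $h$~acts as the identity (this preserves linearity, completeness, and strictness), so that these extra symbols occur only below deleted states---but it has to be done.

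The second weak point is the delabeling option in the $\supseteq$ direction. Your proposed peeling of a deep $h_k(\gamma)$ into depth-one rules necessarily introduces intermediate rules whose right-hand side is a bare state (they consume an input symbol while emitting no output), which destroys strictness, so the [s][d] combination is not covered by your sketch; and when several variables of~$h_k(\gamma)$ lie in the same direct subtree of its root but must appear as distinct children of the single output symbol of~$r$, linearity prevents the root rule from distributing them, while descending first with state-only right-hand sides deletes the sibling subtrees. You should either work this case out in full, or state explicitly which bracket instances your construction establishes. For the instances without~[d], your two constructions---once the surjectivity issue above is repaired---do yield a correct proof.
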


\begin{proof}
  The statements are easily obtained from Theorem~\ref{thm:TB} and
  \cite[Lemma~4.1]{fulmalvog11}.
\end{proof}

\section{Four classes that are closed at a finite power}
\label{sec:upper}
In this section, we show that the classes $\text{$\mathord{\not{\!\varepsilon}}$l-XT}^{\text
  R}$, $\text{$\mathord{\not{\!\varepsilon}}$l-XT}$,
$\text{$\mathord{\not{\!\varepsilon}}$sl-XT}^{\text R}$, and
$\text{$\mathord{\not{\!\varepsilon}}$sl-XT}$ are closed under
composition at a finite power.   We first recall a
central result~\cite{arndau82} that proves that the (very restricted)
class $\text{$\mathord{\not{\!\varepsilon}}$snl-XT}$ is closed under
composition at power~2.  Note that \cite{arndau82}~expresses this
result in terms of bimorphisms, and `strict' is abbreviated
by~`e' there.  In fact,
$\text{$\mathord{\not{\!\varepsilon}}$snl-XT}=\text{B}(\text{lcs,
  lcs})$ by~\cite{arndau76} and \cite[Theorem~4]{mal07e}.

\begin{theorem}[{\protect{\cite[Theorem~6.2]{arndau82}}}] \upshape
  \label{thm:norm}
  For every $n \geq 2$,
  \[ \text{$\mathord{\not{\!\varepsilon}}$snl-XT} \subsetneq
  \text{$\mathord{\not{\!\varepsilon}}$snl-XT}^2 = 
  \text{$\mathord{\not{\!\varepsilon}}$snl-XT}^n \enspace. \]
\end{theorem}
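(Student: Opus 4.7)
The plan is to exploit the bimorphism characterization $\text{$\mathord{\not{\!\varepsilon}}$snl-XT} = \text{B}(\text{lcs}, \text{lcs})$ noted just before the theorem. Every transformation in the class then has the form $\psi^{-1} \mathbin; \varphi$ restricted to a regular ``middle'' tree language~$L$, with $\psi$ and $\varphi$ both linear, complete, and strict. Under this reformulation, the equality $\text{$\mathord{\not{\!\varepsilon}}$snl-XT}^2 = \text{$\mathord{\not{\!\varepsilon}}$snl-XT}^n$ for $n \geq 2$ reduces by induction on~$n$ to the single inclusion $\text{$\mathord{\not{\!\varepsilon}}$snl-XT}^3 \subseteq \text{$\mathord{\not{\!\varepsilon}}$snl-XT}^2$, which in turn will follow from a commutation law stating that $\text{lcs-H} \mathbin; \text{lcs-H}^{-1}$ is itself expressible as $\text{B}(\text{lcs}, \text{lcs})$.

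Thus the central step I would prove is the following pullback lemma: for any lcs homomorphisms $\varphi_i \colon T_{\Sigma_i} \to T_\Delta$ with $i \in \{1,2\}$, the relation $\{(t_1, t_2) \mid \varphi_1(t_1) = \varphi_2(t_2)\}$ belongs to $\text{B}(\text{lcs}, \text{lcs})$. The construction uses a product alphabet $\Gamma$ whose symbols record compatible ``matchings'' of the output patterns $\varphi_1(\sigma_1)$ and $\varphi_2(\sigma_2)$ up to their variable positions; the two projections $\pi_i \colon T_\Gamma \to T_{\Sigma_i}$ that forget the $\Sigma_{3-i}$-component are lcs by construction, while a regular tree language $L \subseteq T_\Gamma$ enforces that adjacent matchings compose coherently. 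Granted this lemma, a chain $\psi_1^{-1} \mathbin; \varphi_1 \mathbin; \psi_2^{-1} \mathbin; \varphi_2 \mathbin; \psi_3^{-1} \mathbin; \varphi_3$ is rewritten by replacing the middle block $\varphi_1 \mathbin; \psi_2^{-1}$ by $\pi_1^{-1} \mathbin; \pi_2$, then using that lcs-H is closed under ordinary composition to fuse $\psi_1^{-1} \mathbin; \pi_1^{-1} = (\pi_1 \mathbin; \psi_1)^{-1}$ and $\pi_2 \mathbin; \varphi_2$; the three regular middle languages combine via intersection (which preserves regularity, as recalled in Section~\ref{sec:prelim}). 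Iterating this rewriting collapses any $n$-fold composition to a $2$-fold one, establishing the equalities.

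For the strict inclusion $\text{$\mathord{\not{\!\varepsilon}}$snl-XT} \subsetneq \text{$\mathord{\not{\!\varepsilon}}$snl-XT}^2$ I would exhibit a concrete witness, namely a two-stage transformation that synchronises the sizes of two remote output subtrees through an intermediate tree common to both. A pumping-style argument on a hypothetical single lcs-bimorphism, exploiting the rank-bounded local patterns of $\psi$ and $\varphi$ together with strictness of both, yields a contradiction to such simultaneous unbounded counting at independent positions. The main obstacle, however, is the commutation lemma itself: because $\varphi_1(\sigma_1)$ and $\varphi_2(\sigma_2)$ may have incompatible shapes as $\Delta$-contexts, a single pair $(\sigma_1, \sigma_2)$ is insufficient, and one must recursively build up partial matchings while proving that only finitely many essentially distinct matchings arise. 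Strictness is exactly what guarantees that every such matching step consumes at least one $\Delta$-node on both sides, bounding $\Gamma$; linearity prevents the pullback from encoding non-regular equality constraints between separate subtree occurrences; and completeness ensures that no variable is dropped, so every matching extends coherently throughout the tree.
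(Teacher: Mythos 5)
You should first be aware that the paper does not prove Theorem~\ref{thm:norm} at all: it is imported from Arnold and Dauchet as cited in the theorem header, and the surrounding text only supplies the translation $\text{$\mathord{\not{\!\varepsilon}}$snl-XT} = \text{B}(\text{lcs},\text{lcs})$. So your attempt is measured against that literature result rather than against an in-paper argument, and unfortunately its central step cannot be repaired. Your ``pullback lemma'' asserts $\text{lcs-H} \mathbin; \text{lcs-H}^{-1} \subseteq \text{B}(\text{lcs},\text{lcs})$. Together with the bookkeeping you correctly describe --- fusing $\psi_1^{-1} \mathbin; \pi_1^{-1}$ into $(\pi_1 \mathbin; \psi_1)^{-1}$, fusing $\pi_2 \mathbin; \varphi_2$ (lcs-H is indeed closed under composition), and pulling the regular middle languages through inverse homomorphisms and intersecting --- this rewriting applies verbatim to a chain of length \emph{two}: $\psi_1^{-1} \mathbin; \varphi_1 \mathbin; \psi_2^{-1} \mathbin; \varphi_2$ collapses to a single lcs-bimorphism. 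That would prove $\text{$\mathord{\not{\!\varepsilon}}$snl-XT} = \text{$\mathord{\not{\!\varepsilon}}$snl-XT}^2$, contradicting the strict inclusion that is part of the very statement (and that you also undertake to prove). Your lemma proves too much, so it is false; indeed its failure is \emph{equivalent} to the non-closure at power~$1$.

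Concretely, the lemma breaks where your own closing discussion is most optimistic. Given $\varphi_1(t_1) = \psi_2(t_2) = u$, the two tilings of~$u$ by the patterns $\varphi_1(\sigma)$ and $\psi_2(\delta)$ can be sheared against each other: a single tile of the first tiling may properly overlap several tiles of the second and vice versa, with the offset propagating along paths. The natural middle tree is the common refinement of the two tilings, but by \emph{linearity} of the projections only one cell of that refinement may emit a given $\Sigma_1$-symbol, so the remaining cells inside the same $\varphi_1$-tile must erase on the $\Sigma_1$-side; the resulting projections are linear and complete but \emph{not strict}, and strictness cannot in general be restored. This is exactly the obstruction that forces the hierarchy to stabilise at power~$2$ rather than~$1$, and it is why Arnold and Dauchet's proof proceeds via a normal form for a three-fold alternation of homomorphisms rather than via a local commutation of $\text{lcs-H} \mathbin; \text{lcs-H}^{-1}$; the paper's own machinery for the larger classes (Lemmas \ref{lm:p} and~\ref{lm:del}, which factor out a delabeling transducer and then invoke Theorem~\ref{thm:norm}) reflects the same phenomenon. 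A secondary concern: your separating witness ``synchronising the sizes of two remote output subtrees through a common intermediate tree'' suggests a copying constraint, which compositions of \emph{linear} transducers cannot realise either, so such a relation would not witness $\text{$\mathord{\not{\!\varepsilon}}$snl-XT} \subsetneq \text{$\mathord{\not{\!\varepsilon}}$snl-XT}^2$; the known separating examples concern the relative arrangement of translated pieces (the shearing above), not subtree sizes.
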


Now we establish our first composition result, which is analogous to
the classical composition result for linear top-down tree transducers
with regular look-ahead~\cite{eng77}.  The only difference is that our
first transducer has extended left-hand sides (i.e., it is an
l-xt${}^{\text R}$ instead of just an l-t${}^{\text R}$).  Since this
fact does not affect the original composition
construction~\cite{eng75}, we can use the original result to obtain
our first result.

\begin{lemma}[composition on the right] \upshape
  \label{lm:top}
  \[ \text{[$\mathord{\not{\!\varepsilon}}$][s][d][n]l-XT}^{\text R}
  \mathbin; \text{[s][d][n]l-T}^{\text R} =
  \text{[$\mathord{\not{\!\varepsilon}}]$[s][d][n]l-XT}^{\text R} \]
\end{lemma}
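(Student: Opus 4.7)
The plan is to reduce this lemma to the classical composition theorem for linear top-down tree transducers with regular look-ahead~\cite{eng75}, using the bimorphism characterization of Theorem~\ref{thm:BBIM} as the bridge between the extended and non-extended worlds. The observation is that any l-xt${}^{\text R}$ factors as an inverse linear complete tree homomorphism followed by an l-t${}^{\text R}$, and this $g^{-1}$ prefix is completely inert when composing another l-t${}^{\text R}$ on the right, so the problem collapses to composing two l-t${}^{\text R}$.

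For the inclusion~$\subseteq$, I fix any combination of the optional restrictions and take $\tau_1 \in \text{[$\mathord{\not{\!\varepsilon}}$][s][d][n]l-XT}^{\text R}$ together with $\tau_2 \in \text{[s][d][n]l-T}^{\text R}$. By Theorem~\ref{thm:BBIM} (its first equation when $\not\!\varepsilon$ is imposed, its second one otherwise), I factor $\tau_1 = g^{-1} \mathbin; \tau_1'$, where $g$ is a linear complete tree homomorphism (additionally strict in the $\not\!\varepsilon$-case) and $\tau_1' \in \text{[s][d][n]l-T}^{\text R}$. Then
\[ \tau_1 \mathbin; \tau_2 = g^{-1} \mathbin; (\tau_1' \mathbin; \tau_2), \]
and the classical Engelfriet composition theorem~\cite{eng75} delivers $\tau_1' \mathbin; \tau_2 \in \text{[s][d][n]l-T}^{\text R}$ with the restrictions~s, d, and~n preserved in every combination. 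Applying Theorem~\ref{thm:BBIM} in reverse reabsorbs the $g^{-1}$ prefix into the l-XT${}^{\text R}$ formalism, placing $\tau_1 \mathbin; \tau_2$ back in the claimed class on the right-hand side.

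The reverse inclusion~$\supseteq$ uses the fact (noted in Section~\ref{sec:prelim}) that each of our classes contains the identity relations, and in particular the identity on the intermediate trees lies in $\text{[s][d][n]l-T}^{\text R}$; composing on the right with this identity trivially embeds $\text{[$\mathord{\not{\!\varepsilon}}$][s][d][n]l-XT}^{\text R}$ into the composition class. The main obstacle is verifying that the classical product construction~\cite{eng75} respects each combination of the optional restrictions~s, d, and~n, which I expect to be routine: delabeling is maintained because both the right-hand side of each $\tau_1'$-rule and each single derivation step of $\tau_2$ contribute at most one output symbol; strictness is maintained because neither contributes a bare state when strict; and nondeletion follows because no state is discarded when both transducers are nondeleting.
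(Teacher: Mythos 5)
Your proposal is correct and follows essentially the same route as the paper: factor the l-xt${}^{\text R}$ as $\text{lc[s]-H}^{-1} \mathbin; \text{[s][d][n]l-T}^{\text R}$ via Theorem~\ref{thm:BBIM}, absorb the second transducer using the classical composition closure $\text{[s][d][n]l-T}^{\text R} \mathbin; \text{[s][d][n]l-T}^{\text R} = \text{[s][d][n]l-T}^{\text R}$ of Engelfriet, and apply Theorem~\ref{thm:BBIM} in reverse. The only cosmetic difference is that the paper writes this as a single chain of class equalities (so both inclusions come for free), whereas you argue the $\supseteq$ direction separately via the identity transducer; both are fine.
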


\begin{proof}
  By Theorem~\ref{thm:BBIM},
  $\text{[$\mathord{\not{\!\varepsilon}}$][s][d][n]l-XT}^{\text R} =
  \text{lc[s]-H}^{-1} \mathbin; \text{[s][d][n]l-T}^{\text R}$, where
  the option~[$\mathord{\not{\!\varepsilon}}$] in the left-hand side
  corresponds to the option~[s] in~$\text{lc[s]-H}^{-1}$.  Thus, we
  obtain the chain of equalities
  \begin{align*}
    \text{[$\mathord{\not{\!\varepsilon}}$][s][d][n]l-XT}^{\text R}
    \mathbin; \text{[s][d][n]l-T}^{\text R} 
    &= \text{lc[s]-H}^{-1} \mathbin; \text{[s][d][n]l-T}^{\text R} \mathbin;
    \text{[s][d][n]l-T}^{\text R} \\
    &= \text{lc[s]-H}^{-1} \mathbin; \text{[s][d][n]l-T}^{\text R} \\
    &= \text{[$\mathord{\not{\!\varepsilon}}$][s][d][n]l-XT}^{\text R}
    \enspace,
  \end{align*}
  where in the second step we applied the following well-known
  composition result for linear top-down tree transducers with regular
  look-ahead~\cite{eng77}:\footnote{The abbreviation~`d' has a
    completely different meaning in~\protect{\cite{eng77}}.}
  \[ \text{[s][d][n]l-T}^{\text R} \mathbin;
  \text{[s][d][n]l-T}^{\text R} =
  \text{[s][d][n]l-T}^{\text R} \enspace. \tag*{\qed} \]
\end{proof}

In the next result we present a decomposition that corresponds to
property~P of~\cite{dau77}.  It demonstrates how to simulate an
l-xt${}^{\text R}$ by a delabeling l-d${}^{\text R}$ and an
$\text{$\mathord{\not{\!\varepsilon}}$snl-xt}$, for which we have
the composition closure result in Theorem~\ref{thm:norm}. 

\begin{lemma}[decomposition] \upshape
  \label{lm:p}
  \[ \text{$\mathord{\not{\!\varepsilon}}$[s]l-XT}^{\text R}
  \subseteq \text{[s]dl-T}^{\text R} \mathbin;
  \text{$\mathord{\not{\!\varepsilon}}$snl-XT} \]
\end{lemma}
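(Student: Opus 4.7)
The plan is: given an $\varepsilon$-free l-xt${}^{\text R}$ $M=(Q,\Sigma,\Delta,I,R,c)$, build a delabeling l-t${}^{\text R}$ $M_1$ over an intermediate alphabet $\Gamma$ and a nondeleting strict $\varepsilon$-free l-xt $M_2$ (without look-ahead) such that $M=M_1\mathbin; M_2$, and so that $M_1$ is strict whenever $M$ is. Intuitively, $M_1$ performs all the top-down pattern matching prescribed by the (extended) left-hand sides of $M$'s rules --- discharging look-ahead for deleted subtrees and completely consuming the patterns of any pass-through rules (those with right-hand side in $Q$) --- and emits an annotated skeleton that tags every matched input position with the $M$-rule being applied. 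Then $M_2$ simply reads this skeleton and emits the right-hand sides of the strict rules of $M$.

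For every strict rule $\rho=(\ell,q,r)\in R$ (i.e., $r\notin Q$) and every $v\in\pos_\Sigma(\ell)$, I would introduce a fresh intermediate symbol $\bar\sigma_{\rho,v}\in\Gamma$ of rank equal to the number of \emph{used} children at $v$, where child $vi$ is used iff $\ell(vi)\in\Sigma$ or $\ell(vi)\in\var(r)$. The state set of $M_1$ would be $Q\cup\{(\rho,v)\mid\rho\in R,\,v\in\pos_\Sigma(\ell_\rho)\setminus\{\varepsilon\}\}$ together with fresh sink states that only carry look-ahead. For each such $\rho$ and $v$, $M_1$ receives a delabeling rule whose left-hand side is $\ell(v)(\dotsc)$ and whose right-hand side is $\bar\sigma_{\rho,v}(\dotsc)$, where each child $vi$ is placed on the left as either a tracker $(\rho,vi)$ when $\ell(vi)\in\Sigma$, the state $p$ itself when $\ell(vi)=p\in\var(r)$, or a sink state with look-ahead $c(p)$ when $\ell(vi)=p\in\var(\ell)\setminus\var(r)$, and only the first two kinds are kept on the right. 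For a pass-through rule $\rho=(\ell,q,q')$, $M_1$ instead gets rules along the unique path from the root of $\ell$ to the occurrence of $q'$ whose right-hand side is just a state (so that the pass-through leaves no mark in the intermediate tree); the off-path subtrees $\ell|_{vi}$ are dispatched by sink states whose look-aheads are the regular tree languages $\ell|_{vi}\theta$ with $\theta(p)=c(p)$ (regular by the substitution result recalled in Section~\ref{sec:prelim}).

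The transducer $M_2$ would have state set $Q$ and initial states $I$. For every strict rule $\rho^*=(\ell^*,q^*,r^*)\in R$, let $\bar\ell_{\rho^*}$ be the tree obtained from $\ell^*$ by replacing every $\Sigma$-labeled position $v$ with $\bar\sigma_{\rho^*,v}$ and by pruning the children labeled by states in $\var(\ell^*)\setminus\var(r^*)$. I would add to $M_2$ the rule $(\bar\ell_{\rho^*},q,r^*)$ for every $q\in Q$ from which $q^*$ is reachable by a (possibly empty) chain of pass-through rules of $M$. By construction $\bar\ell_{\rho^*}\notin Q$, $r^*\notin Q$, and $\var(\bar\ell_{\rho^*})=\var(r^*)$, while linearity is inherited from $\ell^*$ and $r^*$, so $M_2$ is $\varepsilon$-free, strict, nondeleting, and linear. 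When $M$ is strict, the pass-through machinery is vacuous, every $M_1$-rule has right-hand side in $\Gamma(Q_1)$, and $M_1$ is strict as required by the [s]-option.

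What remains is the correctness argument $M=M_1\mathbin; M_2$. The forward inclusion is obtained by factoring each successful $M$-derivation into one $M_1$-step per $\Sigma$-position of each used left-hand side (with deletions realized via look-ahead) followed by one $M_2$-step per used strict rule of $M$ (with any preceding pass-through chain absorbed into the source state of the applied $M_2$-rule); the intermediate tree is precisely the skeleton built from the $\bar\ell_{\rho^*}$. The reverse inclusion reads such a factorization out of any matching pair of $M_1$- and $M_2$-derivations whose intermediate tree coincides. The main obstacle, in my view, is the bookkeeping around pass-through rules: one must verify that pre-composing every $M_2$-rule with every pass-through chain in $M$, together with $M_1$'s pattern-absorbing rules and their derived off-path look-aheads, reproduces the tree transformation of $M$ exactly --- neither more nor less --- and that the regular look-aheads used for the off-path branches really coincide with the set of input trees admitted by the matching plus look-ahead of~$M$.
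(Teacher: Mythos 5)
Your construction is essentially the paper's: it factors $M$ into a delabeling l-t${}^{\text{R}}$ that matches the extended left-hand sides (discharging deleted subtrees via regular look-ahead and silently executing the non-strict pass-through rules) and emits a rule-annotated skeleton over a fresh alphabet, followed by a nondeleting strict $\varepsilon$-free l-xt that rewrites each skeleton fragment into the corresponding right-hand side; the paper differs only in inessential details, tagging just the root of each fragment with the rule name and using anonymous symbols $@_i$ of bounded rank below it, and obtaining the l-t${}^{\text{R}}$ by eliminating non-cyclic $\varepsilon$-rules rather than building it directly on the state set $Q$ plus trackers as you do. Your explicit reachability closure over pass-through chains in the source states of the second transducer's rules is a sensible way to handle the state-change bookkeeping that the paper's construction leaves implicit, and the correctness argument you defer is likewise left at the level of intuition in the paper.
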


\begin{proof}
  Let $M = (Q, \Sigma, \Delta, I, R, c)$ be an arbitrary
  $\varepsilon$-free l-xt${}^{\text R}$.  Moreover, let $m \in \nat$
  be such that $m \geq \abs{\var(r)}$ for every $(\ell, q, r) \in R$;
  i.e., the integer~$m$ is larger than the maximal number of states in
  the right-hand side of the rules.  For every rule $\rho = (\ell, q,
  r) \in R$ and non-state position $w \in \pos_\Sigma(\ell)$ in its
  left-hand side, let
  \[ \used_\rho(w) = \{ i \in \nat \mid wi \in \pos(\ell), \var(\ell|_{wi})
  \cap \var(r) \neq \emptyset\} \] be the indices of the direct
  subtrees below~$w$ in~$\ell$, which still contain states that occur
  in~$r$.  We construct a delabeling l-xt${}^{\text R}$
  \[ M_1 = (Q_1, \Sigma, R \cup \{@_i \mid 0 \leq i \leq m\}, I_1,
  R_1, c_1) \] such that
  \begin{compactitem}
  \item $Q_1 = \{ \langle \rho, w\rangle \mid \rho = (\ell, q, r) \in
    R, w \in \pos(\ell) \}$,  
  \item $\rk(\rho) = \abs{\used_\rho(\varepsilon)}$ for every rule
    $\rho \in R$ and $\rk(@_i) = i$ for every $0 \leq i \leq m$, 
  \item $I_1 = \{ \langle \rho, \varepsilon \rangle \mid q \in I, \rho
    \in R_q\}$, and
  \item for every rule $\rho = (\ell, q, r) \in R$ and non-state
    position $w \in \pos_\Sigma(\ell)$ we construct the following rule
    of~$R_1$:
    \[ \ell(w)(\langle \rho, w1\rangle, \dotsc, \langle \rho,
    wk\rangle) \stackrel{\langle \rho, w\rangle}\longrightarrow
    \begin{cases}
      \langle \rho, wi_1 \rangle & \text{if } r \in Q \\
      \rho(\langle \rho, wi_1\rangle, \dotsc, \langle \rho,
      wi_n\rangle) & \text{if } r \notin Q, w = \varepsilon \\
      @_n(\langle \rho, wi_1\rangle, \dotsc, \langle \rho,
      wi_n\rangle) & \text{otherwise,}
    \end{cases}
    \]
    where $k = \rk(\ell(w))$ and $\{\seq i1n\} = \used_\rho(w)$ with
    $i_1 < \dotsb < i_n$, 
  \item for every rule $\rho = (\ell, q, r) \in R$, state position~$w
    \in \pos_Q(\ell)$ and rule $\rho' \in R_{\ell(w)}$ we construct
    the following rule of~$R_1$:
    \[ \langle \rho', \varepsilon\rangle \stackrel{\langle \rho,
      w\rangle} \longrightarrow \langle \rho', \varepsilon\rangle
    \enspace, \]
  \item $c_1(\langle \rho, w\rangle) = (\ell|_w) c$ for every $\rho =
    (\ell, q, r) \in R$ and $w \in \pos(\ell)$.
  \end{compactitem}
  To obtain the desired l-t${}^{\text R}$ we simply eliminate the
  $\varepsilon$-rules using standard methods.  The elimination is
  successful because the $\varepsilon$-rules are
  non-cyclic.\footnote{Note that due to the $\varepsilon$-freeness
    of~$M$, we have $w \neq \varepsilon$ in the $\varepsilon$-rules of
    the second item.  Since these rules are the only constructed
    $\varepsilon$-rules, we cannot chain two $\varepsilon$-rules.}
  Intuitively speaking, the transducer $M_1$ processes the input and
  deletes subtrees that are not necessary for further processing.
  Moreover, it marks the positions in the input where a rule
  application would be possible.  Finally, it also already executes
  all erasing rules of~$M$.

  Let $m' \in \nat$ be such that $m' \geq \abs{\ell}$ for all $(\ell,
  q, r) \in R$.  We construct the l-xt
  \[ M_2 = (Q, R \cup \{@_i \mid 0 \leq i \leq m\}, \Delta, I, R_2) \]
  such that $R_2$~contains all valid rules $(\rho(\seq t1k), q, r)$
  with $\rho = (\ell, q, r) \in R$ of a strict ln-xt with $\pos_R(t_i)
  = \emptyset$ and $\abs{t_i} \leq m'$ for every $1 \leq i \leq k$,
  where $k = \rk(\rho)$. \qed
\end{proof}

\begin{example}
  \label{ex:decomp}
  A full example for the construction of Lemma~\ref{lm:p} would be
  quite lengthy, so let us only illustrate the construction of~$M_1$
  on the example rule~$\rho$:
  \[ \sigma(p, \sigma(\alpha, q)) \stackrel q\longrightarrow
  \sigma(\alpha, \sigma(q, \alpha)) \enspace, \]
  for which the only non-trivial look-ahead is~$c(p) = L$.  For this
  rule we construct the following rules in~$M_1$:
  \begin{align*}
    \sigma(\langle \rho, 1\rangle, \langle \rho, 2\rangle)
    &\stackrel{\langle \rho, \varepsilon\rangle}\longrightarrow
    \rho(\langle \rho, 2\rangle) &  & &
    \langle \rho', \varepsilon\rangle &\stackrel{\langle \rho,
      1\rangle} \longrightarrow \langle \rho', \varepsilon\rangle \\*
    \sigma(\langle \rho, 21\rangle, \langle \rho, 22\rangle)
    &\stackrel{\langle \rho, 2\rangle} \longrightarrow @_1(\langle \rho,
    22\rangle) & \alpha &\stackrel{\langle \rho, 21\rangle}
    \longrightarrow @_0 & \langle \rho'', \varepsilon\rangle
    &\stackrel{\langle \rho, 22\rangle} \longrightarrow \langle
    \rho'', \varepsilon\rangle
  \end{align*}
  for all rules $\rho' \in R_p$ and $\rho'' \in R_q$.  Moreover, the
  look-ahead~$c_1$ of~$M_1$ is such that
  \[ c_1(\langle \rho, 1\rangle) = L \qquad \qquad c_1(\langle \rho,
  21\rangle) = \{\alpha\} \enspace. \]
\end{example}

Now we are able to prove that the class
$\text{$\mathord{\not{\!\varepsilon}}$l-XT}^{\text R}$ is closed under
composition at the third power.

\begin{figure}[t]
  \centering
  \includegraphics[scale=1]{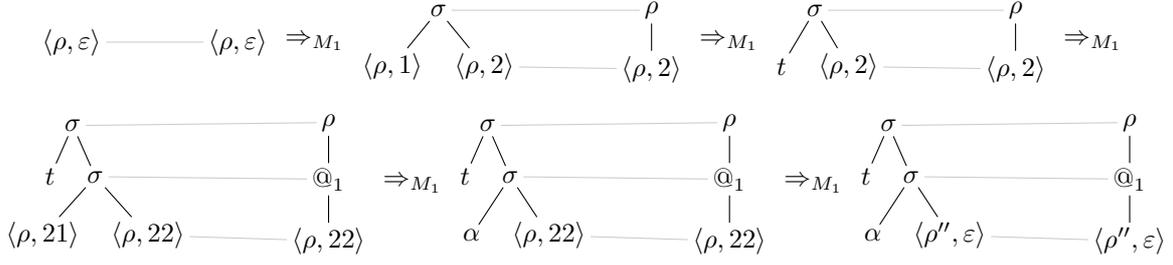}
  \caption{Derivation using~$M_1$ of Example~\ref{ex:decomp} for some
    $t \in L$ and $\rho'' \in R_q$.}
  \label{fig:xDeriv2}
\end{figure}

\begin{theorem} \upshape
  \label{thm:lR}
  For every $n \geq 1$,
  \[ (\text{$\mathord{\not{\!\varepsilon}}$[s]l-XT}^{\text R})^n
  \subseteq \text{[s]dl-T}^{\text R} \mathbin;
  \text{$\mathord{\not{\!\varepsilon}}$snl-XT}^2 \subseteq
  (\text{$\mathord{\not{\!\varepsilon}}$[s]l-XT}^{\text R})^3 \enspace. \]
\end{theorem}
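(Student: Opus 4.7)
The plan is to handle the two inclusions separately. The right-hand inclusion $\text{[s]dl-T}^{\text R} \mathbin; \text{$\mathord{\not{\!\varepsilon}}$snl-XT}^2 \subseteq (\text{$\mathord{\not{\!\varepsilon}}$[s]l-XT}^{\text R})^3$ is essentially trivial: every linear top-down tree transducer with regular look-ahead is automatically $\varepsilon$-free (noted right after Definition~\ref{df:xtop}), hence $\text{[s]dl-T}^{\text R} \subseteq \text{$\mathord{\not{\!\varepsilon}}$[s]l-XT}^{\text R}$, and $\text{$\mathord{\not{\!\varepsilon}}$snl-XT} \subseteq \text{$\mathord{\not{\!\varepsilon}}$sl-XT}^{\text R} \subseteq \text{$\mathord{\not{\!\varepsilon}}$[s]l-XT}^{\text R}$ just by dropping restrictions, so a composition of one factor from $\text{[s]dl-T}^{\text R}$ and two from $\text{$\mathord{\not{\!\varepsilon}}$snl-XT}$ lies in $(\text{$\mathord{\not{\!\varepsilon}}$[s]l-XT}^{\text R})^3$.

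The left-hand inclusion rests on establishing the commutation
\[ \text{$\mathord{\not{\!\varepsilon}}$snl-XT} \mathbin; \text{[s]dl-T}^{\text R} \subseteq \text{[s]dl-T}^{\text R} \mathbin; \text{$\mathord{\not{\!\varepsilon}}$snl-XT} \enspace, \]
which is the main step and where I expect the main difficulty. My approach is to embed the left factor into $\text{$\mathord{\not{\!\varepsilon}}$[s]l-XT}^{\text R}$ and drop the delabeling restriction on the right factor to pass to $\text{[s]l-T}^{\text R}$. Then Lemma~\ref{lm:top} with only the options [$\mathord{\not{\!\varepsilon}}$] and [s] selected (but neither [d] nor [n]) collapses the product to $\text{$\mathord{\not{\!\varepsilon}}$[s]l-XT}^{\text R}$, and Lemma~\ref{lm:p} re-decomposes it as $\text{[s]dl-T}^{\text R} \mathbin; \text{$\mathord{\not{\!\varepsilon}}$snl-XT}$. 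The subtlety is to keep the bracketed options consistent throughout; in particular one must notice that Lemma~\ref{lm:top} cannot carry the [d] or [n] options here because the right factor of the composition is neither delabeling nor nondeleting, so those restrictions must be temporarily dropped before the invocation.

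With the commutation in hand, the rest is routine. Lemma~\ref{lm:p} gives $\text{$\mathord{\not{\!\varepsilon}}$[s]l-XT}^{\text R} \subseteq \text{[s]dl-T}^{\text R} \mathbin; \text{$\mathord{\not{\!\varepsilon}}$snl-XT}$, hence $(\text{$\mathord{\not{\!\varepsilon}}$[s]l-XT}^{\text R})^n \subseteq (\text{[s]dl-T}^{\text R} \mathbin; \text{$\mathord{\not{\!\varepsilon}}$snl-XT})^n$; repeated application of the commutation pushes every $\text{[s]dl-T}^{\text R}$ factor leftward past all $\text{$\mathord{\not{\!\varepsilon}}$snl-XT}$ factors to its left, yielding $(\text{[s]dl-T}^{\text R})^n \mathbin; \text{$\mathord{\not{\!\varepsilon}}$snl-XT}^n$. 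The first part collapses to $\text{[s]dl-T}^{\text R}$ by the well-known composition closure of linear top-down tree transducers with regular look-ahead (the fact invoked inside the proof of Lemma~\ref{lm:top}), and Theorem~\ref{thm:norm} collapses $\text{$\mathord{\not{\!\varepsilon}}$snl-XT}^n$ to $\text{$\mathord{\not{\!\varepsilon}}$snl-XT}^2$ for $n \geq 2$; the case $n = 1$ is absorbed by $\text{$\mathord{\not{\!\varepsilon}}$snl-XT} \subseteq \text{$\mathord{\not{\!\varepsilon}}$snl-XT}^2$, since identity transformations belong to $\text{$\mathord{\not{\!\varepsilon}}$snl-XT}$.
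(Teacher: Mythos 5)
Your proposal is correct and takes essentially the same route as the paper: the paper proves the first inclusion by induction on $n$, and its induction step is precisely your commutation $\text{$\mathord{\not{\!\varepsilon}}$snl-XT} \mathbin; \text{[s]dl-T}^{\text R} \subseteq \text{[s]dl-T}^{\text R} \mathbin; \text{$\mathord{\not{\!\varepsilon}}$snl-XT}$ carried out via Lemma~\ref{lm:top} followed by Lemma~\ref{lm:p}, with Theorem~\ref{thm:norm} collapsing the accumulated $\text{$\mathord{\not{\!\varepsilon}}$snl-XT}$ factors. The difference is purely organizational (an explicit commutation step iterated to sort the factors, versus induction on $n$), and your care with the bracketed options matches what the paper's invocations of Lemma~\ref{lm:top} implicitly require.
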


\begin{proof}
  The second inclusion is trivial, so we prove the first inclusion by
  induction over~$n$.  The statement is a trivial consequence of
  Lemma~\ref{lm:p} for~$n = 1$.  In the induction step, we obtain
  \begin{align*}
    (\text{$\mathord{\not{\!\varepsilon}}$[s]l-XT}^{\text R})^{n+1} &=
    \text{$\mathord{\not{\!\varepsilon}}$[s]l-XT}^{\text R} \mathbin;
    (\text{$\mathord{\not{\!\varepsilon}}$[s]l-XT}^{\text R})^n \\
    &\subseteq \text{$\mathord{\not{\!\varepsilon}}$[s]l-XT}^{\text R}
    \mathbin; \text{[s]dl-T}^{\text R} \mathbin;
    \text{$\mathord{\not{\!\varepsilon}}$snl-XT}^2 \\
    &= \text{$\mathord{\not{\!\varepsilon}}$[s]l-XT}^{\text R} \mathbin;
    \text{$\mathord{\not{\!\varepsilon}}$snl-XT}^2 \\
    &\subseteq \text{[s]dl-T}^{\text R} \mathbin;
    \text{$\mathord{\not{\!\varepsilon}}$snl-XT} \mathbin;
    \text{$\mathord{\not{\!\varepsilon}}$snl-XT}^2 \\
    &= \text{[s]dl-T}^{\text R} \mathbin;
    \text{$\mathord{\not{\!\varepsilon}}$snl-XT}^2
  \end{align*}
  where we used the induction hypothesis in the second step,
  Lemma~\ref{lm:top} in the third step, Lemma~\ref{lm:p} in the fourth
  step, and finally, Theorem~\ref{thm:norm} in the last step. \qed
\end{proof}

It is known that we can simulate an l-t${}^{\text R}$ (with
look-ahead) by a composition of two l-t (without look-ahead). This fact
allows us to obtain the following corollary for the closure 
of the class~$\text{$\mathord{\not{\!\varepsilon}}$l-XT}$ under composition at power four.

\begin{corollary}[without look-ahead] \upshape
  \label{cor:l}
  For every $n \geq 1$
  \[ \text{$\mathord{\not{\!\varepsilon}}$[s]l-XT}^n \subseteq
  \text{QR} \mathbin; \text{[s]dl-T} \mathbin;
  \text{$\mathord{\not{\!\varepsilon}}$snl-XT}^2 \subseteq
  \text{$\mathord{\not{\!\varepsilon}}$[s]l-XT}^4 \enspace. \]
\end{corollary}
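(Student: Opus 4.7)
My plan is to chain together Theorem~\ref{thm:lR} with a classical look-ahead elimination result for (delabeling) linear top-down tree transducers, and then observe that every piece of the resulting composition is cheap enough to fit inside $\text{$\mathord{\not{\!\varepsilon}}$[s]l-XT}$.

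For the first inclusion, I would start from the obvious fact that every l-xt is in particular an l-xt${}^{\text R}$ (take $c(q) = T_\Sigma$), so
$ \text{$\mathord{\not{\!\varepsilon}}$[s]l-XT}^n \subseteq (\text{$\mathord{\not{\!\varepsilon}}$[s]l-XT}^{\text R})^n $,
and Theorem~\ref{thm:lR} then yields $(\text{$\mathord{\not{\!\varepsilon}}$[s]l-XT}^{\text R})^n \subseteq \text{[s]dl-T}^{\text R} \mathbin; \text{$\mathord{\not{\!\varepsilon}}$snl-XT}^2$. It therefore only remains to peel the regular look-ahead off the delabeling l-t${}^{\text R}$ in front; i.e., to show $\text{[s]dl-T}^{\text R} \subseteq \text{QR} \mathbin; \text{[s]dl-T}$. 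This is exactly Engelfriet's classical decomposition theorem for linear top-down tree transducers with regular look-ahead~\cite{eng77}, applied in its delabeling variant: a finite-state relabeling first runs the look-ahead automata bottom-up and stores the relevant states in the labels, after which a linear delabeling top-down tree transducer without look-ahead can simulate the original rules by inspecting the relabeled symbols. I would note that in the construction the options~[s] and [d] are preserved, because a qr merely refines the labels without touching the shape of the tree, and the subsequent l-t inherits the delabeling (resp.\ strict) rule format directly from the original l-t${}^{\text R}$.

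The second inclusion is then a matter of locating each factor in the richer class $\text{$\mathord{\not{\!\varepsilon}}$[s]l-XT}$. A qr is, by definition (Section~\ref{sec:xtop}), an $\varepsilon$-free, nondeleting, strict delabeling l-t, hence trivially $\text{QR} \subseteq \text{$\mathord{\not{\!\varepsilon}}$sl-XT} \subseteq \text{$\mathord{\not{\!\varepsilon}}$[s]l-XT}$. Likewise $\text{[s]dl-T}$ is $\varepsilon$-free (every l-t is, as observed after Definition~\ref{df:xtop}), so $\text{[s]dl-T} \subseteq \text{$\mathord{\not{\!\varepsilon}}$[s]l-XT}$; and $\text{$\mathord{\not{\!\varepsilon}}$snl-XT} \subseteq \text{$\mathord{\not{\!\varepsilon}}$[s]l-XT}$ holds by dropping the properties~n (and, when applicable, s). Putting these four factors together gives $\text{QR} \mathbin; \text{[s]dl-T} \mathbin; \text{$\mathord{\not{\!\varepsilon}}$snl-XT}^2 \subseteq \text{$\mathord{\not{\!\varepsilon}}$[s]l-XT}^4$.

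The only non-routine ingredient is the decomposition $\text{[s]dl-T}^{\text R} \subseteq \text{QR} \mathbin; \text{[s]dl-T}$. The construction itself is textbook, but I would be careful to argue that the two optional flags behave as claimed: strictness transfers because the right-hand sides of the original rules are already non-state, and the delabeling flag transfers because the qr produces a tree of the same shape as the input, so the l-t part continues to consume exactly one input symbol per rule. Everything else is bookkeeping on top of Theorem~\ref{thm:lR}.
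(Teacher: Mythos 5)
Your proposal is correct and follows exactly the paper's route: the paper's own proof consists precisely of invoking Theorem~\ref{thm:lR} together with the look-ahead elimination $\text{[s]dl-T}^{\text R} \subseteq \text{QR} \mathbin; \text{[s]dl-T}$ for the first inclusion, and noting that the second inclusion is trivial. Your additional care about the preservation of the [s] and delabeling flags and the placement of each of the four factors inside $\text{$\mathord{\not{\!\varepsilon}}$[s]l-XT}$ just makes explicit what the paper leaves to the reader.
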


\begin{proof}
  The second inclusion is trivial, and for the first inclusion we use
  $\text{[s]dl-T}^{\text R} \subseteq \text{QR}
  \mathbin; \text{[s]dl-T}$ and Theorem~\ref{thm:lR}.  \qed
\end{proof}

Up to now, we have shown that the classes
$\text{$\mathord{\not{\!\varepsilon}}$l-XT}^{\text
  R}$~and~$\text{$\mathord{\not{\!\varepsilon}}$l-XT}$ are closed
under composition at the third and fourth power, respectively.  In the
rest of the section, we will show that the (strict) classes
$\text{$\mathord{\not{\!\varepsilon}}$sl-XT}^{\text R}$ and
$\text{$\mathord{\not{\!\varepsilon}}$sl-XT}$ are closed under
composition already at the second power.  We start with a lemma that
proves the converse of Lemma~\ref{lm:p} in the strict case.

\begin{lemma}[composition on the left] \upshape
  \label{lm:del}
  \[ \text{lds-H} \mathbin;
  \text{$\mathord{\not{\!\varepsilon}}$sl-XT}^{\text R} \subseteq
  \text{$\mathord{\not{\!\varepsilon}}$sl-XT}^{\text R} \]
\end{lemma}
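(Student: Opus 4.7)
The plan is to show that every composition $h \mathbin; M$, with $h\colon T_\Sigma \to T_\Delta$ an lds-tree homomorphism and $M = (Q, \Delta, \Gamma, I, R, c)$ an $\varepsilon$-free strict l-xt${}^{\text R}$, is computed by a single $\varepsilon$-free strict l-xt${}^{\text R}$ obtained by pulling every rule of $M$ back through $h$. Because $h$ is linear, delabeling, and strict, every $\sigma \in \Sigma_k$ satisfies $h_k(\sigma) = \delta(x_{i_1}, \dotsc, x_{i_n})$ for a unique $\delta \in \Delta_n$ and pairwise distinct indices $i_1, \dotsc, i_n \in \{1, \dotsc, k\}$; the remaining indices mark variables that $h$ deletes, and I would guard those positions with fresh ``trash'' states.

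For every rule $(\ell, q, r) \in R$ I would inductively define a finite set $\mathrm{pre}(\ell) \subseteq T_\Sigma(Q')$ of pulled-back left-hand sides by $\mathrm{pre}(p) = \{p\}$ for $p \in Q$ and, for $\ell = \delta(\ell_1, \dotsc, \ell_n)$, by letting $\mathrm{pre}(\ell)$ contain every tree $\sigma(t_1, \dotsc, t_k)$ such that $h_k(\sigma) = \delta(x_{i_1}, \dotsc, x_{i_n})$, $t_{i_j} \in \mathrm{pre}(\ell_j)$ for $1 \leq j \leq n$, and $t_i$ is a fresh trash state otherwise. Setting $R' = \{(\ell', q, r) \mid (\ell, q, r) \in R,\ \ell' \in \mathrm{pre}(\ell)\}$, $c'(p) = h^{-1}(c(p))$ for $p \in Q$ (regular because $h$ is a linear tree homomorphism), and $c'(q_\perp) = T_\Sigma$ for every trash state $q_\perp$, yields the desired transducer $M' = (Q', \Sigma, \Gamma, I, R', c')$. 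Using a distinct trash state for each occurrence keeps left-hand sides linear in $Q'$. Syntactic $\varepsilon$-freeness is inherited because every $\ell$ occurring in $R$ is rooted in $\Delta$, so every $\ell' \in \mathrm{pre}(\ell)$ is rooted in $\Sigma$; strictness is inherited because right-hand sides are copied verbatim from $M$.

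The main obstacle is the correctness argument $M' = h \mathbin; M$. For $h \mathbin; M \subseteq M'$ I would fix, given $(s, t) \in h \mathbin; M$, a successful $M$-derivation on $u = h(s)$ and use the fact that linearity of $h$ induces a position-preserving correspondence between the $\Delta$-skeleton of $u$ and a sub-skeleton of $s$ (the remaining $\Sigma$-positions of $s$ sit below deleted variables of $h$) to simulate each step: when $M$ applies $(\ell, q, r)$ at an input position $v$ of $u$, $M'$ applies the unique $(\ell', q, r) \in R'$ whose $\sigma$-choices at each level agree with the symbols of $s$ at the pre-image position. Deleted $\Sigma$-subtrees of $s$ are consumed at the trash states via the $T_\Sigma$ look-ahead, and an $M$-look-ahead check $t' \in c(p)$ at a position of $u$ translates to the matching $M'$-look-ahead check $s' \in c'(p) = h^{-1}(c(p))$, which holds since $h(s') = t'$. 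For the reverse inclusion, any $M'$-derivation on $s$ projects, by contracting $\sigma$-nodes to their $h$-images and discarding subtrees below trash states, to a valid $M$-derivation on $h(s)$; the trash subtrees are immaterial because their states never appear in any right-hand side of $M$. A routine induction on derivation length then yields the equality, the only bookkeeping being that of threading the position correspondence through the recursive definition of $\mathrm{pre}$.
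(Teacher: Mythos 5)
Your construction is essentially the paper's own proof: the paper likewise pulls each left-hand side back through the homomorphism (taking the preimages $\ell' \in d^{-1}(\ell)$ with $\abs{\pos_\Sigma(\ell')} = \abs{\pos_\Delta(\ell)}$, which is your $\mathrm{pre}(\ell)$), fills the deleted argument positions with fresh states $1, \dotsc, m$ carrying the trivial look-ahead $T_\Sigma$, sets $c'(q) = d^{-1}(c(q))$ (regular by closure of regular tree languages under inverse tree homomorphisms), and notes that strictness is preserved since the right-hand sides are copied verbatim. Your correctness argument is spelled out in more detail than the paper's, which stops at the construction, but the approach is the same.
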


\begin{proof}
  Let $d \colon T_\Sigma \to T_\Delta$ be a strict delabeling linear
  tree homomorphism.  Recall that~$d$ also defines a tree
  transformation $d \colon T_\Sigma(Q) \to T_\Delta(Q)$, which acts as
  an identity on states; i.e., $d(q) = q$ for every $q \in Q$.
  Moreover, let $M = (Q, \Delta, \Gamma, I, R, c)$ be a strict
  l-xt${}^{\text R}$ and $m \in \nat$ be such that $m \geq
  \abs{\ell'}$ for every $\ell' \in d^{-1}(\ell)$ and $(\ell, q, r)
  \in R$.  Note that the set~$d^{-1}(\ell)$ is finite because $d$~is
  strict.  We construct the l-xt${}^{\text R}$ $M' = (Q \cup \{1,
  \dotsc, m\}, \Sigma, \Gamma, I, R', c')$ such that for every rule
  $(\ell, q, r) \in R$ we have each valid rule $(\ell', q, r)$ in~$R'$
  where $\ell' \in d^{-1}(\ell)$ and $\abs{\pos_\Sigma(\ell')} =
  \abs{\pos_\Delta(\ell)}$.  Moreover, $c'(q) = d^{-1}(c(q))$ for all
  $q \in Q$ and $c'(i) = T_\Sigma$ for every $1 \leq i \leq m$.
  Clearly, $d^{-1}(c(q))$ is a regular tree language because $c(q)$~is
  a regular tree language and regular tree languages are closed under
  inverse tree homomorphisms~\cite[Lemma~1.2]{eng77}.  Finally, we
  observe that $M'$~is also strict since it has the same right-hand
  sides as~$M$. \qed
\end{proof}

Now we state and prove the mentioned result for the class
$\text{$\mathord{\not{\!\varepsilon}}$sl-XT}^{\text R}$.  The general
approach is the same as in Theorem~\ref{thm:lR}, but we now use
Lemma~\ref{lm:del} in the main step.

\begin{theorem} \upshape
  \label{thm:lsR}
  For every $n \geq 1$
  \[ (\text{$\mathord{\not{\!\varepsilon}}$sl-XT}^{\text R})^n
  \subseteq 
  \text{$\mathord{\not{\!\varepsilon}}$snl-XT} \mathbin;
  \text{$\mathord{\not{\!\varepsilon}}$sl-XT}^{\text R} \subseteq
  (\text{$\mathord{\not{\!\varepsilon}}$sl-XT}^{\text R})^2 \enspace. \]
\end{theorem}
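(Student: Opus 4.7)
The second inclusion is trivial, since $\text{$\mathord{\not{\!\varepsilon}}$snl-XT} \subseteq \text{$\mathord{\not{\!\varepsilon}}$sl-XT}^{\text R}$. For the first inclusion, I proceed by induction on $n$, mirroring the inductive template of Theorem~\ref{thm:lR} but now invoking Lemma~\ref{lm:del} (composition on the left) in the main step, where Theorem~\ref{thm:lR} used Lemma~\ref{lm:top} (composition on the right). The base case $n=1$ is immediate, since the identity relation lies in $\text{$\mathord{\not{\!\varepsilon}}$snl-XT}$.

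For the inductive step, I write $(\text{$\mathord{\not{\!\varepsilon}}$sl-XT}^{\text R})^{n+1} = \text{$\mathord{\not{\!\varepsilon}}$sl-XT}^{\text R} \mathbin; (\text{$\mathord{\not{\!\varepsilon}}$sl-XT}^{\text R})^n$ and apply the inductive hypothesis to the last $n$ factors, obtaining
\[
  (\text{$\mathord{\not{\!\varepsilon}}$sl-XT}^{\text R})^{n+1}
  \subseteq \text{$\mathord{\not{\!\varepsilon}}$sl-XT}^{\text R} \mathbin; \text{$\mathord{\not{\!\varepsilon}}$snl-XT} \mathbin; \text{$\mathord{\not{\!\varepsilon}}$sl-XT}^{\text R}.
\]
Applying Lemma~\ref{lm:p} to the leading factor and merging the two resulting adjacent $\text{$\mathord{\not{\!\varepsilon}}$snl-XT}$ copies via Theorem~\ref{thm:norm} yields $\text{sdl-T}^{\text R} \mathbin; \text{$\mathord{\not{\!\varepsilon}}$snl-XT}^2 \mathbin; \text{$\mathord{\not{\!\varepsilon}}$sl-XT}^{\text R}$.

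The main step then invokes Lemma~\ref{lm:del}: the standard decomposition of a strict delabeling linear top-down transducer with look-ahead writes $\text{sdl-T}^{\text R}$ as a finite-state relabeling QR (which is itself in $\text{$\mathord{\not{\!\varepsilon}}$snl-XT}$) followed by a strict delabeling linear tree homomorphism lds-H. Since every $\text{$\mathord{\not{\!\varepsilon}}$snl-XT}$ factor is also an $\text{$\mathord{\not{\!\varepsilon}}$sl-XT}^{\text R}$, Lemma~\ref{lm:del} allows me to absorb this lds-H into the $\text{$\mathord{\not{\!\varepsilon}}$sl-XT}^{\text R}$ factor on its immediate right. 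Iterating the absorption through the $\text{$\mathord{\not{\!\varepsilon}}$snl-XT}^2$ barrier pushes the deletion pattern entirely into the terminal $\text{$\mathord{\not{\!\varepsilon}}$sl-XT}^{\text R}$, and Theorem~\ref{thm:norm} collapses the nondeleting leftovers on the left into a single $\text{$\mathord{\not{\!\varepsilon}}$snl-XT}$ factor, completing the inductive step.

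The main obstacle is to orchestrate the iterated absorption cleanly: Lemma~\ref{lm:del} applies only to strict transducers with an lds-H on the left, which is precisely the hypothesis that distinguishes Theorem~\ref{thm:lsR}'s power-2 closure from Theorem~\ref{thm:lR}'s power-3 closure. Careful bookkeeping is required to ensure that all deletion is pushed into the rightmost factor so that the surviving left factor is purely nondeleting, and strictness is what enables this (without it, the $\varepsilon$-rules in the intervening $\text{$\mathord{\not{\!\varepsilon}}$snl-XT}^2$ block the homomorphic absorption).
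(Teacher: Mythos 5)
Your high-level plan (induction, Lemma~\ref{lm:del} as the strictness-specific ingredient, Theorem~\ref{thm:norm} to collapse nondeleting factors) is the same as the paper's, but the core of your inductive step has a genuine gap. After you reach $\text{QR} \mathbin; \text{lds-H} \mathbin; \text{$\mathord{\not{\!\varepsilon}}$snl-XT}^2 \mathbin; \text{$\mathord{\not{\!\varepsilon}}$sl-XT}^{\text R}$, applying Lemma~\ref{lm:del} to absorb the lds-H into the adjacent $\text{$\mathord{\not{\!\varepsilon}}$snl-XT}$ (viewed as an $\text{$\mathord{\not{\!\varepsilon}}$sl-XT}^{\text R}$) produces a general, possibly \emph{deleting} $\text{$\mathord{\not{\!\varepsilon}}$sl-XT}^{\text R}$ in second position; the nondeletion of that factor is destroyed and no lds-H remains to ``iterate'' with. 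Lemma~\ref{lm:del} is an absorption statement, not a commutation statement: it does not give $\text{lds-H} \mathbin; \text{$\mathord{\not{\!\varepsilon}}$snl-XT} \subseteq \text{$\mathord{\not{\!\varepsilon}}$snl-XT} \mathbin; \text{lds-H}$, which is what ``pushing the deletion pattern through the $\text{$\mathord{\not{\!\varepsilon}}$snl-XT}^2$ barrier'' would actually require, and re-expanding the stuck factor with Lemma~\ref{lm:p} only reproduces a deleting prefix and loops. Moreover, your closing step cannot work even if the transport succeeded: Theorem~\ref{thm:norm} collapses $\text{$\mathord{\not{\!\varepsilon}}$snl-XT}^m$ to $\text{$\mathord{\not{\!\varepsilon}}$snl-XT}^2$, not to a single factor (indeed $\text{$\mathord{\not{\!\varepsilon}}$snl-XT} \subsetneq \text{$\mathord{\not{\!\varepsilon}}$snl-XT}^2$), and $\text{$\mathord{\not{\!\varepsilon}}$snl-XT} \mathbin; \text{$\mathord{\not{\!\varepsilon}}$sl-XT}^{\text R} \subseteq \text{$\mathord{\not{\!\varepsilon}}$sl-XT}^{\text R}$ is false (it would contradict Theorem~\ref{thm:min1}), so $\text{$\mathord{\not{\!\varepsilon}}$snl-XT}^2 \mathbin; \text{$\mathord{\not{\!\varepsilon}}$sl-XT}^{\text R}$ cannot be reduced to the stated target.

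The missing tool is the reversed, deletion-last decomposition $\text{$\mathord{\not{\!\varepsilon}}$sl-XT}^{\text R} = \text{$\mathord{\not{\!\varepsilon}}$snl-XT} \mathbin; \text{lds-H}$, which the paper uses in place of Lemma~\ref{lm:p}. Splitting off the new factor on the \emph{right}, the induction hypothesis gives $\text{$\mathord{\not{\!\varepsilon}}$snl-XT} \mathbin; \text{$\mathord{\not{\!\varepsilon}}$sl-XT}^{\text R} \mathbin; \text{$\mathord{\not{\!\varepsilon}}$sl-XT}^{\text R}$; decomposing the middle factor deletion-last places an lds-H directly in front of the terminal $\text{$\mathord{\not{\!\varepsilon}}$sl-XT}^{\text R}$, where Lemma~\ref{lm:del} absorbs it, yielding $\text{$\mathord{\not{\!\varepsilon}}$snl-XT}^2 \mathbin; \text{$\mathord{\not{\!\varepsilon}}$sl-XT}^{\text R}$. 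Decomposing the last factor deletion-last once more gives $\text{$\mathord{\not{\!\varepsilon}}$snl-XT}^3 \mathbin; \text{lds-H}$, Theorem~\ref{thm:norm} reduces this to $\text{$\mathord{\not{\!\varepsilon}}$snl-XT}^2 \mathbin; \text{lds-H}$, and one copy of $\text{$\mathord{\not{\!\varepsilon}}$snl-XT}$ recombines with the lds-H into a single $\text{$\mathord{\not{\!\varepsilon}}$sl-XT}^{\text R}$ via Lemma~\ref{lm:top}. In other words, the deletion is never transported across nondeleting factors; it is manufactured at the right end to begin with. You would need to establish this deletion-last factorization (or an equivalent commutation lemma) for your route to close.
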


\begin{proof}
  Again, the second inclusion is trivial.  For the first inclusion,
  the induction basis is also trivial, and we prove the induction step
  as follows
  \begin{align*}
    (\text{$\mathord{\not{\!\varepsilon}}$sl-XT}^{\text R})^{n+1} &=
    (\text{$\mathord{\not{\!\varepsilon}}$sl-XT}^{\text R})^n \mathbin;
    \text{$\mathord{\not{\!\varepsilon}}$sl-XT}^{\text R} \\
    &\subseteq \text{$\mathord{\not{\!\varepsilon}}$snl-XT}
    \mathbin; \text{$\mathord{\not{\!\varepsilon}}$sl-XT}^{\text R}
    \mathbin; \text{$\mathord{\not{\!\varepsilon}}$sl-XT}^{\text R} \\
    &= \text{$\mathord{\not{\!\varepsilon}}$snl-XT} \mathbin;
    \text{$\mathord{\not{\!\varepsilon}}$snl-XT} \mathbin; \text{lds-H}
    \mathbin; \text{$\mathord{\not{\!\varepsilon}}$sl-XT}^{\text R} \\
    &\subseteq \text{$\mathord{\not{\!\varepsilon}}$snl-XT}^2
    \mathbin; \text{$\mathord{\not{\!\varepsilon}}$sl-XT}^{\text R} \\
    &= \text{$\mathord{\not{\!\varepsilon}}$snl-XT}^2
    \mathbin; \text{$\mathord{\not{\!\varepsilon}}$snl-XT}
    \mathbin; \text{lds-H} \\
    &= \text{$\mathord{\not{\!\varepsilon}}$snl-XT}^2
    \mathbin; \text{lds-H} \\
    &= \text{$\mathord{\not{\!\varepsilon}}$snl-XT}
    \mathbin; \text{$\mathord{\not{\!\varepsilon}}$sl-XT}^{\text R}
  \end{align*}
  using the induction hypothesis in the second step,
  Lemma~\ref{lm:del} in the fourth step, and 
  Theorem~\ref{thm:norm} in the sixth step.  The final step composes
  the tree homomorphism with the second tree transducer using the result of
  Lemma~\ref{lm:top}. \qed
\end{proof}

Again, we derive the ``without look-ahead''-version of the general result.  In contrast to Corollary~\ref{cor:l},
the power of closedness does not increase by one for
strict~$\text{$\mathord{\not{\!\varepsilon}}$l-xt}$.

\begin{corollary}[without look-ahead] \upshape
  \label{cor:ls}
  For every $n \geq 1$
  \[ \text{$\mathord{\not{\!\varepsilon}}$sl-XT}^n \subseteq
  \text{$\mathord{\not{\!\varepsilon}}$snl-XT} \mathbin;
  \text{$\mathord{\not{\!\varepsilon}}$sl-XT} \subseteq
  \text{$\mathord{\not{\!\varepsilon}}$sl-XT}^2 \enspace. \]
\end{corollary}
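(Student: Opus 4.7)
The plan is to start from Theorem~\ref{thm:lsR} (which proves the statement in the presence of look-ahead) and then eliminate the residual look-ahead without spending an additional transducer. The second inclusion is immediate since $\text{$\mathord{\not{\!\varepsilon}}$snl-XT} \subseteq \text{$\mathord{\not{\!\varepsilon}}$sl-XT}$, so I only have to establish the first inclusion.

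First I would view every l-xt as a l-xt${}^{\text R}$ with trivial look-ahead, obtaining $\text{$\mathord{\not{\!\varepsilon}}$sl-XT}^n \subseteq (\text{$\mathord{\not{\!\varepsilon}}$sl-XT}^{\text R})^n$, and apply Theorem~\ref{thm:lsR} to land in $\text{$\mathord{\not{\!\varepsilon}}$snl-XT} \mathbin; \text{$\mathord{\not{\!\varepsilon}}$sl-XT}^{\text R}$. Next I would invoke the extended-transducer analog of the classical look-ahead elimination, namely $\text{$\mathord{\not{\!\varepsilon}}$sl-XT}^{\text R} \subseteq \text{QR} \mathbin; \text{$\mathord{\not{\!\varepsilon}}$sl-XT}$, which parallels the use of $\text{[s]dl-T}^{\text R} \subseteq \text{QR} \mathbin; \text{[s]dl-T}$ in the proof of Corollary~\ref{cor:l}. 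To close, the resulting qr is absorbed into the nondeleting transducer: by Theorem~\ref{thm:TB} we have $\text{$\mathord{\not{\!\varepsilon}}$snl-XT} = \text{$\mathord{\not{\!\varepsilon}}$snl-XT}^{\text R}$, and since $\text{QR} \subseteq \text{snl-T}^{\text R}$, Lemma~\ref{lm:top} yields $\text{$\mathord{\not{\!\varepsilon}}$snl-XT} \mathbin; \text{QR} \subseteq \text{$\mathord{\not{\!\varepsilon}}$snl-XT}^{\text R} \mathbin; \text{snl-T}^{\text R} = \text{$\mathord{\not{\!\varepsilon}}$snl-XT}^{\text R} = \text{$\mathord{\not{\!\varepsilon}}$snl-XT}$. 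Chaining these inclusions produces $\text{$\mathord{\not{\!\varepsilon}}$sl-XT}^n \subseteq \text{$\mathord{\not{\!\varepsilon}}$snl-XT} \mathbin; \text{$\mathord{\not{\!\varepsilon}}$sl-XT}$, as required.

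The main obstacle is justifying the extended look-ahead elimination, which is not explicitly stated earlier. The standard construction nevertheless adapts verbatim: build a deterministic bottom-up tree automaton whose states refine every look-ahead language $c(q)$, let the qr label each input node with the reached automaton state, and rewrite each rule of the l-xt${}^{\text R}$ so that the look-ahead check at a deleted state position is replaced by a rule whose left-hand side inspects the annotated top symbol at that position. Strictness, linearity, and $\varepsilon$-freeness are preserved because the right-hand sides of the rules are left untouched and the left-hand sides grow only by one input level.
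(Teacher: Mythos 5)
Your argument is correct and follows the paper's proof in all essentials: apply Theorem~\ref{thm:lsR}, pull the look-ahead of the right factor out front as a relabeling via $\text{$\mathord{\not{\!\varepsilon}}$sl-XT}^{\text R} = \text{QR} \mathbin; \text{$\mathord{\not{\!\varepsilon}}$sl-XT}$ (which the paper obtains from Theorem~\ref{thm:BBIM} and \cite[Theorem~2.6]{eng77} rather than by redoing the construction), and absorb the QR into the nondeleting left factor. The only cosmetic difference is that you perform this absorption with Lemma~\ref{lm:top} and Theorem~\ref{thm:TB}, whereas the paper first decomposes $\text{$\mathord{\not{\!\varepsilon}}$snl-XT}$ as $\text{lcs-H}^{-1} \mathbin; \text{snl-T}$ and applies the composition theorem of \cite{bak79}; both routes are valid.
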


\begin{proof}
  The second inclusion is trivial.  For the first inclusion, we
  observe that 
  \begin{align*}
    \text{$\mathord{\not{\!\varepsilon}}$sl-XT}^n &\subseteq
    \text{$\mathord{\not{\!\varepsilon}}$snl-XT} \mathbin;
    \text{$\mathord{\not{\!\varepsilon}}$sl-XT}^{\text{R}} \\
    &= \text{lcs-H}^{-1} \mathbin; \text{snl-T} \mathbin; \text{QR}
    \mathbin; \text{$\mathord{\not{\!\varepsilon}}$sl-XT} \\
    &= \text{lcs-H}^{-1} \mathbin; \text{snl-T} \mathbin;
    \text{$\mathord{\not{\!\varepsilon}}$sl-XT} \\
    &= \text{$\mathord{\not{\!\varepsilon}}$snl-XT}
    \mathbin; \text{$\mathord{\not{\!\varepsilon}}$sl-XT}
  \end{align*}
  using first Theorem~\ref{thm:lsR}, and then Theorem~\ref{thm:BBIM}
  and~\cite[Theorem~2.6]{eng77} in the second step.  Next, we use the
  composition result~\cite[Theorem~1]{bak79} for top-down tree
  transducers and Theorem~\ref{thm:BBIM} again. \qed
\end{proof}

Table~\ref{tab:max} summarizes our results concerning the powers at
which the considered classes are closed under composition.

\begin{table}[t]
  \begin{center}
    \begin{tabular}{|l|c|c|}
      \hline
      & & \\[-1ex]
      \; Class & \mbox{\quad} Closed at power \mbox{\quad} & Proved in \\[1ex]
      \hline
      & & \\[-1ex]
      \; $\text{$\mathord{\not{\!\varepsilon}}$l-XT}^{\text R}$ & ${} 
      3$ & Theorem~\ref{thm:lR} \\[1ex]
      \; $\text{$\mathord{\not{\!\varepsilon}}$l-XT}$ & ${}  4$ &
      Corollary~\ref{cor:l} \\[1ex]
      \; $\text{$\mathord{\not{\!\varepsilon}}$sl-XT}^{\text R}$ \quad &
      ${} 2$ & Theorem~\ref{thm:lsR} \\[1ex] 
      \; $\text{$\mathord{\not{\!\varepsilon}}$sl-XT}$ & ${} 2$ & \;
      Corollary~\ref{cor:ls} \; \\[1ex]
      \hline
    \end{tabular}
  \end{center}
  \caption{Summary of the results of Section~\protect{\ref{sec:upper}}.}
  \label{tab:max}
\end{table}

\section{Least power of closedness}
\label{sec:lower}
In this section, we will determine the least power at which each of the
classes $\text{$\mathord{\not{\!\varepsilon}}$l-XT}^{\text R}$,
$\text{$\mathord{\not{\!\varepsilon}}$l-XT}$,
$\text{$\mathord{\not{\!\varepsilon}}$sl-XT}^{\text R}$, and
$\text{$\mathord{\not{\!\varepsilon}}$sl-XT}$ is closed under
composition.

\begin{theorem} \upshape
  \label{thm:min1}
  For every $n \geq 2$
  \[ \text{$\mathord{\not{\!\varepsilon}}$sl-XT} \subsetneq
  \text{$\mathord{\not{\!\varepsilon}}$sl-XT}^2 =
  \text{$\mathord{\not{\!\varepsilon}}$sl-XT}^n \qquad 
  (\text{$\mathord{\not{\!\varepsilon}}$sl-XT}^{\text R}) \subsetneq
  (\text{$\mathord{\not{\!\varepsilon}}$sl-XT}^{\text R})^2 =
  (\text{$\mathord{\not{\!\varepsilon}}$sl-XT}^{\text R})^n .\]
\end{theorem}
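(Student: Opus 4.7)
By Theorem~\ref{thm:lsR} and Corollary~\ref{cor:ls} the equalities $\text{$\mathord{\not{\!\varepsilon}}$sl-XT}^n = \text{$\mathord{\not{\!\varepsilon}}$sl-XT}^2$ and $(\text{$\mathord{\not{\!\varepsilon}}$sl-XT}^{\text R})^n = (\text{$\mathord{\not{\!\varepsilon}}$sl-XT}^{\text R})^2$ for every $n \geq 2$ are immediate (the reverse inclusions follow because the identity relations lie in each class, so padding shows $\mathcal{C}^2\subseteq\mathcal{C}^n$). Only the two strict inclusions remain. Since $\text{$\mathord{\not{\!\varepsilon}}$sl-XT} \subseteq \text{$\mathord{\not{\!\varepsilon}}$sl-XT}^{\text R}$, my plan is to exhibit a single tree transformation~$\tau$ with $\tau \in (\text{$\mathord{\not{\!\varepsilon}}$sl-XT})^2$ but $\tau \notin \text{$\mathord{\not{\!\varepsilon}}$sl-XT}^{\text R}$; such a~$\tau$ simultaneously separates the first and second powers of both hierarchies.

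The witness~$\tau$ I have in mind is built so that producing the output requires two conceptually distinct phases: a first phase that gathers information spread across arbitrary-depth regions of the input and concentrates it near the root, and a second phase that performs a local rewriting at the root driven by the gathered information. A natural template is to take inputs in which two monadic chains hang below a binary root symbol~$\sigma$, and to define~$\tau$ so that the root of the output depends jointly on both chains in a way not captured by any regular property of either chain alone. The composition realising~$\tau$ uses a first $\mathord{\not{\!\varepsilon}}$sl-xt that collapses each chain to a short annotated summary at the root, followed by a second $\mathord{\not{\!\varepsilon}}$sl-xt that reads off the summary and rewrites the root accordingly, reproducing the chains below it. Both factors are linear, strict, and $\varepsilon$-free by construction, so $\tau \in (\text{$\mathord{\not{\!\varepsilon}}$sl-XT})^2$ is immediate from the explicit construction.

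The main obstacle is showing $\tau \notin \text{$\mathord{\not{\!\varepsilon}}$sl-XT}^{\text R}$. My approach is by contradiction: assume a strict, $\varepsilon$-free l-xt${}^{\text R}$ $M = (Q, \Sigma, \Delta, I, R, c)$ realises~$\tau$, and pick inputs whose chains exceed a threshold depending on $\abs Q$ and on the maximal size of any left-hand side of~$M$. Because $M$ is $\varepsilon$-free and strict, each rule application consumes and produces at least one symbol, so along any input chain a successful derivation applies a number of rules that grows with the chain length; by pigeonhole two distinct positions in the same chain must be processed by the same rule in the same state. Using the link structure of Section~\ref{sec:xtop}, these two events produce matching contributions to the output, which restricts the overall shape and root label that~$M$ can ever deliver. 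Regular look-ahead offers no escape, since by Definition~\ref{df:Sem} it is consulted only on deleted subtrees and only checks regular properties of each subtree in isolation, whereas the root symbol demanded by~$\tau$ depends jointly on both chains. Turning this pigeonhole and linking bookkeeping into a precise contradiction with the design of~$\tau$ is the delicate step, and is where the bulk of the technical work lies; membership in $(\text{$\mathord{\not{\!\varepsilon}}$sl-XT})^2$ is by comparison routine.
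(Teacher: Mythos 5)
Your handling of the equalities is exactly right and matches the paper: by Theorem~\ref{thm:lsR} and Corollary~\ref{cor:ls} both classes are closed at power~$2$, and padding with identities gives $\mathcal C^2 \subseteq \mathcal C^n$. Your reduction of both strictness claims to a single witness $\tau \in (\text{$\mathord{\not{\!\varepsilon}}$sl-XT})^2 \setminus \text{$\mathord{\not{\!\varepsilon}}$sl-XT}^{\text R}$ is also a sound strategy. The gap is that this witness is never actually produced. The paper does not construct one either --- it simply cites \cite[Section~3.4]{arndau82} for the fact that neither class is closed under composition at power~$1$ --- but your proposal replaces that citation with a from-scratch construction whose two essential ingredients (a concrete $\tau$ and the proof that no single $\text{$\mathord{\not{\!\varepsilon}}$sl-xt}^{\text R}$ computes it) are both left as sketches; the non-membership argument is explicitly deferred as ``the delicate step.'' Since that step \emph{is} the content of the strictness claim, the proof is incomplete as written.

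Beyond incompleteness, the guiding idea for the witness looks off target. You want the output root to ``depend jointly on both chains in a way not captured by any regular property of either chain alone.'' But a single strict, $\varepsilon$-free l-xt$^{\text R}$ can make its root output symbol depend on an \emph{arbitrary} regular property of the whole input, joint properties of the two chains included: it nondeterministically guesses the root symbol and then verifies the guessed property via its states and look-ahead while completing the derivation. Conversely, if the joint dependence is genuinely non-regular, it is doubtful that a composition of two such transducers realizes it either, because the intermediate tree handed to the second factor is itself only regularly related to each chain. The separations that actually work here (in \cite{arndau82}, and in the proof of Theorem~\ref{thm:min2} of this paper) are \emph{structural} rather than informational: they exploit where output material must be produced relative to the consumed input, via the hierarchical, bounded-distance link structure of Lemma~\ref{lm:Dep}. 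To close the gap you should either cite the known non-closure result, or design a witness of that structural kind and carry the linking argument through to a contradiction.
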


\begin{proof}
  By \cite[Section~3.4]{arndau82}, the classes
  $\text{$\mathord{\not{\!\varepsilon}}$sl-XT}$ and
  $\text{$\mathord{\not{\!\varepsilon}}$sl-XT}^{\text R}$ are not
  closed under composition at power~$1$.  Moreover, by
  Theorem~\ref{thm:lsR} and Corollary~\ref{cor:ls} both classes are
  closed at power 2. \qed
\end{proof}

In the following, we will use the computed dependencies,
for which we observe some important properties next.  For the rest of
this section, let $M = (Q, \Sigma, \Delta, I, R, c)$ be the considered
l-xt${}^{\text R}$.  To simplify the development, we disregard the
actual input and output trees in the computed dependencies and will
consider only the set~$\link(M) = \link(\dep(M))$.  We say that
$M$~computes the linking structures of~$\link(M)$.

\begin{definition}[\protect{\cite[Definition~8]{mal11d}}] \upshape
  \label{df:Hier}
  A linking structure~$D \in {\cal L}$ is \emph{input
    hierarchical}\footnote{This notion is called \emph{strictly input
      hierarchical} in~\protect{\cite{mal11d}}.} if for all links
  $(v_1, w_1), (v_2, w_2) \in D$
  \begin{compactitem}
  \item if $v_1 \prec v_2$, then $w_1 \preceq w_2$, and
  \item if $v_1 = v_2$, then $w_1 \preceq w_2$~or~$w_2 \preceq w_1$.
  \end{compactitem}
\end{definition}

Input hierarchical linking structures have no crossing links, which
are links $(v_1, w_1), (v_2, w_2) \in D$ such that $v_1 \prec v_2$ and
$w_2 \prec w_1$.  We define \emph{output hierarchical} using the same
conditions as in Definition~\ref{df:Hier} for the output side; i.e.,
$D$~is output hierarchical if and only if $D^{-1}$~is input
hierarchical.  Moreover, $D$~is \emph{hierarchical} if it is both
input and output hierarchical.  Finally, a set~${\cal D} \subseteq
{\cal L}$ of linking structures is hierarchical if each element~$D \in
{\cal D}$ is hierarchical.

We also need a property that enforces the existence of certain links.
Roughly speaking, for a set of linking structures~${\cal D} \subseteq
{\cal L}$ there should be an integer that limits the distance between
links in each linking structure~$D \in {\cal D}$.

\begin{definition}[\protect{\cite[Definition~10]{mal11d}}] \upshape
  \label{df:Dist}
  A set~${\cal D} \subseteq {\cal L}$ of link structures has
  \emph{bounded distance in the input} if there exists an integer~$k
  \in \nat$ such that for every $D \in {\cal D}$ and all $(v, w),
  (vv'', w'') \in D$ there exists $(vv', w') \in D$ with $v' \preceq
  v''$ and $\abs{v'} \leq k$.
\end{definition}

In other words, in a set of link structures with bounded distance~$k$
in the input, we know for any of its link structures that between any
two source-nested links there exists a link such that the distance of
its source to the smaller link's source is at most~$k$.  This yields
that the distance to the source of the next nested link (if such a
link does exist) can be at most~$k$.  Note however, that the above
property does not require a link every $k$~symbols.  As before, a
set~${\cal D} \subseteq {\cal L}$ of linking structures has
\emph{bounded distance in the output} if ${\cal D}^{-1} = \{ D^{-1}
\mid D \in {\cal D}\}$ has bounded distance in the input.  Finally,
${\cal D}$~has \emph{bounded distance} if it has bounded distance in
both the input and the output.

\begin{lemma} \upshape
  \label{lm:Dep}
  For every l-xt${}^{\text R}$~$M$, the set~$\link(M)$ is
  hierarchical with bounded distance.
\end{lemma}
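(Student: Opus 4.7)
The plan is to associate each link in a computed linking structure $D \in \link(M)$ with a unique parent, yielding an auxiliary \emph{link tree} rooted at $(\varepsilon, \varepsilon)$: whenever a rule $\rho = (\ell, q, r)$ is applied at $(v, w)$, each new link $(v.\pos_p(\ell), w.\pos_p(r)) \in \link_{v,w}(\rho)$ is declared to have parent $(v, w)$ (fixing the first derivation step at which a link appears in case of ambiguity). Two structural observations about this tree drive the entire proof. First, a child $(v.u, w.u')$ of $(v, w)$ always satisfies $v \preceq v.u$ and $w \preceq w.u'$, so descending an edge weakly extends both coordinates. Second, two distinct children of a common parent have pairwise incomparable input coordinates and pairwise incomparable output coordinates, because $\ell \in T_\Sigma(Q)$ and $r \in T_\Delta(Q)$ force states to occur only at leaf positions of $\ell$ and $r$, and the linearity of $\ell$ and $r$ in $Q$ means that two distinct children correspond to two distinct states $p_1 \neq p_2$ whose leaf positions in a tree are necessarily incomparable.

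To show that every such $D$ is hierarchical, I would take arbitrary links $L_1, L_2 \in D$ and inspect their least common ancestor in the link tree. If one is an ancestor of the other, the first observation iterated along the chain gives $v_1 \preceq v_2$ and $w_1 \preceq w_2$ (or the symmetric relation), which covers both the $v_1 \prec v_2$ and the $v_1 = v_2$ clauses of Definition~\ref{df:Hier} and their output-side analogues. Otherwise $L_1$ and $L_2$ are descendants of two distinct children of a strict common ancestor; the second observation combined with the fact that appending suffixes preserves incomparability yields that $v_1, v_2$ are incomparable and $w_1, w_2$ are incomparable, so the hierarchical conditions hold vacuously. Hence $D$ is both input and output hierarchical.

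For bounded distance, I would take $k = \max\{\abs{\ell}, \abs{r} \mid (\ell, q, r) \in R\}$, which depends only on $M$. Given $(v, w), (vv'', w'') \in D$ with $v'' \neq \varepsilon$, the hierarchical analysis applied to $v \prec vv''$ forces $(vv'', w'')$ to be a proper descendant of $(v, w)$ in the link tree, so there is a chain $L^{(0)} = (v, w), L^{(1)}, \dotsc, L^{(n)} = (vv'', w'')$. Writing each edge as $L^{(i+1)} = (v^{(i)}.a_i, w^{(i)}.b_i)$, one has $v'' = a_0 a_1 \dotsm a_{n-1}$; since $v'' \neq \varepsilon$ some $a_j$ is non-empty, and taking the least such $j$ yields $L^{(j+1)} = (v.a_j, w^{(j+1)}) \in D$ with $a_j \preceq v''$ and $\abs{a_j} \leq k$, witnessing bounded distance in the input. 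Bounded distance in the output is entirely symmetric. I expect the main subtlety to be handling $\varepsilon$-rules ($\ell = q$) and rules whose right-hand side is a single state, since these are precisely the cases where a child's input (respectively output) coordinate coincides with its parent's; this is exactly why the bounded-distance argument must walk to the first strictly-increasing step rather than simply descending to an immediate child.
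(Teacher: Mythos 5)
Your proof is correct and supplies exactly the detail that the paper omits: the paper's entire proof of Lemma~\ref{lm:Dep} is the one-line remark that the claim follows trivially from Definition~\ref{df:Sem}, and your link tree (parent of a link $=$ the linked pair at which the rule introducing it was applied) is the natural formalization of that ``triviality'', with the ancestor/incomparable dichotomy giving hierarchicality and the chain of edges giving bounded distance. Your handling of $\varepsilon$-rules and erasing right-hand sides by walking to the first strictly lengthening edge is precisely the point that makes the bound $k = \max\{\abs{\ell}, \abs{r} \mid (\ell,q,r) \in R\}$ go through, so no gap remains.
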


\begin{proof}
  This lemma follows trivially from Definition~\ref{df:Sem}.  
\end{proof}

Next we consider the problem whether a tree transformation can be
computed by an l-xt${}^{\text R}$.  For this we specify certain links
that are intuitively clear and necessary between nodes of input-output
tree pairs.  Hereby we obtain the specification sentential forms.
Then we consider whether this specification can be implemented by
an~l-xt${}^{\text R}$.  Often we cannot identify the nodes of a link
exactly.  In such cases, we use splines with inverted arrow heads,
which indicate that there is a link to some position of the subtree
pointed to.  For example, the splines in Figure~\ref{fig:compat}
indicate that a node of~$t_1$ (resp.~$t_2$) on the left is linked to a
node of~$t_1$ (resp.~$t_2$) on the right.

\begin{figure}[t]
  \centering
  \includegraphics{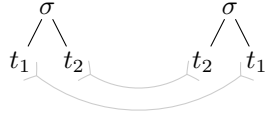}
  \caption{Links with inverted arrow.}
  \label{fig:compat}
\end{figure}

\begin{figure}[t]
  \centering
  \includegraphics[scale=1]{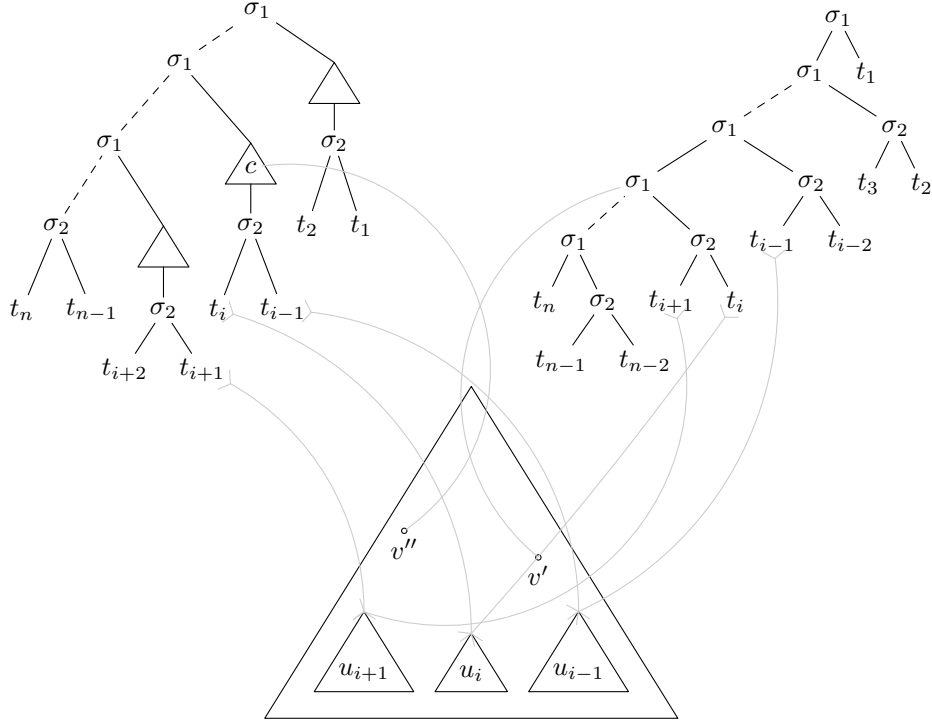}
  \caption{Illustration of the relevant part of the specification used
    in the proof of Theorem~\protect{\ref{thm:min2}}.}
  \label{fig:min2}
\end{figure}

\begin{theorem} \upshape
  \label{thm:min2}
  $(\text{$\mathord{\not{\!\varepsilon}}$l-XT}^{\text R})^2 \subsetneq
  (\text{$\mathord{\not{\!\varepsilon}}$l-XT}^{\text R})^3 =
  (\text{$\mathord{\not{\!\varepsilon}}$l-XT}^{\text R})^n$ for every
  $n \geq 3$.
\end{theorem}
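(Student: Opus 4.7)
The equality part $(\text{$\mathord{\not{\!\varepsilon}}$l-XT}^{\text R})^3 = (\text{$\mathord{\not{\!\varepsilon}}$l-XT}^{\text R})^n$ for $n \geq 3$ is free from Theorem~\ref{thm:lR}: the chain of inclusions there shows that every $n$-fold composition, for $n \geq 1$, lives in $\text{dl-T}^{\text R} \mathbin; \text{$\mathord{\not{\!\varepsilon}}$snl-XT}^2 \subseteq (\text{$\mathord{\not{\!\varepsilon}}$l-XT}^{\text R})^3$, so the hierarchy can only grow up to power~$3$. So the real content is the strict inclusion $(\text{$\mathord{\not{\!\varepsilon}}$l-XT}^{\text R})^2 \subsetneq (\text{$\mathord{\not{\!\varepsilon}}$l-XT}^{\text R})^3$.

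My plan for the strict inclusion is to exhibit a concrete witness transformation $\tau$ and then rule out any $2$-fold realization by a dependency/link argument based on Lemma~\ref{lm:Dep}. The witness $\tau$ should be specified by input--output pairs $(\sigma(t_1, t_2), \sigma(u_1, u_2))$ together with a \emph{specification} of which positions in the output have to depend on which positions in the input (the specification sentential forms referred to in the text, and depicted in Figure~\ref{fig:min2}). The specification should be chosen so that: (a) it is implementable in three stages, by having the first l-xt${}^{\text R}$ prepare/mark the input, the second one perform a reordering that creates a pair of ``crossed'' dependencies, and the third one finalise the output; but (b) any two-stage implementation would force an intermediate tree $s$ such that the link relations $D_1 \subseteq \pos(\sigma(t_1,t_2)) \times \pos(s)$ and $D_2 \subseteq \pos(s) \times \pos(\sigma(u_1,u_2))$ cannot both be hierarchical with bounded distance while still composing to the prescribed input--output link pattern.

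Carrying out step~(a) is a routine construction: I will write down three $\varepsilon$-free l-xt${}^{\text R}$'s whose composition produces $\tau$, essentially mimicking the three phases (mark, reorder, emit) and verifying the result using the semantics in Definition~\ref{df:Sem2}. The genuinely hard step is~(b). Here I plan to argue as follows. Fix a hypothetical decomposition $\tau = M_1 \mathbin; M_2$ with both $M_i$ in $\text{$\mathord{\not{\!\varepsilon}}$l-XT}^{\text R}$. For each input--output pair pick a witness derivation, which yields an intermediate tree $s$ and dependency structures $D_1 \in \link(M_1)$, $D_2 \in \link(M_2)$; by Lemma~\ref{lm:Dep} both are hierarchical and have bounded distance. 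The image link structure of $\tau$ is contained in $D_1 \mathbin; D_2$. By pumping the input so that the parts $t_1$ and $t_2$ grow independently, I will force either $D_1$ or $D_2$ to contain a crossed pair of links (violating hierarchicality) or a pair of nested links whose source (or target) distance grows unboundedly in the size of the input (violating bounded distance). The pumping reduces, in each branch of the case analysis, to a contradiction with one of the two closure properties of Lemma~\ref{lm:Dep}.

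The main obstacle is the bookkeeping in~(b): one has to rule out \emph{every} possible split point, meaning every possible intermediate tree shape and every possible way of attributing the prescribed input--output links to $D_1 \mathbin; D_2$. The standard technique is to reduce this to a finite case distinction driven by which side of the ``crossing'' in the specification is realised in~$D_1$ versus $D_2$, and in each case to pick a family of inputs (parametrised by the height of $t_1$ or $t_2$) that forces the remaining relation to violate bounded distance. Regular look-ahead does not help here because it can only constrain subtrees, not create new links, so the hierarchical/bounded-distance argument of Lemma~\ref{lm:Dep} applies uniformly.
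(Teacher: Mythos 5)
Your outline matches the paper's strategy at the top level: the equality does follow from Theorem~\ref{thm:lR} exactly as you say, and the strictness is indeed proved by exhibiting a witness $\tau$ in the triple composition and then refuting any factorization $\tau = N_1 \mathbin; N_2$ via the hierarchical/bounded-distance properties of $\link(N_1)$ and $\link(N_2)$ from Lemma~\ref{lm:Dep}. But both load-bearing components of that plan are left as promissory notes, and they are precisely where the difficulty lies. First, you never write down the witness. ``A pair of crossed dependencies'' is not enough: a single swap or reordering of two subtrees is computable by one l-xt, and most ``mark, reorder, emit'' pipelines collapse. The paper's witness $\tau = M_1 \mathbin; M_2 \mathbin; M_3$ is built so that the input and output each carry a long $\sigma_1$-spine with subtrees $t_1, \dotsc, t_n$ hanging off it, re-bracketed so that consecutive subtrees $t_{i-1}, t_i, t_{i+1}$ are grouped differently on the two sides (and $M_1$ additionally deletes contexts using look-ahead, which is what Corollary~\ref{cor:min3} later exploits). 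Without a concrete $\tau$ of this kind there is nothing to pump.

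Second, your contradiction mechanism is not the one that actually closes the argument, and as stated it is not clear it can. You propose to force $D_1$ or $D_2$ to directly violate hierarchicality or bounded distance; but $N_1$ and $N_2$ are arbitrary $\varepsilon$-free l-xt${}^{\text R}$, so by Lemma~\ref{lm:Dep} their link structures automatically satisfy both properties --- you cannot ``force a violation'' of properties that hold by construction. What the paper does instead is use bounded distance positively, to force the existence of link anchors $v'$ and $v''$ in the intermediate tree $s$ (one coming from a long spine in the output via $D_2$, one from a deep context in the input via $D_1$), and then use hierarchicality of both $D_1$ and $D_2$ to derive domination constraints: $v' \preceq v_i, v_{i+1}$ but $v' \not\preceq v_{i-1}$, while $v'' \preceq v_{i-1}, v_i$ but $v'' \not\preceq v_{i+1}$, for the positions $v_{i-1}, v_i, v_{i+1}$ of three consecutive subtrees inside $s$. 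The contradiction is then purely order-theoretic: comparing $\lca(v_{i-1}, v_i)$ with $\lca(v_i, v_{i+1})$ forces either $v'' \preceq v_{i+1}$ or $v' \preceq v_{i-1}$. This three-subtree, two-incomparable-anchors configuration is the key idea, and it is tied to the specific shape of the witness; your ``finite case distinction over split points'' does not yet contain it. Until you supply a concrete $\tau$ and carry out this (or an equivalent) argument, the proof of strictness is not there.
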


\begin{proof} 
   The inclusion is trivial and the equality follows
  from~Theorem~\ref{thm:lR}, so we only have to prove strictness.
  Recall the l-xt${}^{\text R}$~$M_1$ of
  Example~\ref{ex:lXT}.  In addition, we use the two bimorphisms
  $B_2$~and~$B_3$ of~\cite[Section~II.2.2.3.1]{dau77}, which are in
  B(lcs,lcs).\footnote{In~\protect{\cite{dau77}} strictness is denoted
    by~`e'.}  As mentioned,
  $\text{$\mathord{\not{\!\varepsilon}}$snl-XT} = \text B(\text{lcs},
  \text{lcs})$, hence $B_2$~and~$B_3$ can also be defined by some
  $\text{$\mathord{\not{\!\varepsilon}}$snl-xt}$ $M_2$~and~$M_3$,
  respectively.  For convenience, we present $M_2$~and~$M_3$
  explicitly before we show that $\tau = M_1 \mathbin; M_2 \mathbin;
  M_3$ cannot be computed by a composition of two
  $\text{$\mathord{\not{\!\varepsilon}}$l-xt}^{\text R}$.

  Let $M_2 = (Q_2, \Delta, \Gamma, \{\mathord{\star}\}, R_2)$ be the
  $\text{$\mathord{\not{\!\varepsilon}}$snl-xt}$ with
  \begin{compactitem}
  \item $Q_2 = \{\mathord{\star}, \mathord{\id}, \mathord{\id}'\}$ and
    $\Gamma = \{ \sigma^{(2)}, \gamma^{(1)}, \alpha^{(0)}\}$, and
  \item $R_2$ is the set of the rules
    \begin{align*}
      \sigma_1(\mathord{\star}, \sigma_2(\mathord{\id},
      \mathord{\id}')) &\stackrel{\star}\longrightarrow
      \sigma(\sigma(\mathord{\star}, \mathord{\id}), \mathord{\id}') &
      \sigma_2(\mathord{\id}, \mathord{\id}')
      &\stackrel{\star}\longrightarrow \sigma(\mathord{\id},
      \mathord{\id}') \\*
      \gamma(\mathord{\id}) & \stackrel{\mathord{\id},
        \mathord{\id}'}\longrightarrow \gamma(\mathord{\id}) &
      \alpha & \stackrel{\mathord{\id}, \mathord{\id}'}\longrightarrow
      \alpha \enspace. 
    \end{align*}
  \end{compactitem}
  Moreover, let $M_3 = (Q_3, \Gamma, \Delta, \{\mathord{\star}\},
  R_3)$ be the $\text{$\mathord{\not{\!\varepsilon}}$snl-xt}$ with
  \begin{compactitem}
  \item $Q_3 = \{\mathord{\star}, p, \mathord{\id}, \mathord{\id}'\}$, 
  \item $R_3$~is the set of the rules
    \begin{align*}
      \sigma(p, \mathord{\id}) &\stackrel{\star}\longrightarrow
      \sigma_1(p, \mathord{\id}) &  
      \sigma(\sigma(p, \mathord{\id}), \mathord{\id}') &\stackrel
      p\longrightarrow \sigma_1(p, \sigma_2(\mathord{\id},
      \mathord{\id}')) &
      \gamma(\mathord{\id}) &\stackrel{\mathord{\id},
        \mathord{\id}'}\longrightarrow \gamma(\mathord{\id}) \\*
      & &
      \gamma(\mathord{\id}) &\stackrel p\longrightarrow
      \gamma(\mathord{\id}) &
      \alpha & \stackrel{\mathord{\id}, \mathord{\id}'}\longrightarrow
      \alpha \enspace.
    \end{align*}
  \end{compactitem}

  We present a proof by contradiction, hence we assume that $\tau =
  N_1 \mathbin;N_2$ for some
  $\text{$\mathord{\not{\!\varepsilon}}$l-xt}$ $N_1$~and~$N_2$.  With
  the help of Lemma~\ref{lm:Dep}, we conclude that
  $\link(N_1)$~and~$\link(N_2)$ are hierarchical with bounded distance
  $k_1$~and~$k_2$, respectively.  We consider $(t, u) \in \tau$ such
  that the left $\sigma_1$-spines of $t$~and~$u$ are longer than
  $k_1$~and~$k_2$, respectively.\footnote{Some additional requirements
    on $t$~and~$u$ are developed in the proof later on.} Moreover, by
  assumption, there exists an intermediate tree~$s$ and linking
  structures $D_1, D_2 \in {\cal L}$ such that $(t, D_1, s) \in
  \dep(N_1)$ and $(s, D_2, u) \in \dep(N_2)$.  We specify some links
  that, due to the last two inclusions, should be necessarily present
  in $D_1$~and~$D_2$ and show that they lead to a contradiction (see
  Figure~\ref{fig:min2}).

  First, we consider the links~$D_2$ generated by~$N_2$.  Since the
  left $\sigma_1$-spine in~$u$ is longer than~$k_2$ and there are
  links at the root (i.e., $(\varepsilon, \varepsilon) \in D_2$) and
  inside~$t_n$, there must be a linking point at position $w \in
  \pos_{\sigma_1}(u)$ along the left $\sigma_1$-spine with $w \neq
  \varepsilon$, which links to position~$v'$ in the intermediate
  tree~$s$ (i.e., $(v', w) \in D_2$).  Let $u|_w = \sigma_1(u',
  \sigma_2(t_{i+1}, t_i))$.  We assume that every~$t_j$ with $j \leq
  n$ is high enough (i.e., $\height(t_j) > k_2$).  This yields that
  there exist linking points in~$t_{i+1}$, $t_i$, and~$t_{i-1}$
  of~$u$.  Let $(v'_{i+1}, w_{i+1}), (v'_i, w_i), (v'_{i-1}, w_{i-1})
  \in D_2$ be those linking points.  Since $D_2$~is hierarchical and
  $w_{i+1}$~and~$w_i$ are below~$w$ in~$u$, we know that
  $v'_{i+1}$~and~$v'_i$ are below~$v'$ in~$s$ (i.e., $v' \preceq
  v'_{i+1}, v'_i$), whereas $v' \not\preceq v'_{i-1}$.  Next, we
  locate~$t_i$ in the input tree~$t$.  By the general shape of~$t$,
  the subtree~$t_i$ occurs in a subtree $\sigma_1(t', c[\sigma_2(t_i,
  t_{i-1})])$ for some tree $c \in T_\Sigma(X_1)$ with exactly one
  occurrence of~$x_1$.  Again we assume that $c$~is suitably
  large\footnote{More precisely, we assume that the only element
    of~$\pos_{x_1}(c)$ is longer than~$k_1$.}, which forces a linking
  point inside~$c$ in addition to those in~$t_{i+1}$, $t_i$,
  and~$t_{i-1}$ because $(t, s) \in N_1$.  Using the same arguments
  for~$N_1$ as before for~$N_2$, we now locate the links $(y, v'') \in
  D_1$ linking into~$c$, which dominates the links $(y_i, v''_i),
  (y_{i-1}, v''_{i-1}) \in D_1$ linking to $t_i$~and~$t_{i-1}$,
  respectively.  Thus, $v'' \preceq v''_i, v''_{i-1}$.  In addition,
  there is a link $(y_{i+1}, v''_{i+1}) \in D_1$ linking
  into~$t_{i+1}$, which is not dominated by~$v''$.  Consequently, $v''
  \not\preceq v''_{i+1}$.  Finally, let
  \[ v_{i-1} = \lca(v'_{i-1}, v''_{i-1}) \qquad \qquad v_i =
  \lca(v'_i, v''_i) \qquad \qquad v_{i+1} = \lca(v'_{i+1},
  v''_{i+1}) \enspace. \]
  It is easy to see that the relevant links can be chosen such that
  \begin{align*}
    v' &\not\preceq v_{i-1} & v' &\preceq v_i & v' &\preceq v_{i+1} \\
    v'' &\preceq v_{i-1} & v'' &\preceq v_i & v'' &\not\preceq v_{i+1}
    \enspace.
  \end{align*}
  Now, we have either $\lca(v_{i-1}, v_i) \preceq \lca(v_i,
  v_{i+1})$ or  $\lca(v_i, v_{i+1}) \preceq \lca(v_{i-1},
  v_i)$.  In the first case we get
  \[ v'' \preceq \lca(v_{i-1}, v_i) \preceq \lca(v_i, v_{i+1}) \preceq
  v_{i+1} \enspace, \]
  and in the second case we get
  \[ v' \preceq \lca(v_i, v_{i+1}) \preceq \lca(v_{i-1},
  v_i) \preceq v_{i-1} \enspace, \]
  which are contradictions. \qed
\end{proof}

Fortunately, we can use the proof of the previous theorem to conclude
that four is the least power at which the class
$\mathord{\not{\!\varepsilon}}$l-XT is closed under composition.

\begin{corollary} \upshape
  \label{cor:min3}
  $\text{$\mathord{\not{\!\varepsilon}}$l-XT}^3 \subsetneq
  \text{$\mathord{\not{\!\varepsilon}}$l-XT}^4 =
  \text{$\mathord{\not{\!\varepsilon}}$l-XT}^n$ for every $n \geq 4$.
\end{corollary}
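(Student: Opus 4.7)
The plan is to reuse the witness $\tau = M_1 \mathbin; M_2 \mathbin; M_3$ from the proof of Theorem~\ref{thm:min2}. The equality $\text{$\mathord{\not{\!\varepsilon}}$l-XT}^4 = \text{$\mathord{\not{\!\varepsilon}}$l-XT}^n$ for every $n \geq 4$ is immediate from Corollary~\ref{cor:l}, which states that $\text{$\mathord{\not{\!\varepsilon}}$l-XT}$ is closed under composition at power~$4$. It remains to establish $\tau \in \text{$\mathord{\not{\!\varepsilon}}$l-XT}^4$ and $\tau \notin \text{$\mathord{\not{\!\varepsilon}}$l-XT}^3$.

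Membership in $\text{$\mathord{\not{\!\varepsilon}}$l-XT}^4$ is easy: since $M_1$ is an $\text{$\mathord{\not{\!\varepsilon}}$l-xt}^{\text R}$, the standard look-ahead elimination used in the proof of Corollary~\ref{cor:l} gives $M_1 \in \text{QR} \mathbin; \text{$\mathord{\not{\!\varepsilon}}$l-XT}$. Combining this with $\text{QR} \subseteq \text{$\mathord{\not{\!\varepsilon}}$snl-XT} \subseteq \text{$\mathord{\not{\!\varepsilon}}$l-XT}$ and with $M_2, M_3 \in \text{$\mathord{\not{\!\varepsilon}}$snl-XT} \subseteq \text{$\mathord{\not{\!\varepsilon}}$l-XT}$, we obtain $\tau \in \text{$\mathord{\not{\!\varepsilon}}$l-XT}^4$ directly.

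For $\tau \notin \text{$\mathord{\not{\!\varepsilon}}$l-XT}^3$ I would argue by contradiction. Suppose $\tau = N_1 \mathbin; N_2 \mathbin; N_3$ for some $\text{$\mathord{\not{\!\varepsilon}}$l-xt}$ $N_1, N_2, N_3$. By Lemma~\ref{lm:Dep} each $\link(N_i)$ is hierarchical with bounded distance $k_i$, and every derivation of a pair $(t,u) \in \tau$ factors through two intermediate trees $s_1, s_2$ with link structures $D_1 \subseteq \pos(t) \times \pos(s_1)$, $D_2 \subseteq \pos(s_1) \times \pos(s_2)$, and $D_3 \subseteq \pos(s_2) \times \pos(u)$. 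I would then revisit the positional analysis in the proof of Theorem~\ref{thm:min2}, choosing $(t, u)$ with $\sigma_1$-spines and interspersed subtrees long enough to exceed $k_i$ at each of the three layers. Bounded distance at every layer then forces pivot links, and hierarchicality constrains their ancestor relations, yielding analogues of the positions $v'$ and $v''$ of Theorem~\ref{thm:min2}, now distributed between $s_1$ and $s_2$.

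The main obstacle is exactly this distribution across two intermediate trees: the composed relation from $\pos(t)$ to $\pos(u)$ obtained by chaining $D_1, D_2, D_3$ is only weakly hierarchical (two links with strictly comparable sources can meet at a single position in some intermediate tree), so one cannot collapse the three layers into one and invoke Theorem~\ref{thm:min2} as a black box. Instead I track the pivot positions separately in $s_1$ and $s_2$, use both the input- and output-hierarchicality of $D_2$ to tie them together, and rederive the contradictory $\lca$-inequalities from Theorem~\ref{thm:min2} in this three-layer setting. Suitably scaling $t$ and $u$ ensures the bounded-distance hypothesis at every layer yields a usable pivot; once the bookkeeping is set up, the final $\lca$-based contradiction is a straightforward adaptation of the one in Theorem~\ref{thm:min2}.
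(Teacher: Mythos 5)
Your handling of the equality (via Corollary~\ref{cor:l}) and of the membership $\tau \in \text{$\mathord{\not{\!\varepsilon}}$l-XT}^4$ is fine. The gap is in your plan for $\tau \notin \text{$\mathord{\not{\!\varepsilon}}$l-XT}^3$: a three-layer version of the linking argument of Theorem~\ref{thm:min2}, resting only on hierarchicality and bounded distance of $\link(N_1)$, $\link(N_2)$, $\link(N_3)$, cannot succeed. Lemma~\ref{lm:Dep} holds verbatim for l-xt${}^{\text R}$, and $\tau = M_1 \mathbin; M_2 \mathbin; M_3$ \emph{is} a composition of three such transducers; so $\tau$ itself factors through two intermediate trees whose link structures are hierarchical with bounded distance. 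Any contradiction you could extract from those two properties alone in the three-layer setting would therefore apply to $M_1$, $M_2$, $M_3$ as well, which is absurd. The difference between l-xt and l-xt${}^{\text R}$ is invisible at the level of link structures; it lies entirely in the ability to constrain the subtrees that get deleted.

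That is precisely where the paper's argument goes instead: the proof of Theorem~\ref{thm:min2} shows that any decomposition of~$\tau$ must, in a first stage, delete the contexts indicated by the triangles (such as~$c$) in Figure~\ref{fig:min2} --- otherwise a linking point inside such a context triggers the two-layer contradiction --- so the first stage must realize a variant of the l-xt${}^{\text R}$~$M_1$. Performing that deletion requires identifying the left-most $\sigma_2$-subtree, a regular look-ahead test that a single l-xt cannot carry out, so without look-ahead two transducers are already consumed by the deletion, leaving only one for a part that provably needs two. This look-ahead obstruction is the missing ingredient in your proposal; without it, the ``straightforward adaptation of the $\lca$-based contradiction'' you announce does not exist.
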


\begin{proof} The inclusion is trivial and the equality follows from
  Corollary~\ref{cor:l}.  For the strictness,
  the proof of Theorem~\ref{thm:min2} essentially shows that in the
  first step we must delete the contexts indicated by triangles (such
  as~$c$) in Figure~\ref{fig:min2} because otherwise we can apply the
  method used in the proof to derive a contradiction (it relies on the
  existence of a linking point inside such a context~$c$).  Thus, in
  essence we must first implement a variant of the l-xt${}^{\text
    R}$~$M_1$ of Example~\ref{ex:lXT}.  It is a simple exercise to
  show that the deletion of the excess material cannot be done by a
  single l-xt as it cannot reliably determine the left-most occurrence
  of~$\sigma_2$ without the look-ahead.  Thus, if we only have l-xt to
  achieve the transformation, then we already need a composition of
  two~l-xt to perform the required deletion of the contexts, which
  proves the main statement. \qed
\end{proof}

In Table~\ref{tab:min} we summarize the results of this and the
previous section, which allow us to present the least power at which
the closure of the considered composition hierarchies is achieved.
For the sake of completeness, we also present the corresponding
results for the classes
$\text{$\mathord{\not{\!\varepsilon}}$snl-XT}$~and~$\text B(\text l,
\text l)$ that were obtained in~\cite{arndau82,dau77}.  Recall that
$\text B(\text l, \text l)$ is the class of all tree transformations
computable by bimorphisms, in which both tree homomorphisms are
linear.

\begin{table}[t]
  \begin{center}
    \begin{tabular}{|l|c|c|}
      \hline
      & & \\[-1ex]
      \; Class & \mbox{\quad} Least power of closedness \mbox{\quad} & Proved in
      \\[1ex] 
      \hline
      & & \\[-1ex]
      \; $\text B(\text l, \text l)$ & $4$ & 
      \protect{\cite[Section~II-2-2-3-3]{dau77}} \\[1ex] 
      \; $\text{$\mathord{\not{\!\varepsilon}}$snl-XT}=\text{$\mathord{\not{\!\varepsilon}}$snl-XT}^{\text R}$ \; & $2$ &
      \protect{\cite[Theorem~6.2]{arndau82}} \\[1ex]
      \hline
      & & \\[-1ex]
      \; $\text{$\mathord{\not{\!\varepsilon}}$sl-XT}^{\text R}$,  $\text{$\mathord{\not{\!   \varepsilon}}$sl-XT}$ & $2$ &
      Theorem \ref{thm:min1} \\[1ex]
      \; $\text{$\mathord{\not{\!\varepsilon}}$l-XT}^{\text R}$ & $3$ &
      Theorem \ref{thm:min2} \\[1ex]
      \; $\text{$\mathord{\not{\!\varepsilon}}$l-XT}$ & $4$ &
      Corollary \ref{cor:min3} \\[1ex] 
      \hline
    \end{tabular}
  \end{center}
  \caption{Summary of the results of Sections
    \protect{\ref{sec:upper}~and~\ref{sec:lower}}.} 
  \label{tab:min}
\end{table}

\section{Infinite composition hierarchies}
\label{sec:complete}
To complete the picture, we will need one further result showing the
infiniteness of the composition hierarchy for a large number of
classes.  In order to obtain a result that is as general as possible,
we use bimorphisms~\cite{arndau82} instead of l-xt${}^{\text R}$ in
this section.  We conclude several results for various tree transducer
classes from the result for bimorphisms.

The main auxiliary notion used in the proof of the infiniteness of the
composition hierarchy is a notion assigning levels to positions in a
tree.  Let $t \in T_\Sigma$ and $\ell \in \nat$ be an arbitrary
integer.  Since branching positions (i.e., those that are labeled by
symbols of rank at least~$2$) will play an essential role, we define
the set of all branching positions of~$t$ together with two different
successors and the branching positions along a given path.  For every
$w, ww'' \in \pos(t)$, let
\begin{align*}
  \branch_t &= \{ \langle w, i, j\rangle \mid w \in \pos(t),\, t(w)
  \notin \Sigma_0 \cup \Sigma_1,\, 1 \leq i, j \leq \rk(t(w)),\, i
  \neq j\} \\
  \branch_t(w, w'') &= \{ ww' \mid w' \preceq w'' \text{ and } (ww',
  i, j) \in \branch_t \text{ for some } i,j  \} \enspace. 
\end{align*}
We inductively define the sets~$\SSP^\ell_n(t) \subseteq \pos(t)
\times \nat \times \nat$ and $\SP^\ell_n(t) \subseteq \pos(t)$ for
every $t \in T_\Sigma$ and $n \in \nat$ as follows:
\begin{align*}
  \SSP^\ell_0(t) &= \branch_t \qquad \text{and} \qquad \SP^\ell_0(t) =
  \{ w \mid \langle w, i, j\rangle \in \SSP^\ell_0(t) \} \\*[1ex]
  \SSP^\ell_{n+1}(t) &= \{ \langle w, i, j\rangle \in \branch_t \mid
  \exists w_1 \in \SP^\ell_n(t|_{wi}) \colon \abs{\branch_t(wi, w_1)
    \cap \SP^\ell_n(t)} \geq \ell^{n+1}, \\*
  & \phantom{{}= \{ \langle w, i, j\rangle \in \branch_t \mid {}}
  \exists w_2 \in \SP^\ell_n(t|_{wj}) \colon \abs{\branch_t(wj, w_2)
    \cap \SP^\ell_n(t)} \geq \ell^{n+1} \} \\*
  \SP^\ell_{n+1}(t) &= \{ w \mid \langle w, i, j\rangle \in
  \SSP^\ell_{n+1}(t) \} \enspace.
\end{align*}
Intuitively, all branching positions have level~$0$ (for any
distance~$\ell$) and a branching position~$w$ has level~$n+1$ if there
are 2~paths in different direct subtrees below~$w$ that both have at
least $\ell^{n+1}$~branching positions of level~$n$ along the path.
Clearly, $\SSP^\ell_{n+1}(t) \subseteq \SSP^\ell_n(t)$ and
$\SP^\ell_{n+1}(t) \subseteq \SP^\ell_n(t)$ for all $n \in \nat$ and
$t \in T_\Sigma$.

\begin{figure}[t]
  \centering
  \includegraphics{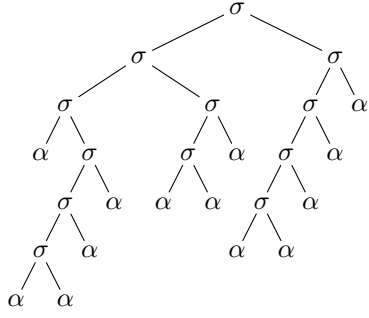}
  \caption{Tree used in Example~\protect{\ref{ex:sp}}.}
  \label{fig:sp}
\end{figure}

\begin{example}
  \label{ex:sp}
  Let $t$~be the tree depicted in Figure~\ref{fig:sp}.  Then 
  \begin{align*}
    \SP^2_0(t) &= \{\varepsilon, 1, 11, 112, 1121, 11211, 12, 121, 2,
    21, 211, 2111 \} = \branch_t \\
    \SP^2_1(t) &= \{\varepsilon, 1\} \\
    \SP^2_2(t) &= \emptyset \enspace.
  \end{align*}
\end{example}

\begin{lemma} \upshape
  \label{lm:SP1}
  Let $\varphi \colon T_\Gamma \to T_\Delta$~be a complete
  tree homomorphism.  Moreover, let $t \in T_\Gamma$ and $\ell, n \in
  \nat$.  If $\langle \varepsilon, i, j\rangle \in \SSP^\ell_n(t)$,
  then there exists $\langle v, i', j'\rangle \in
  \SSP^\ell_n(\varphi(t))$ such that
  \begin{compactitem}
  \item $v \in \pos(\varphi(t(\varepsilon)))$,
  \item $vi' \preceq v_1$ for some $v_1 \in
    \pos_{x_i}(\varphi(t(\varepsilon)))$, and
  \item $vj' \preceq v_2$ for some $v_2 \in
  \pos_{x_j}(\varphi(t(\varepsilon)))$.
  \end{compactitem}
\end{lemma}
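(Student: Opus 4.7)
My plan is to prove the lemma by induction on $n$. The base case $n=0$ should be routine: since $\langle\varepsilon,i,j\rangle\in\branch_t$, the root symbol $t(\varepsilon)$ has rank $k\ge 2$ and $i\ne j$; completeness of $\varphi$ guarantees occurrences $v_1\in\pos_{x_i}(\varphi(t(\varepsilon)))$ and $v_2\in\pos_{x_j}(\varphi(t(\varepsilon)))$, so setting $v=\lca(v_1,v_2)$ and writing $v_1=vi'v'_1$, $v_2=vj'v'_2$ will produce a branching point $\langle v,i',j'\rangle\in\SSP^\ell_0(\varphi(t))$ meeting the three required properties.

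For the inductive step I assume the claim for $n$ and take $\langle\varepsilon,i,j\rangle\in\SSP^\ell_{n+1}(t)$. The base-case construction still yields $v,i',j'$ with $vi'\preceq v_1$ and $vj'\preceq v_2$; what remains is to show that below $vi'$ (and symmetrically below $vj'$) one can find a $\preceq$-chain containing at least $\ell^{n+1}$ level-$n$ branching positions of $\varphi(t)$. By the definition of $\SSP^\ell_{n+1}$ there exist $w_1\in\SP^\ell_n(t|_i)$ and an enumeration $y_1\prec\dotsb\prec y_M$ of $\branch_t(i,w_1)\cap\SP^\ell_n(t)$ with $M\ge\ell^{n+1}$. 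For each $s<M$ let $p_s$ be the first step from $y_s$ toward $y_{s+1}$. A crucial combinatorial observation I must establish is that $p_s$ is always a valid level-$n$ direction at $y_s$, i.e.\ there exists $q_s$ with $\langle y_s,p_s,q_s\rangle\in\SSP^\ell_n(t)$: this follows because $y_{s+1}\in\SP^\ell_n(t)$ produces a subtree $t|_{y_{s+1}p}$ containing a long chain of level-$(n-1)$ branching points, and prepending the path segment from $y_sp_s$ down to $y_{s+1}p$ supplies a witness for direction $p_s$ at $y_s$; a companion $q_s$ exists because $y_s\in\SP^\ell_n(t)$ already exhibits two valid directions.

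I will then apply the induction hypothesis to $t|_{y_s}$ with the pair $(p_s,q_s)$, which yields $u_{y_s}\in\pos(\varphi(t(y_s)))$ that is a level-$n$ branching point of $\varphi(t|_{y_s})$ and admits a descendant $x_{p_s}$-occurrence inside the template $\varphi(t(y_s))$. Choosing the embedding of $\varphi(t|_{y_{s+1}})$ into $\varphi(t)$ so that the path to its root passes through this specific $x_{p_s}$-descendant of $u_{y_s}$ (with arbitrary choices at the intermediate template positions between $y_sp_s$ and $y_{s+1}$) will guarantee $\phi(y_s)u_{y_s}\preceq\phi(y_{s+1})u_{y_{s+1}}$, so the $M$ positions $\phi(y_s)u_{y_s}$ constitute a $\preceq$-chain of level-$n$ branching positions of $\varphi(t)$ sitting below $vi'$. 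A symmetric argument handles $j'$, completing $\langle v,i',j'\rangle\in\SSP^\ell_{n+1}(\varphi(t))$. The hard part will be exactly the combinatorial claim that $p_s$ may serve as an IH direction: without aligning the IH's directions with the actual path taken in $t$, the witnesses $u_{y_s}$ could lie off a common path in $\varphi(t)$ and the whole chaining argument would collapse.
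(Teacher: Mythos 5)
Your plan is correct and follows essentially the same route as the paper's proof: the $\lca$ construction of $\langle v,i',j'\rangle$ from two variable occurrences in the base case, and in the inductive step the re-orientation of the direction pair at each intermediate level-$n$ position toward the witnessing path (justified by the level-$n$-ness of a deeper position on that path) followed by an application of the induction hypothesis at each such position to produce $\ell^{n+1}$ nested level-$n$ positions in $\varphi(t)$. The only cosmetic difference is that you align each $y_s$ with its successor $y_{s+1}$ and chain consecutively, whereas the paper aligns every $w_1'$ directly with the endpoint $iw_1$; since all chain elements are prefixes of $iw_1$, the chosen first-step directions coincide.
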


\begin{proof}
  Let $t = \gamma(\seq t1k)$ with $\gamma \in \Gamma_k$ and $\seq t1k
  \in T_\Gamma$, and let $u = \varphi(t)$.  We prove the statement by
  induction on~$n$.  In the induction base, we have $n = 0$ and
  $\langle \varepsilon, i, j\rangle \in \SSP^\ell_0(t)$.  We
  arbitrarily select $v_1 \in \pos_{x_i}(\varphi(\gamma))$ and $v_2
  \in \pos_{x_j}(\varphi(\gamma))$, which are positions of the
  variables $x_i$~and~$x_j$ in~$\varphi(\gamma)$, respectively.  Since
  $\varphi$~is complete, such positions exist.  Let $v = \lca(v_1,
  v_2)$ be the longest common prefix, and let $v_1 = vi'v'_1$ and $v_2
  = vj'v'_2$.  Clearly, $v \in \pos(\varphi(\gamma))$ and $\langle v,
  i', j'\rangle \in \branch_{\varphi(t)}$ because $v_1$~and~$v_2$ are in the
  distinct direct subtrees $i'$~and~$j'$ below~$v$.  Trivially, $vi'
  \preceq v_1$ and $vj' \preceq v_2$, which completes the induction
  base. 
 
  In the induction step, let $\langle \varepsilon, i, j\rangle \in
  \SSP^\ell_{n+1}(t)$ and $w_1 \in \SP^\ell_n(t_i)$ and $w_2 \in
  \SP^\ell_n(t_j)$ be the required positions of level~$n$ such that
  \[ \abs{\branch_t(i, w_1) \cap \SP^\ell_n(t)} \geq \ell^{n+1} \qquad
  \text{and} \qquad \abs{\branch_t(j, w_2) \cap \SP^\ell_n(t)} \geq
  \ell^{n+1} \enspace. \]  We observe that
  \[ u = \varphi(t) = \varphi(\gamma)[\varphi(t_1), \dotsc,
  \varphi(t_k)] \enspace. \] Now, we follow a similar approach as in
  the induction base.  Since $\varphi$~is complete, the tree~$u$
  contains the subtrees $\varphi(t_i)$~and~$\varphi(t_j)$ at any of
  the positions $v_1 \in \pos_{x_i}(\varphi(\gamma))$ and $v_2 \in
  \pos_{x_j}(\varphi(\gamma))$, respectively.  As before, we let $v =
  \lca(v_1, v_2) \in \pos(\varphi(\gamma))$ be the longest common
  prefix, and let $v_1 = vi'v'_1$ and $v_2 = vj'v'_2$.  Clearly,
  $\langle v, i', j'\rangle \in \branch_u$ is a branching position.
  It remains to show that $\langle v, i', j'\rangle \in
  \SSP^\ell_{n+1}(u)$ because the remaining conditions of the lemma
  are already fulfilled.

  Due to the fact that $w_1 \in \SP^\ell_n(t_i)$, we have $\langle
  \varepsilon, i'', j''\rangle \in \SSP^\ell_n(t_i|_{w_1})$ for some
  $i'', j'' \in \nat$.  Moreover, since $\varphi$~is complete, the
  subtree~$\varphi(t_i|_{w_1})$ occurs in~$u|_{v_1}$.  By the
  induction hypothesis, there exist $v''_1 \in
  \SP^\ell_n(\varphi(t_i|_{w_1}))$.  Moreover, $v_1v'v''_1 \in
  \SP^\ell_n(u)$ for some $v' \in \pos(u|_{v_1})$ such that $u|_{v'} =
  \varphi(t_i|_{w_1})$.  We will only verify the condition
  \begin{equation}
    \label{eq}
    \abs{\branch_u(vi', v'_1v'v''_1) \cap \SP^\ell_n(u)} \geq \ell^{n+1}
  \end{equation}
  because the proof for the second path works analogously.  

  Let $w'_1 \in \branch_t(i, w_1) \cap \SP^\ell_n(t)$ be any position
  of level~$n$ along the path from~$i$ to~$iw_1$ such that $w'_1 \prec
  iw_1$.  Let $i''' \in \nat$ be the unique integer such that
  $w'_1i''' \preceq iw_1$.  Since $w'_1, iw_1 \in \SP^\ell_n(t)$ we
  can conclude that there exist $i'', j'' \in \nat$ such that
  $i'''\neq j''$ and that $\langle w'_1, i'', j''\rangle \in
  \SSP^\ell_n(t)$.  Then $\langle w'_1, i''', j'' \rangle \in
  \SSP^\ell_n(t)$ because $iw_1 \in \SP^\ell_n(t)$.  In other words,
  $\langle \varepsilon, i''', j''\rangle \in \SSP^\ell_n(t|_{w'_1})$.
  Since $\varphi$~is complete, the translation~$\varphi(t|_{w'_1})$
  occurs in~$\varphi(t_i)$.  By the induction hypothesis, there exists
  $\langle v''', i_1, j_1 \rangle \in
  \SSP^\ell_n(\varphi(t|_{w'_1}))$, which yields $\langle v''v''',
  i_1, j_1\rangle \in \SSP^\ell_n(\varphi(t_i))$ for some position
  $v'' \in \pos(\varphi(t_i))$ with $\varphi(t_i)|_{v''} =
  \varphi(t_{w'_1})$, such that $v'v''v'''i_1 \preceq v'_1$.  Clearly,
  \[ v_1v'v''v''' \in \branch_u(vi', v'_1v'v''_1) \cap \SP^\ell_n(u)
  \enspace. \] Moreover, the two conditions that $v''' \in
  \pos(\varphi(t(w'_1)))$ and that $\varphi$~is complete guarantee
  that for each selection of~$w'_1$ we obtain a different position
  $v_1v'v''v''' \in \branch_u(vi', v'_1v'v''_1) \cap
  \SP^\ell_n(u)$. This verifies (\ref{eq}), because there are at least
  $\ell^{n+1}$ possible selections of $w'_1$. \qed
\end{proof}

\begin{lemma} \upshape
  \label{lm:SP2}
  Let $\psi \colon T_\Gamma \to T_\Sigma$ be a linear tree
  homomorphism.  Moreover, let $t \in T_\Gamma$ and $\ell, n \in \nat$
  be such that $\ell > \height(\psi(\gamma'))$ for all symbols
  $\gamma' \in \Gamma$.  If $\langle v, i', j'\rangle \in
  \SSP^\ell_{n+1}(\psi(t))$ with $v \in \pos(\psi(t(\varepsilon)))$,
  then there exists $\langle \varepsilon, i, j\rangle \in
  \SSP^\ell_n(t)$ such that
  \begin{compactitem}
  \item $vi' \preceq v_1$ for some $v_1 \in
    \pos_{x_i}(\psi(t(\varepsilon)))$ and
  \item $vj' \preceq v_2$ for some $v_2 \in
  \pos_{x_j}(\psi(t(\varepsilon)))$.
  \end{compactitem}
\end{lemma}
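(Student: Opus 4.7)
The plan is to induct on $n$, paralleling Lemma~\ref{lm:SP1}. Let $t = \gamma(t_1, \dotsc, t_k)$ with $\gamma \in \Gamma_k$, so that $\psi(t) = \psi(\gamma)[\psi(t_1), \dotsc, \psi(t_k)]$ and $v \in \pos(\psi(\gamma))$.

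A geometric observation common to both the base case and the induction step is the following. Unfolding $\SSP^\ell_{n+1}(\psi(t))$ yields some $w'_1 \in \SP^\ell_n(\psi(t)|_{vi'})$ such that the path from $vi'$ to $vi' w'_1$ carries at least $\ell^{n+1}$ level-$n$ branching positions of $\psi(t)$, and symmetrically on the $j'$-side. Since $\height(\psi(\gamma)) < \ell$, at most $\ell - 1$ of those branching positions can lie at non-variable positions of $\psi(\gamma)$, so the path must enter some subtree $\psi(t_s)$ through a variable $v_1 \in \pos_{x_i}(\psi(\gamma))$ with $vi' \preceq v_1$, and analogously $v_2 \in \pos_{x_j}(\psi(\gamma))$ with $vj' \preceq v_2$. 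Linearity of $\psi$ forces each variable to occur at most once in $\psi(\gamma)$, so $v_1 \neq v_2$ (they sit in different children of $v$) yields $i \neq j$. In the base case $n = 0$ these two distinct variable occurrences already force $\rk(\gamma) \geq 2$, hence $\langle \varepsilon, i, j\rangle \in \branch_t = \SSP^\ell_0(t)$.

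For the induction step I argue on the $i'$-side; the $j'$-side is symmetric. Subtracting the at most $\ell - 1$ positions sitting at non-variable points of $\psi(\gamma)$, we find at least $\ell^{n+2} - (\ell - 1)$ level-$(n+1)$ branching positions on the path at or below $v_1$, hence inside $\psi(t)|_{v_1} = \psi(t_i)$. Because the level of a position depends only on the subtree hanging off it (a short induction from the definition of $\SSP^\ell_m$), each such $w$ is also a level-$(n+1)$ position of $\psi(t_i)$. Let $\mu(w) \in \pos(t_i)$ denote the position in $t_i$ of the symbol whose $\psi$-image contains $w$; the same locality lets me apply the induction hypothesis to the subtree $t_i|_{\mu(w)}$ and conclude $\mu(w) \in \SP^\ell_n(t_i)$. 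The $\mu$-values lie along a single root-to-node path in $t_i$ (because the underlying $w$'s lie on such a path in $\psi(t_i)$), and each single-symbol image $\psi(t_i(u))$ has height below $\ell$ and thus absorbs at most $\ell$ of the considered $w$'s. Consequently the number of distinct $\mu$-values along the path is at least $\lceil (\ell^{n+2} - (\ell - 1))/\ell \rceil = \ell^{n+1}$ for $\ell \geq 2$. Prepending $i$ converts these into $\ell^{n+1}$ level-$n$ branching positions of $t$ along a path from position $i$, verifying the $i$-side of $\langle \varepsilon, i, j\rangle \in \SSP^\ell_{n+1}(t)$.

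The main obstacle will be managing the counting in the induction step: one has to check that the ``collapse'' through $\psi$ (several path positions of $\psi(t_i)$ may belong to a single symbol image of $t_i$) costs no more than a factor of $\ell$, so that the extra factor of $\ell$ present in $\ell^{n+2}$ compared to $\ell^{n+1}$ is exactly enough to let the level drop by precisely one and no more.
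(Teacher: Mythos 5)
Your proposal follows essentially the same route as the paper's own proof: the height bound $\ell > \height(\psi(\gamma'))$ forces the two long branching paths out of $v$ through distinct variable occurrences (distinct by linearity), the base case is then immediate, and the induction step maps each level-$(n{+}1)$ branching position on the path in $\psi(t_i)$ back to the symbol of $t_i$ that creates it, applies the induction hypothesis locally to conclude it yields a level-$n$ position, and recovers the required count via the same ``at most $\ell$ path positions per symbol image'' estimate, $\lceil(\ell^{n+2}-\ell+1)/\ell\rceil = \ell^{n+1}$. The details you gloss over (that the collapsed positions all lie on one root-to-node path ending in a suitable $w_1 \in \SP^\ell_n(t_i)$, and that levels are local to subtrees) are exactly the ones the paper spells out, so no substantive gap remains.
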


\begin{proof}
  Let $t = \gamma(\seq t1k)$ with $\gamma \in \Gamma_k$ and $\seq t1k
  \in T_\Gamma$, and let $u = \psi(t)$ be its image.  Moreover, let
  $\langle v, i', j'\rangle \in \SSP^\ell_{n+1}(u)$ with $v \in
  \pos(\psi(\gamma))$.  By definition, there exist long paths in~$u$
  starting at $vi'$~and~$vj'$, which are longer than~$\ell$, which in
  turn is longer than any path in~$\psi(\gamma')$ for every $\gamma'
  \in \Gamma$.  Consequently, there exist $1 \leq i, j \leq k$ and
  \[ v_1 \in \pos_{x_i}(\psi(\gamma)) \qquad \text{and} \qquad v_2 \in
  \pos_{x_j}(\psi(\gamma)) \]
  such that $vi' \preceq v_1$ and $vj' \preceq v_2$.  Since $\psi$~is
  linear, we know that $i \neq j$.  It remains to prove that $\langle
  \varepsilon, i, j\rangle \in \SSP^\ell_n(t)$ as the additional
  properties of the lemma are already satisfied.  We prove this
  remaining property by induction on~$n$, and the induction base is
  already proven as $\langle \varepsilon, i, j\rangle \in \branch_t$.  

  We proceed with the induction step.  By definition, there exist
  positions $v'_1 \in \SP^\ell_n(u|_{vi'})$ and $v'_2 \in
  \SP^\ell_n(u|_{vj'})$ such that 
  \[ \abs{\branch_u(vi', v'_1) \cap \SP^\ell_n(u)} \geq \ell^{n+1}
  \qquad \text{and} \qquad \abs{\branch_u(vj', v'_2) \cap
    \SP^\ell_n(u)} \geq \ell^{n+1} \enspace. \]
  Let $w_1 \in \pos(t_i)$ be the position that creates the
  symbol~$u(vi'v'_1)$.    We only prove the required property for the
  path $\branch_t(i, w_1)$.

  Since $w_1$~creates the symbol~$u(vi'v'_1)$, there exists a
  position~$v'' \in \pos(u)$ such that $vi' \preceq v'' \preceq
  vi'v'_1$ and $u|_{v''} = \psi(t|_{w_1})$.  Moreover, by $v'_1 \in
  \SP^\ell_n(u|_{vi'})$ there exists $\langle v''_1, i'_1, j'_1\rangle
  \in \SSP^\ell_n(u|_{v''})$ with $vi'v'_1 = v''v''_1$, which yields
  that $v''_1 \in \pos(\psi(t(w_1)))$.  With the help of the induction
  hypothesis we conclude the existence of $\langle \varepsilon, i_1,
  j_1\rangle \in \SSP^\ell_n(t|_{w_1})$, and thus $w_1 \in
  \SP^\ell_n(t)$.  Finally, for every $v' \in \branch_u(vi', v'_1)
  \cap \SP^\ell_n(u)$.  Let $w' \in \pos(t_i)$ with $i \preceq w'
  \preceq w_1$ be the position that creates the symbol~$u(v')$, and
  let $v'' \in \pos(u)$ be the position such that $vi' \preceq v''
  \preceq v'$ and $u|_{v''} = \psi(t|_{w'})$.  Since $v' \in
  \SP^\ell_n(u)$ there exists $\langle v''_1, i'_1, j'_1\rangle \in
  \SSP^\ell_n(u|_{v''})$ with $v' = v''v''_1$, which yields that
  $v''_1 \in \pos(\psi(t(w')))$.  Moreover, let $i''_1 \in \nat$ be
  the unique integer such that $v'i''_1 \preceq vi'v'_1$.
  Consequently, $\langle v''_1, i''_1, j'_1\rangle \in
  \SSP^\ell_n(u|_{v''})$ because $vi'v'_1 \in \SP^\ell_n(u)$.  By the
  induction hypothesis, there exists $\langle \varepsilon, i_1,
  j_1\rangle \in \SSP^\ell_n(t|_{w'})$ such that
  \begin{compactitem}
  \item $v''_1i''_1 \preceq v'''_1$ for some $v'''_1 \in
    \pos_{x_{i_1}}(\psi(t(w')))$ and
  \item $v''_1j'_1 \preceq v'''_2$ for some $v'''_2 \in
  \pos_{x_{j_1}}(\psi(t(w')))$.
  \end{compactitem}
  Moreso, we have $w' \in \branch_t(i, w_1) \cap \SP^\ell_n(t)$, and
  $\psi(t|_{w'i_1}) = u|_{v'''}$ for some position $vi' \preceq v''' \preceq
  vi'v'_1$, which determines another position of~$\branch_u(vi',
  v'_1)$.  Since at most~$\ell$ positions of~$\branch_u(vi', v'_1)$
  can be created by a single symbol of~$t$ and $\abs{\branch_u(v_1,
    v'_1) \cap \SP^\ell_n(u)} \geq \ell^{n+1} - \ell + 1 = (\ell^n -
  1 + \ell^{-1}) \ell$.  Consequently, we obtain at least~$\ell^n$
  positions in $\branch_t(i, w_1) \cap \SP^\ell_n(t)$, which completes the
  induction step and the proof.  \qed
\end{proof}

Next, we combine the previous two lemmas into the main statement that
will be used to prove the infinity of several composition hierarchies.
In essence, we show that a bimorphism in $\text B(\text l, \text c)$
can reduce the level by at most~$1$ (while keeping the
distance~$\ell$).

\begin{theorem} \upshape
  \label{thm:SP}
  Let $B = (\psi, L, \varphi)$ be a bimorphism such that $\psi \colon
  T_\Gamma \to T_\Sigma$ is linear and $\varphi \colon T_\Gamma \to
  T_\Delta$ is complete.  Moreover, let $(t, u) \in B$, and let $\ell
  \in \nat$ be such that $\ell > \height(\psi(\gamma))$ for every
  $\gamma \in \Gamma$.  If $\SP^\ell_{n+1}(t) \neq \emptyset$, then
  $\SP^\ell_n(u) \neq \emptyset$.
\end{theorem}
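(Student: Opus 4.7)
The plan is to compose Lemmas~\ref{lm:SP1} and~\ref{lm:SP2} through an intermediate witness $s \in L$ satisfying $\psi(s) = t$ and $\varphi(s) = u$. The key preliminary observation, which makes the two lemmas usable, is that $\SP^\ell_m$-levels are entirely local: for any tree $t'$ and position $w \in \pos(t')$, a triple $\langle w, i, j \rangle$ lies in $\SSP^\ell_m(t')$ if and only if $\langle \varepsilon, i, j\rangle \in \SSP^\ell_m(t'|_w)$. This follows directly from the recursive definition, since only the subtree below the position is ever inspected.

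First, I fix an arbitrary $\langle w, i', j'\rangle \in \SSP^\ell_{n+1}(t)$ (which exists by hypothesis) and locate the deepest $w' \in \pos(s)$ such that the branching at $w$ lies inside the local contribution of $s(w')$; concretely, I decompose $w = \bar w v$ with $\bar w \in \pos(t)$, $t|_{\bar w} = \psi(s|_{w'})$, and $v \in \pos(\psi(s(w')))$. Existence and uniqueness of such a $w'$ is routine from the substitution form of $\psi(s)$, using that branching labels cannot be variables. By locality, $\langle v, i', j'\rangle \in \SSP^\ell_{n+1}(\psi(s|_{w'}))$.

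Second, I apply Lemma~\ref{lm:SP2} to the tree $s|_{w'}$ in the role of its $t$; the hypothesis on $\ell$ is precisely the one assumed in the present theorem. This yields $\langle \varepsilon, i, j\rangle \in \SSP^\ell_n(s|_{w'})$, i.e.\ $w' \in \SP^\ell_n(s)$. I then feed this root triple into Lemma~\ref{lm:SP1}, also applied to $s|_{w'}$, obtaining $\langle v'', i'', j''\rangle \in \SSP^\ell_n(\varphi(s|_{w'}))$. Completeness of $\varphi$ guarantees that every variable of every $\varphi_k(\gamma')$ occurs in its right-hand side, so a straightforward induction on the depth of $w'$ produces a position $v' \in \pos(u)$ with $u|_{v'} = \varphi(s|_{w'})$. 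Applying locality a final time gives $\langle v'v'', i'', j''\rangle \in \SSP^\ell_n(u)$, which witnesses $\SP^\ell_n(u) \neq \emptyset$.

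The main obstacle is the locality observation together with the first-step decomposition: both auxiliary lemmas are stated only for triples rooted at~$\varepsilon$, and the whole argument rests on cleanly relating triples at interior positions of $t$ and $u$ to root-triples in the appropriate subtrees of $s$, $\psi(s|_{w'})$, and $\varphi(s|_{w'})$. Once locality is fixed, the theorem reduces to a two-step composition of the technical lemmas.
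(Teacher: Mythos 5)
Your proposal is correct and follows exactly the route of the paper's own proof: pick a witness $s \in L$ with $\psi(s) = t$ and $\varphi(s) = u$, apply Lemma~\ref{lm:SP2} to descend from $\psi(s)$ to $s$, then Lemma~\ref{lm:SP1} to pass from $s$ to $\varphi(s) = u$. The paper states this in three lines and leaves implicit the locality bookkeeping (relating interior triples of $t$ and $u$ to root triples of the appropriate subtrees of $s$) that you spell out; your added detail is accurate and is precisely what is needed to invoke the two lemmas as stated.
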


\begin{proof}
  Since $(t, u) \in B$, there exists $s \in L$ such that $\psi(s) = t$
  and $\varphi(s) = u$.  By assumption, we have that
  $\SP^\ell_{n+1}(\psi(s)) \neq \emptyset$, which by
  Lemma~\ref{lm:SP2} yields that $\SP^\ell_n(s) \neq \emptyset$.
  Using the completeness of~$\varphi$ and Lemma~\ref{lm:SP1}, we
  obtain that $\SP^\ell_n(u) \neq \emptyset$ as desired.  \qed
\end{proof}

Now we can simply chain Theorem~\ref{thm:SP} to show that an $n$-fold
composition of tree transformations of~$\text B(\text l, \text c)$ can
decrease the level by at most~$n$ (for a suitable distance~$\ell$).

\begin{corollary}[of~\protect{Theorem~\ref{thm:SP}}] \upshape
  \label{cor:SP}
  Let $n \geq 1$, and for every $1 \leq i \leq n$, let $B_i =
  (\psi_i, L_i, \varphi_i)$ be a bimorphism such that $\psi_i$~is
  linear and $\varphi_i$ is complete.  Moreover, let $\varphi_i \colon
  T_{\Gamma_i} \to T_{\Delta_i}$ and $\psi_{i+1} \colon
  T_{\Gamma_{i+1}} \to T_{\Delta_i}$ for every $1 \leq i < n$.
  Finally, let $\ell \in \nat$ be such that $\ell >
  \height(\psi_i(\gamma))$ for every $1 \leq i \leq n$ and $\gamma \in
  \Gamma_i$, and let $(t, u) \in B_1 \mathbin; \dotsm \mathbin; B_n$.
  If $\SP^\ell_{n+1}(t) \neq \emptyset$, then $\SP^\ell_1(u) \neq
  \emptyset$.
\end{corollary}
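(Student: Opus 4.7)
The plan is a straightforward induction on~$n$, using Theorem~\ref{thm:SP} as the workhorse that accounts for one bimorphism in the chain. The key observation is that Theorem~\ref{thm:SP} shows each bimorphism (with linear input homomorphism and complete output homomorphism) can drop the level~$n$ of the set~$\SP^\ell_\cdot(\cdot)$ by at most~$1$, provided $\ell$ exceeds the height of every image of an input symbol under~$\psi$; since the hypothesis on~$\ell$ in the corollary is uniform in~$i$, the same~$\ell$ is admissible for every individual~$B_i$.

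For the base case~$n = 1$, the statement is literally Theorem~\ref{thm:SP} applied to~$B_1$. For the inductive step, assume the claim for~$n$ and consider a composition $B_1 \mathbin; \dotsm \mathbin; B_{n+1}$ and a pair $(t, u)$ in it with $\SP^\ell_{n+2}(t) \neq \emptyset$. By definition of composition, there is some intermediate tree~$u'$ with $(t, u') \in B_1$ and $(u', u) \in B_2 \mathbin; \dotsm \mathbin; B_{n+1}$. I would first apply Theorem~\ref{thm:SP} to~$B_1$ alone to conclude $\SP^\ell_{n+1}(u') \neq \emptyset$, noting that $\ell > \height(\psi_1(\gamma))$ for every $\gamma \in \Gamma_1$ by assumption. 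Then I would invoke the induction hypothesis on the remaining $n$-fold composition $B_2 \mathbin; \dotsm \mathbin; B_{n+1}$ with the pair $(u', u)$, which yields $\SP^\ell_1(u) \neq \emptyset$, as required.

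There is essentially no obstacle here: the corollary is designed precisely so that a single application of the key Theorem~\ref{thm:SP} per bimorphism in the chain suffices, and the global choice of~$\ell$ guarantees that the height hypothesis of Theorem~\ref{thm:SP} is met at every step. The only bookkeeping point is to align the index shift correctly, i.e.\ that the tail~$B_2 \mathbin; \dotsm \mathbin; B_{n+1}$ really is a composition of~$n$ bimorphisms satisfying all the hypotheses of the inductive claim, which is immediate from the statement.
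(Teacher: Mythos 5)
Your proposal is correct and matches the paper's intent exactly: the paper gives no explicit proof but states that the corollary follows by ``simply chaining'' Theorem~\ref{thm:SP}, which is precisely your induction on~$n$ peeling off one bimorphism at a time. The bookkeeping you flag (uniform admissibility of~$\ell$ and the index shift for the tail composition) is handled correctly.
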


\begin{figure}[t]
  \centering
  \includegraphics[scale=1]{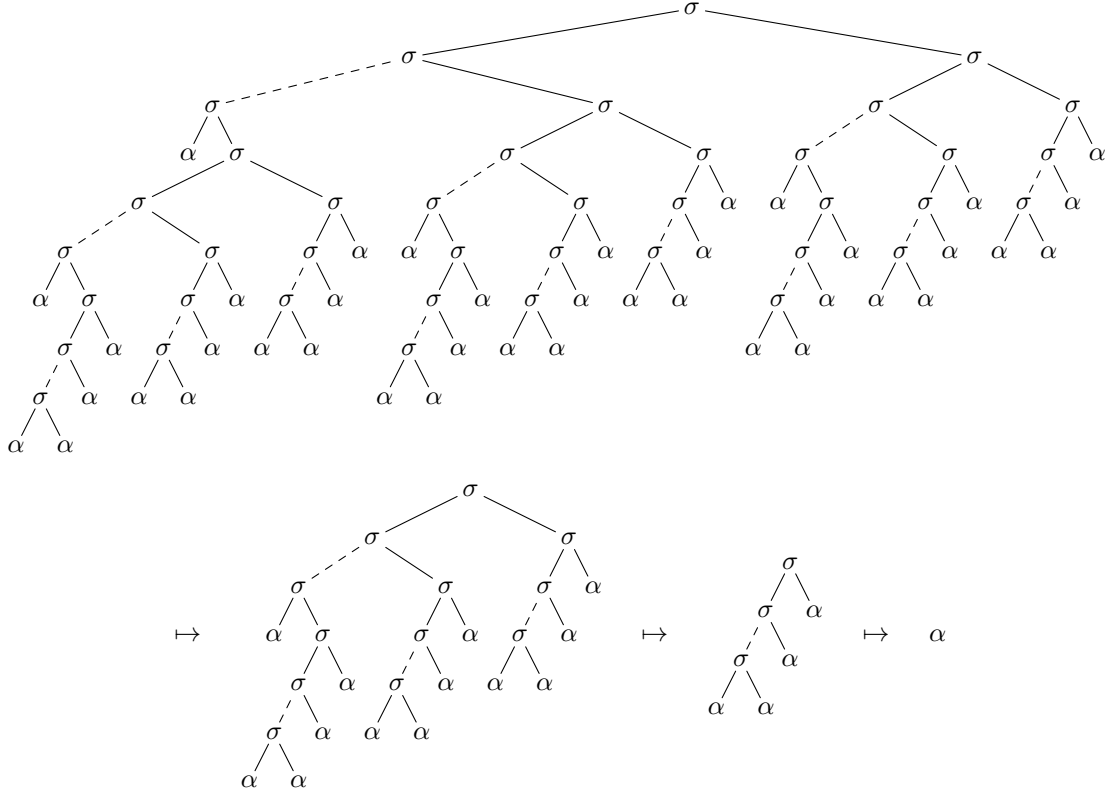}
  \caption{Illustration of the $3$-fold composition of the tree
    transformation computed by the
    $\mathord{\not{\!\varepsilon}}$nl-xt~$M$ of
    Example~\protect{\ref{ex:ns}}.}
  \label{fig:reduce}
\end{figure}

It remains to demonstrate a tree transformation that can be computed
by $(n+1)$~$\mathord{\not{\!\varepsilon}}$nl-xt and that reduces the
level of positions from~$n+1$ to~$0$.  Clearly, this tree
transformation cannot be computed by an $n$-fold composition of tree
transformations from~$\text B(\text l, \text c)$ because the output
tree should contain a position of level~$1$ by
Corollary~\ref{cor:SP}. We make sure that the assumptions of
Corollary~\ref{cor:SP} are satisfied.

\begin{example}
  \label{ex:ns}
  Let $M = (Q, \Sigma, \Sigma, \{\star\}, R)$ be the
  $\mathord{\not{\!\varepsilon}}$nl-xt such that 
  \begin{compactitem}
  \item $Q = \{\star, q\}$ and $\Sigma = \{\sigma^{(2)}, \alpha^{(0)}\}$, and
  \item $R$~contains exactly the following rules
    \begin{align*}
      \sigma(\star, \alpha) &\stackrel{\star,q} \longrightarrow \star &
      \sigma(\star, q) &\stackrel{\star,q} \longrightarrow
      \sigma(\star, q) & \alpha &\stackrel \star \longrightarrow 
      \alpha
    \end{align*}
  \end{compactitem}
  The $3$-fold composition $M \mathbin; M \mathbin; M$ can compute the
  transformation indicated in Figure~\ref{fig:reduce}.
\end{example}

We use the tree transformation computed by the
$\mathord{\not{\!\varepsilon}}$nl-xt~$M$ of Example~\ref{ex:ns}, and
show that $n$~transformations from~$\text B(\text l, \text c)$
cannot compute the tree transformation~$M^{n+1}$.

\begin{lemma} \upshape
  \label{thm:ns}
  For every $n \geq 1$,
   \[ \mathord{\not{\!\varepsilon}}\text{nl-XT}^{n+1} \not\subseteq
   \text{B(l,c)}^n \enspace. \]
\end{lemma}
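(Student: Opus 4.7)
The plan is to argue by contradiction using Corollary~\ref{cor:SP}. Suppose that $M^{n+1} = B_1 \mathbin; \dotsm \mathbin; B_n$ for bimorphisms $B_i = (\psi_i, L_i, \varphi_i) \in \text B(\text l, \text c)$, and choose $\ell > \max_i \max_{\gamma \in \Gamma_i} \height(\psi_i(\gamma))$ so that Corollary~\ref{cor:SP} applies. The proof then reduces to exhibiting a pair $(t, u) \in M^{n+1}$ with $\SP^\ell_{n+1}(t) \neq \emptyset$ but $\SP^\ell_1(u) = \emptyset$: since the assumption yields $(t, u) \in B_1 \mathbin; \dotsm \mathbin; B_n$, Corollary~\ref{cor:SP} forces $\SP^\ell_1(u) \neq \emptyset$, which is the desired contradiction.

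For the output tree $u$, any tree in which every $\sigma$-node has $\alpha$ as its right child suffices, because then all right subtrees are $\alpha$ and contribute no branching positions, so $\SP^\ell_1(u) = \emptyset$ holds trivially. For the input tree $t$, we use the recursive family of trees indicated by Figure~\ref{fig:reduce}: $t$ has $n+1$ nested layers, each a $\sigma$-branching whose two subtrees are chains of $\sigma$-nodes (with $\alpha$ as right child) terminating in a copy of the next inner layer. The rules of $M$ let a single application strip away an entire outer chain via repeated use of the rule with left-hand side $\sigma(\star, \alpha)$, so $n+1$ successive applications of $M$ collapse the $n+1$ nested layers one at a time, yielding~$u$ in the end and hence $(t, u) \in M^{n+1}$.

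The main obstacle is dimensioning the construction so that $\SP^\ell_{n+1}(t) \neq \emptyset$ holds for the chosen~$\ell$. Unfolding the definitions of $\SSP^\ell_i$ and $\SP^\ell_i$, a level-$(n+1)$ position requires $\ell^{n+1}$ level-$n$ branchings along a path in each of two subtrees, and recursively $\ell^i$ level-$(i-1)$ branchings at nesting depth~$i$. This forces the chain lengths at nesting depth~$i$ to grow as a polynomial in~$\ell$, with the parameters depending only on $\ell$ and $n$. One then verifies by induction on~$i$ from $0$ to $n+1$ both that the constructed tree contains the required level-$i$ branching positions on each path and that the $i$-th application of $M$ performs the intended collapse of the next outer layer.
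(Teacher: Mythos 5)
Your overall strategy is exactly the paper's: assume $M^{n+1}\in\text{B(l,c)}^n$, fix $\ell$ larger than the heights of all input-homomorphism images, and derive a contradiction from Corollary~\ref{cor:SP} by exhibiting $(t,u)\in M^{n+1}$ with $\SP^\ell_{n+1}(t)\neq\emptyset$ and $\SP^\ell_1(u)=\emptyset$. Your choice of~$u$ is also correct: in a comb every $\sigma$-node has an $\alpha$ as right child, so no position can be of level~$1$.

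The gap is in the construction of the input witness~$t$, which is the heart of the lemma. You describe each layer as a $\sigma$-branching whose two subtrees are chains of $\sigma$-nodes \emph{with $\alpha$ as right child}, terminating in a single copy of the next inner layer. Such a chain node $\sigma(\cdot,\alpha)$ can never have level exceeding~$0$, no matter how long the chain is, because its right subtree $\alpha$ contains no branching positions at all. Hence the only level-$i$ position contributed by a whole chain is the one sitting in the single inner-layer copy at its end, and the count $\abs{\branch_t(wi,w_1)\cap\SP^\ell_i(t)}\geq\ell^{i+1}$ demanded by the definition of $\SSP^\ell_{i+1}$ can never be reached for $i\geq 1$; already for $n=1$ one checks that your tree has $\SP^\ell_2(t)=\emptyset$. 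Making the chains longer (your ``polynomial in~$\ell$'' dimensioning) does not help, since length only multiplies level-$0$ positions. The fix, and what the paper actually does, is to hang a copy of the level-$n$ witness at \emph{every} node of the chain: the trees of~$L_{n+2}$ are left spines of length~$\ell^{n+2}$ all of whose right children are the inductively constructed witness from~$L_{n+1}$. Then each spine node is itself a level-$(n{+}1)$ position (both its right child and, one step further down the spine, the next right child supply paths with $\geq\ell^{n+1}$ level-$n$ positions), so the spine carries the $\ell^{n+2}$ level-$(n{+}1)$ positions needed to continue the induction. This shape is also what makes the second half of your claim work cleanly, namely that $M$ maps $L_{i+1}$ into~$L_i$, which you assert but do not verify for your differently shaped trees.
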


\begin{proof}
  Let $L_0 = \{\alpha\}$.  For every $i \geq 1$, let $C_i$~and~$L_i$
  be the smallest sets such that
  \begin{compactitem}
  \item $\sigma(x_1, t) \in C_i$ for every $t \in L_{i-1}$ and $c[c']
    \in C_i$ for all $c, c' \in C_i$, and
  \item $L_i = \{ c[\alpha] \mid c \in C_i\}$.
  \end{compactitem}
  It is easy to see that $n+1$~compositions of the tree
  transformation~$M$ computed by the
  $\mathord{\not{\!\varepsilon}}$nl-xt~$M$ of Example~\ref{ex:ns} can
  relate $L_{n+2}$~and~$L_1$ (see Figure~\ref{fig:reduce}); more
  precisely, for every $t \in L_{n+2}$ there exists $u \in L_1$ such
  that $(t, u) \in M^{n+1}$.  With the help of Corollary~\ref{cor:SP}
  we can complete the proof if $\SP^\ell_{n+1}(t) \neq \emptyset$ for
  some $t \in L_{n+2}$ and suitably large~$\ell \in \nat$.

  Let $\ell \in \nat$ be fixed.  Thus, we now prove that for every $n
  \in \nat$ there exists a tree~$t \in L_{n+1}$ such that
  $\SP^\ell_n(t) \neq \emptyset$ by induction on~$n$.  In fact, we
  prove the stronger statement that there exists $t \in L_{n+1}$ and
  $w \in \pos(t)$ such that $\abs{\branch_t(\varepsilon, w) \cap
    \SP^\ell_n(t)} \geq \ell^{n+1}$.  For $n = 0$, we select the
  tree~$c^\ell[\alpha] \in L_1$, where $c = \sigma(x_1, \alpha)$, and
  the position $w = 1^{\ell-1}$.  Since $\SP^\ell_0(t) =
  \branch_t(\varepsilon, w)$, the selection of $t$~and~$w$ fulfills
  the requirements.  In the induction step, there exist a tree~$t \in
  L_{n+1}$ and $w \in \pos(t)$ such that $\abs{\branch_t(\varepsilon,
    w) \cap \SP^\ell_n(t)} \geq \ell^{n+1}$.  We consider the tree~$t'
  = c^{\ell^{n+2}}[c[\alpha]]$ with $c = \sigma(x_1, t)$ and the
  position~$w' = 1^{\ell^{n+2}-1}$.  Obviously, $t' \in L_{n+2}$ and
  $w'' \in \SP^\ell_{n+1}(t')$ for every $w'' \preceq w'$ (via the
  paths $12w$~and~$2w$), which completes our induction and proof.
  \qed
\end{proof}

Now we are able to prove that the composition hierarchy of
$\mathord{\not{\!\varepsilon}}$nl-XT and several other classes is
infinite.

\begin{theorem} \upshape
  \label{cor:eps}
  For every $n \geq 1$,
  \begin{align*}
    \text{B(l,c)}^{n} &\subsetneq \text{B(l,c)}^{n+1} &
    \text{[s][n]l-XT}^n &\subsetneq \text{[s][n]l-XT}^{n+1} \\
    \mathord{\not{\!\varepsilon}}\text{nl-XT}^n & \subsetneq
    \mathord{\not{\!\varepsilon}}\text{nl-XT}^{n+1} &
    (\text{[s][n]l-XT}^{\text R})^n &\subsetneq
    (\text{[s][n]l-XT}^{\text R})^{n+1} \enspace.
  \end{align*}
\end{theorem}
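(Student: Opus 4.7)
The plan is to use $M^{n+1}$, with $M$ the $\mathord{\not{\!\varepsilon}}$nl-xt of Example~\ref{ex:ns}, as a uniform witness for every strict containment in the theorem, and to reduce each exclusion $M^{n+1} \notin {\cal C}^n$ to the already-established Lemma~\ref{thm:ns} via an inclusion ${\cal C} \subseteq \text{B(l,c)}$.

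First I would verify $M^{n+1} \in {\cal C}^{n+1}$ for each class~${\cal C}$ listed. For $\mathord{\not{\!\varepsilon}}\text{nl-XT}$ this is by construction, and for $\text{B(l,c)}$ it follows from the standard bimorphism presentation of $\mathord{\not{\!\varepsilon}}\text{nl-xt}$'s: linear rules give a linear input homomorphism and nondeletion gives a complete output homomorphism, so $\mathord{\not{\!\varepsilon}}\text{nl-XT} \subseteq \text{B(l,c)}$. For the non-strict variants of $[\text{s}][\text{n}]\text{l-XT}$ and $[\text{s}][\text{n}]\text{l-XT}^{\text R}$ the inclusion $\mathord{\not{\!\varepsilon}}\text{nl-XT} \subseteq {\cal C}$ is immediate, while for the strict variants (snl-XT, sl-XT, and their look-ahead counterparts) one replaces the non-strict rule $\sigma(\star,\alpha)\to\star$ of $M$ by the strict rule $\sigma(\star,\alpha)\to\delta(\star)$ using a fresh unary output symbol~$\delta$; the resulting strict nondeleting $\mathord{\not{\!\varepsilon}}$-free transducer~$\tilde M$ still sends~$L_{n+2}$ to trees whose level is bounded by a constant, so that the argument of Lemma~\ref{thm:ns} applies to~$\tilde M^{n+1}$ verbatim.

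Next I would establish ${\cal C}^n \subseteq \text{B(l,c)}^n$, so that Lemma~\ref{thm:ns} directly yields $M^{n+1} \notin {\cal C}^n$. This is immediate for $\text{B(l,c)}$ and $\mathord{\not{\!\varepsilon}}\text{nl-XT}$, and for the nondeleting variants of $[\text{s}][\text{n}]\text{l-XT}$ and $[\text{s}][\text{n}]\text{l-XT}^{\text R}$ it again follows from the bimorphism characterization, with Theorem~\ref{thm:TB} removing look-ahead in the nondeleting case.

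The main obstacle is the non-nondeleting variants~$\text{l-XT}$, $\text{sl-XT}$, $\text{l-XT}^{\text R}$, and $\text{sl-XT}^{\text R}$, whose bimorphism images allow non-complete output homomorphisms and so escape~$\text{B(l,c)}$. I would handle these by restricting to the witness domain: inputs in~$L_{n+2}$ and outputs in~$L_1$ have rigid shapes, and the rigidity forces every symbol on a path from the input root to an output-contributing leaf to be preserved through any factorization; subtrees deleted by a non-complete output homomorphism play no role in the transformation and can be pruned to yield an equivalent factorization whose output homomorphisms are complete on the restricted domain. Corollary~\ref{cor:SP} then applies and delivers the exclusion, and combining membership with exclusion for every listed class proves ${\cal C}^n \subsetneq {\cal C}^{n+1}$.
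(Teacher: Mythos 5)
Your reduction works for the left column and for the nondeleting variants on the right, and there it coincides with the paper: $\mathord{\not{\!\varepsilon}}\text{nl-XT} \subseteq \text{B(l,c)}$ plus Lemma~\ref{thm:ns} give $\text{B(l,c)}^{n} \subsetneq \text{B(l,c)}^{n+1}$ and $\mathord{\not{\!\varepsilon}}\text{nl-XT}^n \subsetneq \mathord{\not{\!\varepsilon}}\text{nl-XT}^{n+1}$. The genuine gap is exactly where you locate the ``main obstacle'': for $\text{l-XT}$, $\text{sl-XT}$ and their look-ahead versions one only has $\text{l-XT}^{\text R} = \text{B(lc,l)}$, whose \emph{output} homomorphism may delete, and your proposed repair --- pruning the deleted subtrees to obtain an equivalent factorization with complete output homomorphisms on the restricted domain --- is asserted, not proved. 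The input tree $t \in L_{n+2}$ is fixed; a factorization through $\text{B(lc,l)}$ may legitimately read all of~$t$ with its complete input homomorphism and then discard almost all of it with a non-complete output homomorphism, so already the first intermediate tree can have level~$0$ and Corollary~\ref{cor:SP} yields nothing in the forward direction. The claim that ``every symbol on a path to an output-contributing leaf is preserved through any factorization'' does not follow from the rigidity of $L_{n+2}$ and $L_1$, because the intermediate alphabets and trees are unconstrained. Your strictification~$\tilde M$ also leaves a typing issue ($\tilde M$ emits $\delta$'s that the next copy of~$\tilde M$ must accept), which is fixable but is another loose end.

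The paper closes this gap by inverting rather than by repairing completeness. By symmetry $\text{snl-XT} = \mathord{\not{\!\varepsilon}}\text{nl-XT}^{-1}$, so the inverse of the witness lies in $\text{snl-XT}^{n+1}$ and hence in $(\text{[s][n]l-XT}^{\text R})^{n+1}$ for every choice of options (it is already strict, so no $\delta$-surgery is needed). On the other side, $(\text{[s][n]l-XT}^{\text R})^n \subseteq \text{B(lc,l)}^n \subseteq \text{B(c,l)}^n$, and the symmetric form of Lemma~\ref{thm:ns} gives $\text{snl-XT}^{n+1} \not\subseteq \text{B(c,l)}^n$: the level argument is applied from output to input, where completeness is guaranteed by the bimorphism characterization instead of being manufactured. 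An interleaving $(\text{[s][n]l-XT}^{\text R})^n \subseteq \text{[s][n]l-XT}^{n+1} \subseteq (\text{[s][n]l-XT}^{\text R})^{n+1}$ then transfers infiniteness between the classes with and without look-ahead. You should adopt this inversion for the deleting variants; as it stands your argument does not establish the right-hand column when the option~[n] is dropped.
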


\begin{proof}
  First, we note that all inclusions are trivial.  Thus, we only need
  to argue the strictness.  By \cite[Theorem~17]{mal07e} we have
  $\text B(\text{lcs}, \text{lc}) =
  \text{$\mathord{\not{\!\varepsilon}}$nl-XT}$, hence
  \begin{align*}
    \mathord{\not{\!\varepsilon}}\text{nl-XT}^n \subseteq
    \text{B(l,c)}^n  \qquad \text{and} \qquad
    \mathord{\not{\!\varepsilon}}\text{nl-XT}^{n+1} \subseteq
    \text{B(l,c)}^{n+1} \enspace.
  \end{align*}
  These facts together with Lemma~\ref{thm:ns} imply the strictness of
  the two inclusions on the left.  Moreover,
  \begin{align*}
    \cdots {} \subseteq (\text{[s][n]l-XT}^{\text R})^n &\subseteq
    \text{[s][n]l-XT}^{n+1} \subseteq (\text{[s][n]l-XT}^{\text
      R})^{n+1} \subseteq \text{[s][n]l-XT}^{n+2} \subseteq
    {} \cdots  
  \end{align*}
  for all $n \geq 1$, which shows that the composition hierarchy
  of~$\text{[s][n]l-XT}$ is infinite if and only if the composition
  hierarchy of~$\text{[s][n]l-XT}^{\text R}$ is infinite.  Thus, it
  remains to show the latter property.  

  Using simple symmetry, we observe that $\text{snl-XT} =
  \mathord{\not{\!\varepsilon}}\text{nl-XT}^{-1}$, which together with
  the symmetric version of Lemma~\ref{thm:ns} yields
  $\text{snl-XT}^{n+1} \not\subseteq \text{B(c,l)}^n$.  Furthermore,
  $\text B(\text{lc}, \text l) = \text{l-XT}^{\text R}$ by
  \cite[Theorem~4]{mal07e}, so naturally,
  \[ (\text{[s][n]l-XT}^{\text R})^n \subseteq \text B(\text c, \text l)^n
  \qquad \text{and} \qquad \text{snl-XT}^{n+1} \subseteq
  (\text{[s][n]l-XT}^{\text R})^{n+1} \]
  for all $n \geq 1$.  Taken together with $\text{snl-XT}^{n+1}
  \not\subseteq \text{B(c,l)}^n$ we obtain the desired
  strictness. \qed
\end{proof}

\begin{table}[t]
  \begin{center}
    \begin{tabular}{|l|c|}
      \hline
      &  \\[-1ex]
      \; Class with infinite composition hierarchy \; & Proved in
      \\[1ex] \hline
      &  \\[-1ex]
      \; $\text{B(l,c)}$, $\mathord{\not{\!\varepsilon}}\text{ln-XT}$, 
      $\text{[s][n]l-XT}$, $\text{[s][n]l-XT}^{\text R}$ \; &
      Theorem~\ref{cor:eps} \\[1ex] \hline
      &  \\[-1ex]
      \; $\text T$ & \cite[Theorem~3.14]{eng82b} \\[1ex]
      \; $\text{$\mathord{\not{\!\varepsilon}}$-XT}$ &  \;
      $\text{$\mathord{\not{\!\varepsilon}}$-XT} \subseteq \text T^2$
      and $(\text T^n \mid n \geq 1)$ is infinite \; \\[1ex]
      \; B(c,c) &  \cite{arndau75} and
      \cite[Section~II-2-2-3-4]{dau77} \\[1ex] \hline
    \end{tabular}
  \end{center}
  \caption{Summary of the results of
    Section~\protect{\ref{sec:complete}}.}
  \label{tab:inf}
\end{table}

Table~\ref{tab:inf} summarizes our results of this section.  For the
sake of completeness, we mention some additional results from the
literature, where $\text T$~stands for the class of all tree
transformations computable by top-down tree transducers~\cite{eng75},
and $\text{$\mathord{\not{\!\varepsilon}}$-XT}$~stands for the class
of tree transformations computable by $\varepsilon$-free extended
top-down tree transducers~\cite{malgrahopkni07}.  The mentioned result
$\text{$\mathord{\not{\!\varepsilon}}$-XT} \subseteq \text T^2$ can be
concluded from~\cite[Theorem~4.8]{malgrahopkni07}.

\bibliography{extra}
\bibliographystyle{splncs03}

\end{document}